\begin{document}

\title{ \textbf{Skew generalized quasi-cyclic codes over non-chain ring $F_q+vF_q$}
}

\author[1]{Kundan Suxena \thanks{kundan$\_$2321ma08@iitp.ac.in}}
\author[1]{Indibar Debnath \thanks{indibar$\_$1921ma07@iitp.ac.in}}
\author[1]{Om Prakash\thanks{om@iitp.ac.in}}
\affil[1]{Department of Mathematics, Indian Institute of Technology Patna, India}

\maketitle

\begin{abstract}
	For a prime $p$, let $F_q$ be the finite field of order $q= p^d$. This paper presents the study on skew generalized quasi-cyclic (SGQC) codes of length $n$ over the non-chain ring $F_q+vF_q$ where $v^2=v$ and $\theta_t$ is the Galois automorphism. Here, first, we prove the dual of an SGQC code of length $n$ is also an SGQC code of the same length and derive a necessary and sufficient condition for the existence of a self-dual SGQC code. Then, we discuss the $1$-generator polynomial and the $\rho$-generator polynomial for skew generalized quasi-cyclic codes. Further, we determine the dimension and BCH type bound for the 1-generator skew generalized quasi-cyclic codes. As a by-product, with the help of MAGMA software, we provide a few examples of SGQC codes and obtain some $2$-generator SGQC codes of index $2$.

\textbf{Keywords.}  Skew cyclic codes ,  Skew generalized quasi-cyclic codes, Gray map , Generator polynomial , Idempotent generator.
\\
\textbf{2010 AMS Classification.} 94B05, 94B15, 94B60.
\end{abstract}


\section{Introduction}
In the theory of error-correcting codes, linear codes over finite fields play a crucial role in many error-correction schemes. Even initial works of linear codes are based on binary fields, and researchers have been continuing their study on it due to the ease of their practical implementation. Later, this study extended to codes over finite fields and rings, keeping broader aspects and optimal codes in mind. In the 1990s, some studies on cyclic and self-dual cyclic codes over the ring $\mathbb{Z}_4$ have been reported in \cite{Calderbank94, Dougherty99, Pless97}, whereas \cite{Bonnecaze99} studied the same family of codes over the ring $F_2+uF_2$. In 2000, Abualrub and Siap \cite{Abualrub07} studied cyclic codes over the rings $\mathbb{Z}_2 + u\mathbb{Z}_2$, $\mathbb{Z}_2 + u\mathbb{Z}_2 + u^2\mathbb{Z}_2$, whereas Zhu et al. studied cyclic codes over $F_2 + vF_2$ \cite{Zhu10}. Later on, cyclic, quasi-cyclic, and generalized quasi-cyclic codes over finite commutative rings have been studied by introducing different Gray maps in \cite{Esmaeili09, Siap05, Prakash23, Seguin04, Ling01}. \par

On the other hand, in 2007, Boucher et al. \cite{Boucher07} introduced cyclic codes over finite noncommutative rings (skew polynomial rings) and presented $\theta$-cyclic codes over $F_q[x, \theta]$ with restrictions on their length, where $F_q$ is the finite field and $\theta$ is an automorphism over $\mathbb{F}_q$. They have obtained some codes that improved upon previously best-known linear codes. Meanwhile, in 2011, Siap et al. \cite{Siap11} studied skew cyclic codes of arbitrary length over the finite field $F_q$. After that, some other works on skew cyclic codes over rings have been seen in \cite{Seneviratne10, Gursoy14, Islam18, Boucher08, Jitman12}.\par
Recall that skew quasi-cyclic (SQC) codes of length $n$ with index $l$ over a finite field $F_q$ are linear codes where the skew cyclic shift of any codeword is again a codeword by $l$ positions is another codeword. It is noted that SQC codes of index $l=1$ are well-known skew cyclic codes. It has been shown that the class of SQC codes has a significant contribution to the class of linear codes over finite fields and rings \cite{Abualrub10, Ozen19, Ling01, Bhaintwal12}.
Later, the notion of skew generalized quasi-cyclic (SGQC) codes over finite fields was introduced by Gao et al. \cite{Gao16} and studied with the restriction that the order of the automorphism divides the length of codes. Also, based on the structural properties of SGQC codes, Abualrub et al. \cite{Abualrub18} gave some good skew $l$-GQC codes and constructed some asymmetric quantum codes over the finite field $F_4$. Recently, Seneviratne and Abualrub \cite{Seneviratne22} studied SGQC codes of arbitrary length over the finite field and obtained many new linear codes.

\par These above works inspired and gave us a gap to consider and study the skew generalized quasi-cyclic code (SGQC code) over the finite non-chain ring $F_q+vF_q$, where $v^2=v$. By considering the automorphism $\theta_t$ as $\theta_t:a+ vb\mapsto a^{p^t} + vb^{p^t}$, we establish the algebraic structure of these codes. Since the class of SGQC code is much larger than the class of SQC codes, it opens the door to looking for better codes in this class. Here, we present $1$-generator polynomial and $\rho$-generator polynomial of these codes. Further, we show that $1$-generator idempotent polynomial exists over $F_q$ and $F_q+vF_q$, respectively, for SGQC codes.
We organize our paper as follows: Section 2 recalls some known results concerning the skew polynomial ring $S[x;\theta_t]$, where $S = F_q + vF_q$, and skew cyclic code. Section 3 provides the algebraic structure of SGQC codes and their duals over the finite non-chain ring $S$. Section 4 discusses the duality of SGQC code under certain conditions on the code length, while Section 5 presents the generator for these codes. Additionally, we introduce an idempotent generator polynomial over $F_q$ and $S$ for SGQC codes and list some $2$-generator polynomial parameters over $F_3 + vF_3$, $F_4 + vF_4$, and $F_9 + vF_9$, respectively. Finally, in Section $6$, we conclude our work.

\section{Preliminaries}\label{sec 2}
Let $F_q$ be a finite field with $q$ elements where $q=p^d$ for some prime number $p$ and a positive integer $d$. Let $S= F_q+ vF_q = \{a+ vb : a , b \in F_q\}$, where $ v^2 = v$.
Thus, $S$ is a non-chain ring with $q^2$ elements and has two maximal ideals, namely, $\langle v\rangle$ and $\langle 1-v\rangle$. For more details on this ring, we refer \cite{Prakash21}. An $S$-submodule $\mathsf{C}$ of $S^n$ is called a linear code of length $n$, and elements of $\mathsf{C}$ are said to be codewords. Subsequently, the rank of code $\mathsf{C}$ is the minimum number of generators for $\mathsf{C}$, and the free rank is the rank of $\mathsf{C}$ if it is free as a module over $S$.
   We define the Gray map $\phi :S\mapsto  {F^{2}_q}$ by $\phi(a+vb) = (a,a+b)$.

              This map $\phi$ can be naturally extended from $S^{n}$ to ${F^{2n}_q}$ by \begin{align*}
              \begin{split}
                  \phi(s_1,s_2,\dots,s_n)&=(a_1,\dots ,a_n,a_1+b_1,\dots,a_n+b_n),
              \end{split}
              \end{align*} where $s_i=a_i+vb_i\in S$, for all $i=1,2,\dots,n$. The Hamming weight $w_H(c)$ is the number of nonzero entries in $c\in F^{n}_q$ and for any pair of words $c,c^{\prime} \in F^{n}_q $, the Hamming distance $d(c,c^{\prime})=w_H(c-c^{\prime})$. Also, the Lee weight denoted by $w_L(s)=w_H(\phi(s))$ for any element $s \in S$. The Lee distance is defined by $d_L(s_1, s_2) = w_L(s_1-s_2)$ for any element $s_1, s_2 \in S$. Note that the Gray map is an isometry from $S^n$ (\text{Lee distance}) to  $F_q^{2n}$ (\text{Hamming distance}) and also preserves orthogonality.\par
Now, we define some operations on linear codes, similar to that of Definition $1$ in \cite{Gursoy14}.

  \begin{definition}
     Let $\mathcal{X}$ and $\mathcal{Y}$ be two linear codes. Then the operations $ \oplus~ \text{and} ~\otimes$ are defined as
     \begin{equation*}
        \begin{split}
       \mathcal{ X}\oplus \mathcal{Y}&=\{ x+y : x\in \mathcal{ X}, y\in \mathcal{Y}\}, \text{ and}\\
        \mathcal{ X}\otimes \mathcal{Y}&=\{(x,y) : x\in \mathcal{ X}, y\in \mathcal{Y}\}.
        \end{split}
     \end{equation*}
 \end{definition}
 Let $\mathsf{C}$ be a linear code of length $n$ over $S$. Define \begin{equation*}
 \begin{split}
    \mathsf{ C}_1&:=\{c\in F^{n}_q : c+vs\in \mathsf{C} ~\text{for some}~ s \in F^{n}_q\}, \text{ and}\\
   \mathsf{C}_2&:=\{c+s\in F^{n}_q : c+vs \in \mathsf{C}\}.
 \end{split}
    \end{equation*} Clearly, $\mathsf{C}_1$ and $\mathsf{C}_2$ are linear codes over $F_q$, and from Corollary $1$ in \cite{Gursoy14}, $\mathsf{C}$ can be expressed as $\mathsf{C}=(1-v)\mathsf{C}_1\oplus v\mathsf{C}_2$.\\
 Let $\theta_t$ be an automorphism defined on $S$ by $a^{p^t}+vb^{p^t}$ where $a+vb\in S$.
Clearly, $\theta_t$ acts on $F_q$ as follows: \begin{align*}
        \theta_t: &F_q\mapsto F_q \\
         &a\mapsto a^{p^t}.
    \end{align*}
     \begin{definition}
Now, we consider
\begin{align*}
    S[x;\theta_t]:= \{
    a(x)=a_0 + a_1x +\dots+a_{n-1}x^{n-1}: a_i \in S \text{ for all}~i= 0,1,2,\dots,n-1\}.
\end{align*}
Then $S[x;\theta_t]$ is a ring under the usual addition of polynomials, and multiplication is defined under $(ax^i)(bx^j)=a\theta_t^i(b)x^{i+j}\text{ for all } a, b\in S$.
 \end{definition}
Clearly, $S[x;\theta_t]$ is a noncommutative ring unless $\theta_t$ is an identity automorphism. Therefore, before establishing the structure of codes, we have to specify the existence of left/right divisibility. Recall that for $a(x), b(x) \in S[x ; \theta_t]$, $a(x)$ is a right divisor of $b(x)$, if there exists $c(x) \in S[x : \theta_t] \text{ such that } b(x)= c(x)a(x)$.
 \begin{theorem} \cite[Theorem $2.4$]{Ozen19} \textbf{ Right Division Algorithm}: Suppose $a(x)$ and $b(x)$ are two nonzero polynomials in $S[x ; \theta_t]$ such that the leading coefficient of $b(x)$ is a unit, then there exist unique polynomials $q(x)$ and $r(x)$ such that $a(x)=q(x)b(x)+r(x)$ where $\deg r(x) < \deg a(x)$ or $r(x)=0$.\end{theorem}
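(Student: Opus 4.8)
The plan is to establish existence and uniqueness separately, with existence proved by induction on $\deg a(x)$. The crucial preliminary observation — and the one place where the non-chain, and hence non-domain, nature of $S$ genuinely matters — is a degree-multiplicativity fact: because the leading coefficient $b_s$ of $b(x)$ is a unit and $\theta_t$ is an automorphism of $S$, for any nonzero $c(x)$ with leading coefficient $c_k$ and $k=\deg c(x)$, the leading term of the product $c(x)b(x)$ is $c_k\,\theta_t^{k}(b_s)\,x^{k+s}$ where $s=\deg b(x)$. Since $\theta_t^{k}(b_s)$ is again a unit, this coefficient vanishes only if $c_k=0$, so $\deg\bigl(c(x)b(x)\bigr)=\deg c(x)+\deg b(x)$. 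Without the unit hypothesis this can fail, as $S$ contains zero divisors such as $v(1-v)=0$; so I would isolate this lemma-like fact first and use it in both halves of the argument.

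For existence, if $\deg a(x)<\deg b(x)$ (or $a(x)=0$) I take $q(x)=0$ and $r(x)=a(x)$, and the degree bound $\deg r(x)<\deg b(x)$ holds trivially. Otherwise, writing $m=\deg a(x)\ge s=\deg b(x)$ with leading coefficients $a_m$ and $b_s$, I would subtract the single monomial multiple
\[
a_1(x)=a(x)-a_m\,\theta_t^{m-s}\!\bigl(b_s^{-1}\bigr)\,x^{m-s}\,b(x).
\]
A direct computation of the degree-$m$ term of the subtracted product, using $\theta_t^{m-s}\!\bigl(b_s^{-1}\bigr)\,\theta_t^{m-s}(b_s)=\theta_t^{m-s}\!\bigl(b_s^{-1}b_s\bigr)=1$, shows it equals $a_m x^m$ and therefore cancels the leading term of $a(x)$, so $\deg a_1(x)<m$. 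Applying the induction hypothesis to $a_1(x)$ gives $q_1(x)$ and $r(x)$ with $a_1(x)=q_1(x)b(x)+r(x)$ and the required bound on $r(x)$, and then $q(x)=q_1(x)+a_m\,\theta_t^{m-s}\!\bigl(b_s^{-1}\bigr)\,x^{m-s}$ completes the step.

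For uniqueness, suppose $a(x)=q(x)b(x)+r(x)=q'(x)b(x)+r'(x)$ with both remainders either zero or of degree strictly less than $\deg b(x)$. Then $\bigl(q(x)-q'(x)\bigr)b(x)=r'(x)-r(x)$. If $q(x)\ne q'(x)$, the degree-multiplicativity fact forces the left-hand side to have degree at least $\deg b(x)$, whereas the right-hand side has degree strictly less than $\deg b(x)$, a contradiction; hence $q(x)=q'(x)$ and consequently $r(x)=r'(x)$. The main obstacle — indeed essentially the only subtlety beyond the classical Euclidean argument — is this degree control in the presence of zero divisors, which is precisely what the unit assumption on the leading coefficient of $b(x)$ buys. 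Throughout I would keep careful track of the side on which the automorphism powers act, since for right division it is $\theta_t^{m-s}$ applied to $b_s^{-1}$ that produces the correct cancellation, reflecting the noncommutativity of $S[x;\theta_t]$.
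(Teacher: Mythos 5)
Your argument is correct and is the standard division-algorithm proof (degree-multiplicativity from the unit leading coefficient, induction on $\deg a(x)$ for existence, degree comparison for uniqueness); the paper itself gives no proof but simply cites the result from \cite{Ozen19}, where essentially this same argument appears, so there is nothing to compare beyond noting the match. One remark: you correctly prove and use the bound $\deg r(x)<\deg b(x)$ — the condition $\deg r(x)<\deg a(x)$ as literally printed in the statement is a typo, since uniqueness fails under that weaker requirement.
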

     \begin{definition} \textbf{Greatest Common Right Divisor}: 
     A polynomial $d(x)$ is the greatest common right divisor (gcrd) of $a(x)$ and $b(x)$ in $S[x;\theta_t]$ if $d(x)$ is a right divisor of both $a(x), ~b(x)$ and for any other right divisor $d^{\prime}(x)$ of $a(x)$ and $b(x)$, $d^{\prime}(x)$ is a right divisor of $d(x)$. 
     \end{definition}
Similarly, we can define the greatest common left divisor. Obviously, to construct the skew generalized quasi-cyclic codes, we shall first look at the structure of skew cyclic codes over $S$. Hence, we must develop improved versions of the results obtained in \cite{Gursoy14}.

  \begin{definition}
 Suppose $\mathsf{C}$ is a subset of $S^n$, then $\mathsf{C}$ is said to be a \textit{skew cyclic code} of length $n$ if $\mathsf{C}$ satisfies the following:
 \begin{enumerate}
     \item $\mathsf{C}$ is an $S$-submodule of $S^n$;
     \item $\sigma(c)=(\theta_t(c_{n-1}),\theta_t(c_0),\dots,\theta_t(c_{n-2})) \in \mathsf{C}$ whenever $c=(c_0,c_1,\dots,c_{n-1}) \in \mathsf{C}$. Here, $\sigma$ is called the skew cyclic shift operator.
 \end{enumerate}
 \end{definition}
Let $S_n =\dfrac{S[x;\theta_t]}{\langle x^n-1 \rangle}$, $s(x)+(x^n-1)$ be an element of $S_n$, and $a(x)\in S[x;\theta_t]$. Define multiplication from left as \begin{equation}\label{eq.1}
    a(x)*(s(x)+\langle x^n-1 \rangle)=a(x)*s(x)+\langle x^n-1 \rangle \text{ for all }a(x) \in S[x;\theta_t].
\end{equation} Clearly, multiplication on $S_n$ is well defined.
Under operation defined in Equation (\ref{eq.1}), $S_n$ is a left $S[x;\theta_t]$-module. Also, the skew cyclic codes in $S[x;\theta_t]$ is a left $S[x;\theta_t]$-submodule of the left module $S_n$. Now, we recall some results on the skew cyclic code from \cite{Gursoy14} that we will use further.
 \begin{theorem} \label{remII 2}  \cite[Theorem $5$]{Gursoy14} Let $\mathsf{C}_1$ and $\mathsf{C}_2$ be skew cyclic codes over the field $F_q$. If $\mathsf{C}=(1-v)\mathsf{C}_1\oplus v\mathsf{C}_2$ is a skew cyclic code of length $n$ over $S$, then $\mathsf{C}=\langle a(x) \rangle,$ where $a(x)=(1-v)a_1(x)+va_2(x)$ with $a_1(x)~ \text{and}~ a_2(x)$ are generator polynomials of $\mathsf{C}_1$ and $\mathsf{C}_2$ over $F_q$, respectively, while $a(x)$ is a right divisor of $x^n-1$.
\end{theorem}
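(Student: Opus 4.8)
The plan is to exploit the orthogonal idempotent decomposition $S = (1-v)F_q \oplus vF_q$ together with the observation that $\theta_t(v) = v$ (since $\theta_t(0 + v\cdot 1) = v\cdot 1^{p^t} = v$), so that both $1-v$ and $v$ are \emph{central} in $S[x;\theta_t]$. This centrality is what lets the entire problem split into two independent copies of the skew cyclic theory over the field $F_q$, one attached to each idempotent, and then recombine cleanly. I would carry out the argument in three stages: first identify the component generators, then verify the right-divisibility of $a(x)$, and finally establish the two inclusions for $\mathsf{C} = \langle a(x)\rangle$.

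First I would pin down $a_1(x)$ and $a_2(x)$. Since each $\mathsf{C}_i$ is a left $F_q[x;\theta_t]$-submodule of $F_q[x;\theta_t]/\langle x^n-1\rangle$, I take $a_i(x)$ to be the monic polynomial of least degree in $\mathsf{C}_i$ (its leading coefficient is a unit because $F_q$ is a field). Applying the Right Division Algorithm to an arbitrary $c(x)\in\mathsf{C}_i$ divided on the right by $a_i(x)$ forces the remainder to lie in $\mathsf{C}_i$ with degree below $\deg a_i(x)$, hence to vanish; this gives $\mathsf{C}_i = \langle a_i(x)\rangle$. Dividing $x^n-1$ by $a_i(x)$ in the same way produces a remainder in $\mathsf{C}_i$ of degree less than $\deg a_i(x)$, again forcing it to be zero, so that $x^n-1 = g_i(x)\,a_i(x)$ for some $g_i(x)\in F_q[x;\theta_t]$. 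Thus each $a_i(x)$ is a right divisor of $x^n-1$.

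Next I would assemble $a(x) = (1-v)a_1(x) + v a_2(x)$ and $g(x) = (1-v)g_1(x) + v g_2(x)$ and verify directly that $g(x)\,a(x) = x^n-1$. The computation rests on the relations $(1-v)^2 = 1-v$, $v^2 = v$, $v(1-v) = 0$ together with the centrality of the idempotents: because $\theta_t$ fixes $v$, the scalars $1-v$ and $v$ commute past every power of $x$, so the two cross terms drop out and the two diagonal terms collapse to $(1-v)(x^n-1) + v(x^n-1) = x^n-1$. This simultaneously certifies that $a(x)$ is a right divisor of $x^n-1$.

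Finally, for $\mathsf{C} = \langle a(x)\rangle$ I would prove both inclusions by the same splitting. Writing a generic left multiple $r(x)a(x)$ with $r(x) = (1-v)r_1(x) + v r_2(x)$, orthogonality annihilates the cross terms and leaves $(1-v)r_1(x)a_1(x) + v r_2(x)a_2(x)$, which modulo $x^n-1$ lands in $(1-v)\mathsf{C}_1 \oplus v\mathsf{C}_2 = \mathsf{C}$, so $\langle a(x)\rangle \subseteq \mathsf{C}$. Conversely, any codeword $c = (1-v)c_1 + v c_2$ with $c_i = f_i(x)a_i(x)$ in $\mathsf{C}_i$ is recovered as $f(x)a(x)$ for $f(x) = (1-v)f_1(x) + v f_2(x)$, giving $\mathsf{C} \subseteq \langle a(x)\rangle$. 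The one genuinely delicate point, and the step I would treat most carefully, is the interaction of the twisted multiplication with the idempotents; everything hinges on $\theta_t(v) = v$, which makes $1-v$ and $v$ central and forces all cross terms to vanish. Without that fact the component-wise factorizations would fail to recombine into a single right divisor of $x^n-1$, and the principal generator $a(x)$ would not exist in this simple form.
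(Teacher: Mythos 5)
The paper states this result without proof, citing it directly as \cite[Theorem 5]{Gursoy14}, so there is no in-paper argument to compare against; judged on its own, your proof is correct and follows essentially the standard route of that reference: division-algorithm arguments give $\mathsf{C}_i=\langle a_i(x)\rangle$ with $x^n-1=g_i(x)a_i(x)$, and the idempotent decomposition recombines these into the single generator $a(x)=(1-v)a_1(x)+va_2(x)$. The point you single out as delicate --- that $\theta_t(v)=v$ makes $v$ and $1-v$ central in $S[x;\theta_t]$, so all cross terms in $g(x)a(x)$ and $r(x)a(x)$ vanish --- is indeed the crux, and your handling of it is right.
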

\begin{corollary} \label{remII 1}\cite[Corollary $6$]{Gursoy14} Every left submodule of $S_n$ is principally generated.\end{corollary}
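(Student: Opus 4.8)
The plan is to identify left submodules of $S_n$ with skew cyclic codes of length $n$ over $S$, and then reduce principality to the already-established field case through the idempotent decomposition. First I would note that multiplication by $x$ in $S_n$ realizes the skew cyclic shift $\sigma$; hence a subset $\mathsf{C}\subseteq S_n$ is a left $S[x;\theta_t]$-submodule precisely when it is an $S$-submodule of $S^n$ that is stable under $\sigma$, that is, a skew cyclic code. So it suffices to prove that every skew cyclic code over $S$ is generated by a single element.

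Next I would write $\mathsf{C}=(1-v)\mathsf{C}_1\oplus v\mathsf{C}_2$ using the decomposition recalled in Section \ref{sec 2}, and check that the component codes $\mathsf{C}_1,\mathsf{C}_2$ are genuinely skew cyclic over $F_q$. This follows because the idempotents $1-v$ and $v$ are orthogonal and fixed by $\theta_t$, so applying $\sigma$ to a codeword of $\mathsf{C}$ and projecting onto the two components shows that each $\mathsf{C}_i$ is closed under the $\theta_t$-twisted shift over $F_q$. Over the field, the skew polynomial ring $F_q[x;\theta_t]$ is left Euclidean: since every nonzero scalar is a unit, the Right Division Algorithm applies with any nonzero divisor, and the standard minimal-degree argument (take the monic element of least degree in the code and divide every codeword by it) shows that every left ideal of $F_q[x;\theta_t]/\langle x^n-1\rangle$ is principal. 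Thus $\mathsf{C}_i=\langle a_i(x)\rangle$ for suitable right divisors $a_i(x)$ of $x^n-1$.

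Finally I would form $a(x)=(1-v)a_1(x)+va_2(x)$ and invoke Theorem \ref{remII 2}, which gives exactly $\mathsf{C}=\langle a(x)\rangle$. As $\mathsf{C}$ was an arbitrary left submodule of $S_n$, this establishes that every left submodule is principally generated. The step I expect to require the most care is the bookkeeping that recombining the two field generators through $1-v$ and $v$ recovers all of $\mathsf{C}$ and nothing more, rather than a proper sub- or super-module; but this is precisely what Theorem \ref{remII 2} certifies, so once left submodules are identified with skew cyclic codes over $S$, the corollary is immediate.
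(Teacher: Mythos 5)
Your argument is correct and matches the intended derivation: the paper states this corollary without proof, citing \cite[Corollary 6]{Gursoy14}, where the result is obtained exactly as you describe — identifying left submodules of $S_n$ with skew cyclic codes over $S$, decomposing as $(1-v)\mathsf{C}_1\oplus v\mathsf{C}_2$, using the minimal-degree/right-division argument over $F_q$ to get single generators $a_1(x),a_2(x)$, and recombining via Theorem \ref{remII 2} into $a(x)=(1-v)a_1(x)+va_2(x)$. No gaps.
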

   \par Let $v_1(x) \text{ and } v_2(x)$ be two polynomials in $S[x;\theta_t]$. Then $v_1(x) \text{ and } v_2(x)$ are called right coprime if there exist polynomials $u_1(x) \text{ and } u_2(x)$ in $S[x;\theta_t]$ such that $u_1(x)v_1(x) +u_2(x)v_2(x)=1$. The left coprime can be defined similarly. The following lemma presents an alternative generator set.
\begin{lemma}\label{lem 1} \cite[Lemma $2.6$]{Ozen19}
           Suppose $\mathsf{C}$ is a skew cyclic code of length $n$ over $S$ with $\mathsf{C}=\langle a(x)\rangle$ such that $x^n-1=a^{\prime}(x)a(x)$. Then, any generator of $\mathsf{C}$ can be written in the form $\mathsf{C}=\langle v(x)a(x)\rangle$ and $a^{\prime}(x) \text{and}~ v(x)$ are right coprime.
 \end{lemma}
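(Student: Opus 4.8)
The plan is to exploit the idempotent decomposition of the base ring to reduce the statement to the commutative-field case, where the skew polynomial ring is a genuine (left and right) Euclidean domain and cancellation is available. Since $v$ and $1-v$ are orthogonal idempotents summing to $1$, the coefficientwise map
$\Psi\colon S[x;\theta_t]\to F_q[x;\theta_t]\times F_q[x;\theta_t]$, $\sum_i(a_i+vb_i)x^i\mapsto\bigl(\sum_i a_ix^i,\ \sum_i(a_i+b_i)x^i\bigr)$, is a ring isomorphism: it is the CRT isomorphism $S\cong F_q\times F_q$ applied to each coefficient, under which $\theta_t$ becomes the Frobenius on each factor, so the twisted multiplication is respected. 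It carries $x^n-1$ to $(x^n-1,x^n-1)$ and hence descends to the quotient module $S_n$. Writing $a(x)=(1-v)a_1(x)+va_2(x)$ as in Theorem~\ref{remII 2}, one checks $\Psi(a)=(a_1,a_2)$, and applying $\Psi$ to $x^n-1=a'(x)a(x)$ yields $x^n-1=a_i'(x)a_i(x)$ in each coordinate, where $\Psi(a')=(a_1',a_2')$; thus each $a_i$ is a right divisor of $x^n-1$ over $F_q$.

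First I would settle the representation assertion. Let $g(x)$ be any generator of $\mathsf{C}$. As $g\in\mathsf{C}=\langle a\rangle$, we have $g\equiv V(x)a(x)\pmod{x^n-1}$ for some $V(x)\in S[x;\theta_t]$; substituting $x^n-1=a'(x)a(x)$ gives $g=Va+k(x^n-1)=(V+ka')a$, so after replacing $V$ by $V+ka'$ we obtain $g(x)=V(x)a(x)$ exactly in $S[x;\theta_t]$. This is the claimed form $\mathsf{C}=\langle V(x)a(x)\rangle$, with $V$ playing the role of $v(x)$ in the statement.

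It remains to prove right coprimality of $a'$ and $V$, and this is where the decomposition pays off. Projecting $g=Va$ through $\Psi$ gives $g_i=V_ia_i$ for $i=1,2$, where $\Psi(g)=(g_1,g_2)$ and $\Psi(V)=(V_1,V_2)$. Because $\Psi$ is a ring isomorphism, generation splits across the product: $g$ generates $\mathsf{C}$ if and only if $g_i$ generates $\langle a_i\rangle$ in $F_q[x;\theta_t]/\langle x^n-1\rangle$ for each $i$. Fixing $i$, from $\langle g_i\rangle=\langle a_i\rangle$ we get $a_i\equiv w_iV_ia_i\pmod{x^n-1}$, hence $a_i=(w_iV_i+m_ia_i')a_i$ for suitable $w_i,m_i$, so $(1-w_iV_i-m_ia_i')a_i=0$. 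Now $F_q[x;\theta_t]$ has no zero divisors (the leading term of a product is $f_m\theta_t^{m}(h_k)x^{m+k}$, nonzero because $F_q$ is a field and the Frobenius $\theta_t$ is injective), and $a_i\neq0$, so $m_ia_i'+w_iV_i=1$. Thus $a_i'$ and $V_i$ are right coprime in each coordinate.

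Finally I would lift the two Bézout identities: putting $u_1=\Psi^{-1}(m_1,m_2)$ and $u_2=\Psi^{-1}(w_1,w_2)$ and using that $\Psi$ preserves products and sends $1$ to $(1,1)$ gives $u_1(x)a'(x)+u_2(x)V(x)=1$ in $S[x;\theta_t]$, i.e. $a'$ and $V$ are right coprime, as required. The main obstacle is precisely that one cannot run this cancellation directly over $S$: the ring $S[x;\theta_t]$ has zero divisors (for instance $v(1-v)=0$, and if $\deg a_1\neq\deg a_2$ the leading coefficient of $a$ is the non-unit $v$ or $1-v$), so the implication $(1-wV-ma')a=0\Rightarrow 1-wV-ma'=0$ fails in general over $S$. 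The idempotent splitting repairs this by transporting the cancellation into a domain, and the converse direction—that any right-coprime $V$ produces a generator—follows by the same machinery, multiplying a Bézout identity on the right by $a$ and reducing modulo $x^n-1$.
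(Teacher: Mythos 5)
The paper does not prove this lemma at all: it is imported verbatim as \cite[Lemma 2.6]{Ozen19}, so there is no in-paper argument to compare against. Your proof is correct and self-contained. The CRT splitting $\Psi\colon S[x;\theta_t]\to F_q[x;\theta_t]\times F_q[x;\theta_t]$ is legitimate (it respects the twisted multiplication because the coefficientwise map $a+vb\mapsto(a,a+b)$ commutes with $\theta_t$, which acts as Frobenius on each factor), it carries $x^n-1$ to $(x^n-1,x^n-1)$, and left submodules of the product split coordinatewise, so the reduction to two Bézout identities over the domain $F_q[x;\theta_t]$ and the lift back via $\Psi^{-1}$ all go through; the preliminary step replacing $g\equiv Va\pmod{x^n-1}$ by an exact factorization $g=(V+ka')a$ is also handled correctly. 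You have correctly identified the one genuine obstruction — cancellation fails in $S[x;\theta_t]$ because of the zero divisors $v$ and $1-v$ — and your fix is the same decomposition philosophy the paper itself relies on throughout (writing $\mathsf{C}=(1-v)\mathsf{C}_1\oplus v\mathsf{C}_2$ as in Theorem~\ref{remII 2}), so the argument sits naturally alongside the rest of the paper and in effect supplies a proof the authors chose to outsource.
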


 \section{Algebraic Structure of SGQC Codes }
In this segment, we study the structural properties of skew generalized quasi-cyclic codes. Now, we generalize the Definition $2$  of \cite{Seneviratne22} over rings. Towards this, first recall the definition of the skew generalized quasi-cyclic codes.

\begin{definition}
    Suppose $S$ is a non-chain ring, and $\theta_t$ is an automorphism of $S$ with $|\theta_t|=m_t$. Throughout,  Let $t_1,t_2,\dots,t_l$ be positive integers and $N=t_1+t_2+\dots+t_l$. A subset $\mathsf{C}$ of $\textbf{S}=S^{t_1}\times S^{t_2}\times \dots\times S^{t_l}$ is called an SGQC code of block length $(t_1,t_2,\dots,t_l)$ and length $N$ with index $l$, if $\mathsf{C}$ satisfies the following criteria:
    \begin{enumerate}
        \item $\mathsf{C}$ is a $S$-submodule of $\textbf{S}$;
        \item if $c=(c_1,c_2,\dots,c_l),~\text{then} ~ \sigma_l(c)=(\sigma(c_1),\sigma(c_2),\dots,\sigma(c_l))\in \mathsf{C}$, where $c_i=(c_{i1},c_{i2},\dots,c_{it_{i}})\in S^{t_i}, \text{ for all } ~i=1,2,\ldots,l$.
    \end{enumerate}
\end{definition}
Hence, SGQC codes of length $N$ with index $l$ over $S$ are closed under the shift $\sigma_l$. If each $t_i's$ is equal, then SGQC codes are  skew quasi-cyclic codes over $S$. If we take $l=1$, SGQC codes are skew cyclic codes over $S$. \par
Let $a=(a_1,a_2,\dots,a_l) \in \textbf{S}, \text{where}~ a_j=(a_{j,0}+va^{\prime}_{j,0},a_{j,1}+va^{\prime}_{j,1},\dots,a_{j,t_j-1}+va^{\prime}_{j,t_j-1}) ~\text{for } j=1,2,\dots,l $. For any vector $a_j\in S^{t_j}$, we modulate the vector to the polynomial $a_j(x)=(a_{j,0}+va^{\prime}_{j,0})+(a_{j,1}+va^{\prime}_{j,1})x+\dots+(a_{j,t_j-1}+va^{\prime}_{j,t_j-1})x^{t_j-1} = \sum_{i=0}^{t_j-1} (a_{j,i}+va^{\prime}_{j,i})x^{t_j-1} $ in the left $S[x;\theta_t]$-module $S_{t_j}=\dfrac{S[x;\theta_t]}{\langle x^{t_j}-1\rangle}$.
\par
We can say that the ring $\mathbf{S^\prime}=S_{t_1}\times S_{t_2}\times\dots \times S_{t_l} $ is a left $S[x;\theta_t]$-module with left multiplication defined by
 \begin{equation} \label{eq 2}
     s(x).a(x) =(s(x).a_1(x),s(x).a_2(x),\dots,s(x).a_l(x) ),
\end{equation}
for $s(x)\in S[x;\theta_t]$ and $a(x)=(a_1(x),a_2(x),\dots,a_l(x)) \in \mathbf{S^{\prime}}$.\\
Suppose $a=(a_1,a_2,\dots,a_l)$ is an element of $\textbf{S}$. Then, the map
 \begin{align*}
    \mu&:\textbf{S}\rightarrow \mathbf{S^{\prime}}; \text{ defined by}\\
 \mu(a)= &(a_1(x),a_2(x),\dots,a_l(x))=a(x).
\end{align*}
 It defines a one-to-one correspondence. Hence, a codeword $c=(c_1,c_2,\dots,c_l) \in \mathsf{C}$ will be the form of a polynomial $c(x)=(c_1(x),c_2(x),\dots,c_l(x))$ in the set $\textbf{S}^{\prime}$.
\begin{lemma} \label{lem 2}
    $\mathsf{C}$ is an SGQC code of block length $(t_1,t_2,\dots,t_l)$ and length $N$ with index $l$ if and only if $\mathsf{C}$ is a left $S[x;\theta_t]$-submodule of $\textbf{S}^{\prime}$.
\end{lemma}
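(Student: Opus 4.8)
The plan is to transport the two defining conditions of an SGQC code across the bijection $\mu$ and to show that, taken together, they are exactly what it means for $\mu(\mathsf{C})$ to be a left $S[x;\theta_t]$-submodule of $\textbf{S}^{\prime}$. Since $\mu$ is a one-to-one correspondence that is clearly additive and $S$-linear, I will freely identify $\mathsf{C}$ with $\mu(\mathsf{C})$, and the entire argument will rest on a single structural fact: that the skew cyclic shift $\sigma_l$ corresponds, under $\mu$, to left multiplication by $x$ as defined in Equation (\ref{eq 2}).

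First I would establish this correspondence block by block. Fix $j$ and write $a_j(x)=\sum_{i=0}^{t_j-1}(a_{j,i}+va^{\prime}_{j,i})x^{i}$ in $S_{t_j}$. Using the skew multiplication rule $(x)(bx^{i})=\theta_t(b)x^{i+1}$, I obtain $x\cdot a_j(x)=\sum_{i=0}^{t_j-1}\theta_t(a_{j,i}+va^{\prime}_{j,i})x^{i+1}$, and after reducing the top term $x^{t_j}$ to $1$ modulo $x^{t_j}-1$, the result is precisely the polynomial attached to $\sigma(a_j)=(\theta_t(a_{j,t_j-1}),\theta_t(a_{j,0}),\dots,\theta_t(a_{j,t_j-2}))$. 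Applying this in each component and invoking the definition (\ref{eq 2}) of multiplication on $\textbf{S}^{\prime}$ then gives $\mu(\sigma_l(a))=x\cdot\mu(a)$ for every $a\in\textbf{S}$.

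For the forward direction, assume $\mathsf{C}$ is an SGQC code. Because $\mu$ is additive and $S$-linear and $\mathsf{C}$ is an $S$-submodule, $\mu(\mathsf{C})$ is closed under addition and under left multiplication by constants $s\in S$; by the shift correspondence and the closure of $\mathsf{C}$ under $\sigma_l$, it is also closed under multiplication by $x$, hence by every power $x^{k}$. Since an arbitrary $s(x)\in S[x;\theta_t]$ is a finite $S$-combination $\sum_k s_k x^{k}$, these closures combine to yield closure under multiplication by every $s(x)$, so $\mu(\mathsf{C})$ is a left $S[x;\theta_t]$-submodule. The converse simply reverses this reasoning: restricting the submodule action to constant polynomials shows $\mathsf{C}$ is an $S$-submodule, and restricting it to the element $x$ together with the correspondence shows $\mathsf{C}$ is closed under $\sigma_l$, which are exactly the two SGQC axioms.

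The only delicate point, and the step I would write out most carefully, is the shift computation, where $\theta_t$ must be tracked through both the skew product and the reduction modulo $x^{t_j}-1$; the placement of $\theta_t$ on each coordinate is precisely what makes multiplication by $x$ reproduce the \emph{skew} shift rather than the ordinary cyclic shift. The remaining assembly, namely that $S$-linearity plus closure under $x$ forces closure under all of $S[x;\theta_t]$, is routine once this correspondence is in hand.
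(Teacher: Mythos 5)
Your proof is correct: the key identity $\mu(\sigma_l(a))=x\cdot\mu(a)$, verified blockwise through the skew product and the reduction modulo $x^{t_j}-1$, is exactly what makes the two SGQC axioms equivalent to closure under all of $S[x;\theta_t]$, and your assembly of the two directions is sound. The paper states this lemma without proof (treating it as the routine generalization of the skew cyclic case), so your argument is precisely the standard one the authors are implicitly invoking; there is nothing to add or correct.
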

      Lemma 2 is valid for any length $N$ and $m_t$ where $|\theta_t|=m_t$. Hence, we do not need to impose the condition $m_t/n$.
\begin{theorem} \label{thm 1}
    Let $\mathsf{C}$ be a linear code over $S$ of length $N$. If $\mathsf{C}=(1-v)\mathsf{C}_1\oplus v\mathsf{C}_2$, where $\mathsf{C}_1$ and $\mathsf{C}_2$ are linear codes of length $N$ over $F_q$, then $\mathsf{C}$ is an SGQC code over $S$ if and only if $\mathsf{C}_1$ and $\mathsf{C}_2$ are SGQC codes over $F_q$ of block length $(t_1,t_2,\dots,t_l)$ and length $N$ with index $l$.
\end{theorem}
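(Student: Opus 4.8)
The plan is to reduce the stated equivalence to a single condition, namely closure under the shift $\sigma_l$. Indeed, the hypothesis that $\mathsf{C}$ is a linear code over $S$ and that $\mathsf{C}_1,\mathsf{C}_2$ are linear codes over $F_q$ already supplies the submodule requirement (condition $1$ of the SGQC definition) on both sides. Moreover, from the decomposition $\mathsf{C}=(1-v)\mathsf{C}_1\oplus v\mathsf{C}_2$ together with $v^2=v$ and $v(1-v)=0$ one gets $(1-v)\mathsf{C}=(1-v)\mathsf{C}_1$ and $v\mathsf{C}=v\mathsf{C}_2$. Thus the entire content is to connect shift-closure over $S$ with shift-closure over $F_q$ in each of the two idempotent components.

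The key observation I would establish first is that the skew shift commutes with the idempotent splitting. Since $\theta_t(a+vb)=a^{p^t}+vb^{p^t}$, we have $\theta_t(v)=v$ and $\theta_t(1-v)=1-v$, while the restriction of $\theta_t$ to $F_q$ is exactly the Frobenius $a\mapsto a^{p^t}$ used in the $F_q$-level shift. Writing a codeword as $c=(1-v)c^{(1)}+vc^{(2)}$ with $c^{(1)}\in\mathsf{C}_1$, $c^{(2)}\in\mathsf{C}_2$ (viewed in $F_q^N$), applying $\theta_t$ coordinatewise and using that $\theta_t$ is a ring automorphism fixing both idempotents yields the identity $\sigma_l(c)=(1-v)\sigma_l(c^{(1)})+v\sigma_l(c^{(2)})$, where on the right the shifts are the $F_q$-level shifts. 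This single identity drives both directions.

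For the forward implication, assume $\mathsf{C}$ is SGQC. Given $c^{(1)}\in\mathsf{C}_1$, the element $(1-v)c^{(1)}=(1-v)c^{(1)}+v\cdot 0$ lies in $\mathsf{C}$, so $\sigma_l\bigl((1-v)c^{(1)}\bigr)=(1-v)\sigma_l(c^{(1)})\in\mathsf{C}$ by shift-closure, and hence $(1-v)\sigma_l(c^{(1)})\in(1-v)\mathsf{C}=(1-v)\mathsf{C}_1$. Because multiplication by $1-v$ is injective on $F_q^N$ (for $w\in F_q^N$, the relation $(1-v)w=0$ forces $w=0$ in the $F_q$-basis $\{1,v\}$ of $S$), we conclude $\sigma_l(c^{(1)})\in\mathsf{C}_1$; the identical argument with $v$ in place of $1-v$ handles $\mathsf{C}_2$. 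Conversely, if $\mathsf{C}_1$ and $\mathsf{C}_2$ are SGQC, then for $c=(1-v)c^{(1)}+vc^{(2)}\in\mathsf{C}$ the commutation identity gives $\sigma_l(c)=(1-v)\sigma_l(c^{(1)})+v\sigma_l(c^{(2)})$ with $\sigma_l(c^{(1)})\in\mathsf{C}_1$ and $\sigma_l(c^{(2)})\in\mathsf{C}_2$, so $\sigma_l(c)\in(1-v)\mathsf{C}_1\oplus v\mathsf{C}_2=\mathsf{C}$, and $\mathsf{C}$ is SGQC.

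The only genuinely delicate point is the commutation identity; everything else is formal manipulation with the orthogonal idempotents $v,1-v$. The crux is verifying that $\theta_t$ fixes $v$ and $1-v$ and that its restriction to $F_q$ coincides with the automorphism defining the $F_q$-shift, so that the block-wise skew cyclic shift truly splits across the two components. Once this is secured, the projections onto $(1-v)\mathsf{C}_1$ and $v\mathsf{C}_2$ finish the argument. (Alternatively, one could phrase the proof through Lemma \ref{lem 2}, checking that a subset of $\mathbf{S}^{\prime}$ is a left $S[x;\theta_t]$-submodule iff its two idempotent components are the corresponding submodules over $F_q$, but the shift-operator formulation is more transparent.)
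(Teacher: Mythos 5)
Your proof is correct and follows essentially the same route as the paper: both arguments rest on the observation that $\theta_t$ fixes the idempotents $v$ and $1-v$, so that $\sigma_l\bigl((1-v)a+vb\bigr)=(1-v)\sigma_l(a)+v\sigma_l(b)$, and then both directions are formal consequences of the splitting $\mathsf{C}=(1-v)\mathsf{C}_1\oplus v\mathsf{C}_2$. The only cosmetic difference is that the paper extracts $\sigma_l(a)\in\mathsf{C}_1$ and $\sigma_l(b)\in\mathsf{C}_2$ via the Gray map $\phi$ applied to the codeword $s=a+v(-a+b)$, whereas you do it by projecting with the idempotents and invoking injectivity of multiplication by $1-v$ (resp.\ $v$) on $F_q^N$; both are valid.
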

\begin{proof}
Suppose $\mathsf{C}_1$ and $\mathsf{C}_2$ are SGQC codes of length $N=t_1+t_2+\dots+t_l$ with index $l$ over $F_q$. We aim to show that $\mathsf{C}$ is an SGQC over $S$. Let $s=(s_1,s_2,\dots,s_l) \in \mathsf{C}$, where each $s_i=(a_i+vb_i)\in S$. Pick $a=(a_1,a_2,\dots ,a_l)$, where $a_i=(a_{i,1},a_{i,2},\dots ,a_{i,t_i})$, and $b=(b_1,b_2,\dots,b_l$), where $b_i=(b_{i,1},b_{i,2},\dots ,b_{i,t_i})$ for all $i=1,\dots,l$. Then $a\in \mathsf{C}_1$ and $a+b\in \mathsf{C}_2$ which implies that \begin{align*}
     \sigma_l(a)&=(\sigma (a_1),\sigma (a_2),\dots,\sigma (a_l) \in \mathsf{C}_1\\
    & =(\theta_t(a_{1,t_1}),\theta_t(a_{1,1}),\dots,\theta_t(a_{1,t_1-1}),\dots, \theta_t(a_{l,t_l}),\theta_t(a_{l,1}),\dots,\theta_t(a_{l,t_l-1}))\\
    &=((a_{1,t_1})^{p^{t}},(a_{1,1})^{p^{t}},\dots,a_{1,t_1-1})^{p^{t}},\dots,(a_{l,t_l})^{p^{t}},(a_{l,1})^{p^{t}},\dots,(a_{l,t_l-1})^{p^{t}})
\end{align*} and similarly,
\begin{align*}
     \sigma_l(a+b)&=(\sigma (a_1+b_1),\sigma (a_2+b_2),\dots,\sigma (a_l+b_l)\in \mathsf{C}_2\\
    & =(\theta_t(a_{1,t_1}+b_{1,t_1}),\theta_t(a_{1,1}+b_{1,1}),\dots,\theta_t(a_{1,t_1-1}+b_{1,t_1-1}),\dots, \theta_t(a_{l,t_l}+b_{l,t_l}),\\ &~~~~~\theta_t(a_{l,1}+b_{l,1}),\dots,\theta_t(a_{l,t_l-1}+b_{l,t_l-1}))\\
    &=((a_{1,t_1})^{p^{t}}+(b_{1,t_1})^{p^{t}},(a_{1,1})^{p^{t}}+b_{1,1})^{p^{t}},\dots,a_{1,t_1-1})^{p^{t}}+b_{1,t_1-1})^{p^{t}},\dots,(a_{l,t_l})^{p^{t}}+(b_{l,t_l})^{p^{t}},\\ &~~~~~(a_{l,1})^{p^{t}}+(b_{l,1})^{p^{t}},\dots,(a_{l,t_l-1})^{p^{t}}+(b_{l,t_l-1})^{p^{t}}).
\end{align*}
Then $\sigma_l(s)=(1-v)\sigma_l(a)+v\sigma_l(a+b) \in \mathsf{C}$, which implies that $\mathsf{C}$ is an SGQC code of length $N=t_1+t_2+\dots+t_l$ with index $l$ over $S$.\\
Conversely, assume $\mathsf{C}$ is an SGQC code over $S$ of length $N=t_1+t_2+\dots+t_l$ with index $l$. For any $a=(a_1,a_2,\dots,a_l) \in \mathsf{C}_1$ and $b=(b_1,b_2,\dots,b_l) \in \mathsf{C}_2$, and if we consider $s_i=a_i+v(-a_i+b_i),~ \text{ for all }~ i=1,\dots,l$, then $s=(s_1,s_2,\dots,s_l) \in \mathsf{C}$. Since $\mathsf{C}$ is an SGQC code over $S$, we have\begin{align*}
         \sigma_l(s)&=(\sigma(s_1),\sigma(s_2),\dots,\sigma(s_l)) \in \mathsf{C}\\
         &=(\sigma(a_1)+v\sigma(-a_1+b_1),\sigma(a_2)+v\sigma(-a_2+b_2),\dots,\sigma(a_l)+v\sigma(-a_l+b_l))\\
         &=(\theta_t(a_{1,t_1})^{p^t} +v(-a_{1,t_1}+b_{1,t_1})^{p^t}),\theta_t(a_{1,1})^{p^t} +v(-a_{1,1}+b_{1,1})^{p^t}),\dots,\theta_t(a_{1,t_1-1})^{p^t} +\\&~~~~~v(-a_{1,t_1-1}+b_{1,t_1-1})^{p^t}),\dots,\theta_t(a_{l,t_l})^{p^t} +v(-a_{l,t_l}+b_{l,t_l})^{p^t}),\theta_t(a_{l,1})^{p^t} +v(-a_{l,1}+b_{l,1})^{p^t}),\\&~~~~\dots,\theta_t(a_{l,t_1-1})^{p^t} +v(-a_{l,t_l-1}+b_{l,t_l-1})^{p^t})).
\end{align*} Therefore, $\phi(\sigma_l(s))=(\sigma_l(a), \sigma_l(b))\in \mathsf{C}_1\otimes \mathsf{C}_2$, and hence $\sigma_l(a)\in \mathsf{C}_1$ and $\sigma_l(b)\in \mathsf{C}_2$ . Thus, $\mathsf{C}_1$ and $\mathsf{C}_2$ are SGQC codes over $F_q$.
\end{proof}

\section{Duality of SGQC codes}
  In this section, we present the study of the dual of an SGQC code over $S$ of length $N$ with index $l$. Before we do this, we review some definitions.
 \par Suppose $a=(a_1,a_2,\dots, a_l) ~\text{and}~ b=(b_1,b_2,\dots, b_l)$  are two elements of $\textbf{S}=S^{t_1}\times S^{t_2} \times \dots \times S^{t_l}$, where \begin{align*}
     a_i&=(a_{i,0}+va^{\prime}_{i,0},a_{i,1}+va^{\prime}_{i,1},\dots ,a_{i,t_i-1}+va^{\prime}_{i,t_i-1}),~\text{and}\\
     b_i&=(b_{i,0}+vb^{\prime}_{i,0},b_{i,1}+vb^{\prime}_{i,1},\dots ,b_{i,t_i-1}+vb^{\prime}_{i,t_i-1}).
 \end{align*}
 The usual inner product of $a$ and $b$ defined as \begin{align*}
     \langle a,b\rangle&=\sum_{i=1}^{l}a_i.b_i \\ &=\sum_{i=1}^{l} \sum_{j=0}^{t_i-1}(a_{i,j}+va^{\prime}_{i,j}).(b_{i,j}+vb^{\prime}_{i,j}) .
 \end{align*}
        Suppose $\mathsf{C}$ is an SGQC code of block length $(t_1,t_2,\dots,t_l)$ and length $N=t_1+t_2+\dots+t_l$ with index $l$. The dual of $\mathsf{C}$ is denoted by $C^{\perp}$ and is defined as $\mathsf{C}^{\perp} =\{a=(a_1,a_2,\dots,a_l) \in \textbf{S}:\langle a, c\rangle = \sum_{i=1}^{l} a_ic_i=0, \text{ for all }~ c=(c_1,c_2,\dots,c_l)\in \mathsf{C}\}$.
\begin{theorem} \label{thm 2}
      Let $\mathsf{C}$ be an SGQC code of block length $(t_1,t_2,\dots,t_l)$ and length $N=t_1+t_2+\dots+t_l$ with index $l$ over $S$, then $\mathsf{C}^{\perp}$ is an SGQC code of length $N$ with index $l$.
\end{theorem}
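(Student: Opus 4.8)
The plan is to prove that $\mathsf{C}^{\perp}$ is itself an SGQC code by verifying the two defining properties from the definition: that it is an $S$-submodule of $\textbf{S}$, and that it is closed under the shift $\sigma_l$. The first property is immediate, since the dual (orthogonal complement) of any $S$-submodule under the inner product $\langle\,\cdot\,,\cdot\,\rangle$ is automatically an $S$-submodule of $\textbf{S}$; no special structure of $\mathsf{C}$ is needed for this. Thus the whole content of the theorem lies in showing that $\sigma_l(\mathsf{C}^{\perp}) \subseteq \mathsf{C}^{\perp}$, i.e. that if $a \in \mathsf{C}^{\perp}$ then $\sigma_l(a)$ is still orthogonal to every codeword $c \in \mathsf{C}$.

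The key step, and the main obstacle, is to relate $\langle \sigma_l(a), c\rangle$ to some already-known orthogonality. Because $\sigma_l$ is built from the skew cyclic shift $\sigma$ acting blockwise, and $\sigma$ carries the automorphism $\theta_t$, the shift is \emph{not} an isometry for the ordinary inner product in the naive way; the automorphism has to be tracked. First I would reduce to a single block, noting $\langle \sigma_l(a), c\rangle = \sum_{i=1}^{l} \langle \sigma(a_i), c_i\rangle$, so it suffices to understand one block $S^{t_i}$ with its skew cyclic shift. The crucial algebraic fact I would establish is a shift-adjoint identity of the form $\langle \sigma(a_i), c_i\rangle = \theta_t\bigl(\langle a_i, \sigma^{-1}(c_i)\rangle\bigr)$ or, more usefully, an identity expressing $\langle \sigma(a_i), \sigma(c_i)\rangle$ in terms of $\theta_t\bigl(\langle a_i, c_i\rangle\bigr)$. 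Since $\sigma$ permutes the coordinates cyclically (up to applying $\theta_t$ to each entry), one checks that $\langle \sigma(a_i),\sigma(c_i)\rangle = \theta_t(\langle a_i, c_i\rangle)$ using that $\theta_t$ is a ring automorphism, hence multiplicative and additive.

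With that identity in hand, the argument proceeds as follows. Because $\mathsf{C}$ is an SGQC code it is closed under $\sigma_l$, and since each block has finite length, $\sigma_l$ restricted to $\mathsf{C}$ is a bijection; in particular $\sigma_l^{-1}(c) \in \mathsf{C}$ for every $c \in \mathsf{C}$. For $a \in \mathsf{C}^{\perp}$ I would then compute $\langle \sigma_l(a), c\rangle$ by writing $c = \sigma_l\bigl(\sigma_l^{-1}(c)\bigr)$ and applying the blockwise identity to get $\langle \sigma_l(a), c\rangle = \langle \sigma_l(a), \sigma_l(\sigma_l^{-1}(c))\rangle = \theta_t\bigl(\langle a, \sigma_l^{-1}(c)\rangle\bigr)$. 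Since $\sigma_l^{-1}(c) \in \mathsf{C}$ and $a \in \mathsf{C}^{\perp}$, the inner product $\langle a, \sigma_l^{-1}(c)\rangle = 0$, and as $\theta_t$ fixes $0$ we conclude $\langle \sigma_l(a), c\rangle = 0$. This holds for all $c \in \mathsf{C}$, so $\sigma_l(a) \in \mathsf{C}^{\perp}$.

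The delicate point I expect to need care with is the bookkeeping of the blockwise shift together with the automorphism: one must confirm that $\sigma$ acting on a single block of length $t_i$ genuinely satisfies $\langle \sigma(a_i),\sigma(c_i)\rangle = \theta_t(\langle a_i, c_i\rangle)$, which relies on the fact that cyclically permuting the two vectors in lockstep leaves the paired products the same set of terms (merely reindexed) while $\theta_t$ is pulled out of the whole sum by multiplicativity. An alternative, cleaner route is to invoke Theorem~\ref{thm 1}: decompose $\mathsf{C} = (1-v)\mathsf{C}_1 \oplus v\mathsf{C}_2$ with $\mathsf{C}_1, \mathsf{C}_2$ SGQC over $F_q$, observe that $\mathsf{C}^{\perp} = (1-v)\mathsf{C}_1^{\perp} \oplus v\mathsf{C}_2^{\perp}$ because the Gray map preserves orthogonality and the idempotents $v, 1-v$ are orthogonal, and then reduce the problem to the field case where the corresponding shift-adjoint identity for $\theta_t$ over $F_q$ is standard. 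I would likely present the direct $S$-module argument as the main proof and keep the decomposition as the conceptual backbone ensuring the duality of the constituent codes behaves correctly.
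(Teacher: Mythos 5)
Your proposal is correct and follows essentially the same route as the paper: both reduce the claim to showing $\langle \sigma_l(a),c\rangle = \theta_t\bigl(\langle a,\sigma_l^{-1}(c)\rangle\bigr)=\theta_t(0)=0$, using that $\mathsf{C}$ contains $\sigma_l^{-1}(c)$. The only cosmetic difference is that the paper realizes the inverse shift explicitly as the positive power $\sigma_l^{M-1}$ with $M=\operatorname{lcm}(m_t,t_1,\dots,t_l)$, while you justify $\sigma_l^{-1}(c)\in\mathsf{C}$ by finiteness and injectivity of $\sigma_l$ on $\mathsf{C}$.
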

\begin{proof}
    Let \begin{align*}
     s&=(s_1,s_2,\dots,s_l)\in \mathsf{C}^{\perp}\\
        &=\begin{pmatrix}(s_{1,0}+vs^{\prime}_{1,0},s_{1,1}+vs^{\prime}_{1,1},\dots,s_{1,{t_1-1}}+vs^{\prime}_{1,{t_1-1}}\\
s_{2,0}+vs^{\prime}_{2,0},s_{2,1}+vs^{\prime}_{2,1},\dots,s_{2,t_2-1}+vs^{\prime}_{2,t_2-1}\\ \dots~~~~~~~~~~~~~~ \dots~~~~~~~~\dots~~~~~~~~~~~ \dots\\ s_{l,0}+vs^{\prime}_{l,0},s_{l,1}+vs^{\prime}_{l,1},\dots,s_{l,t_{l}-1}+vs^{\prime}_{l,t_{l}-1})
        \end{pmatrix}.
    \end{align*}
    Our objective is to show $\sigma_l(s) \in \mathsf{C}^{\perp}$.\\
    i.e, \begin{align*}
       \sigma_l(s)&=(\sigma(s_1),\sigma(s_2),\dots,\sigma(s_l))\\
       & =\begin{pmatrix}(\theta_t(s_{1,t_1-1}+vs^{\prime}_{1,t_1-1}),\theta_t(s_{1,0}+vs^{\prime}{1,0}),\dots,\theta_t(s_{1,t_1-2}+vs^{\prime}_{1,t_1-2})\\
       \theta_t(s_{2,t_2-1}+vs^{\prime}_{2,t_2-1}),\theta_t(s_{2,0}+vs^{\prime}_{2,0}),\dots,\theta_t(s_{2,t_2-2}+vs^{\prime}_{2,t_2-2})\\ \dots~~~~~~~~~~~~~~~~~~~~\dots~~~~~~~~~~~~~~~~~~~~\dots~~~~~~~~~~~~~~~~\dots\\
       \theta_t(s_{l,t_l-1}+vs^{\prime}_{l,t_l-1}),\theta_t(s_{l,0}+vs^{\prime}_{l,0}),\dots,\theta_t(s_{l,t_l-2}+vs^{\prime}_{l,t_l-2})
            \end{pmatrix}.
  \end{align*}
          Now, for any $c= (c_1,c_2,\dots,c_l) \in \mathsf{C}$,$\text{ for all } i=1,2,\dots,l$ \begin{align*}
              c_i&=\begin{pmatrix}(c_{1,0}+vc^{\prime}_{1,0}, c_{1,1}+vc^{\prime}_{1,1},\dots,c_{1,t_1-1}+vc^{\prime}_{1,t_1-1}\\c_{2,0}+vc^{\prime}_{2,0},c_{2,1}+c^{\prime}_{2,1},\dots,c_{2,t_2-1}+vc^{\prime}_{2,t_2-1}\\\dots~~~~~~~~~~~\dots~~~~~~~~~~~~\dots~~~~~~~~~~~~\dots\\
                  c_{l,0}+vc^{\prime}_{l,0},c_{l,1}+vc^{\prime}_{l,1},\dots,c_{l,t_l-1}+vc^{\prime}_{l,t_l-1})
              \end{pmatrix}.
          \end{align*}
    We have to show that $\langle \sigma_l(s),c\rangle =0$.\\
    Now, \begin{align*}
        \langle \sigma_l(s),c\rangle =&\begin{pmatrix}
            \theta_t(s_{1,t_1-1}+vs^{\prime}_{1,t_1-1}).(c_{1,0}+vc^{\prime}_{1,0}) + \theta_t(s_{1,0}+vs^{\prime}_{1,0}).(c_{1,1}+vc^{\prime}_{1,1})+ \dots+\\ \theta_t(s_{1,t_1-2}+vs^{\prime}_{1,t_1-2}).(c_{1,t_1-1}+vc^{\prime}_{1,t_1-1})+\dots+\theta_t(s_{l,t_l-1}+vs^{\prime}_{l,t_l-1}).\\(c_{l,0}+vc^{\prime}_{l,0})+\dots+\theta_t(s_{l,t_l-2}+vs^{\prime}_{l,t_l-2}).(c_{l,t_l-1}+vc^{\prime}_{l,t_l-1})
        \end{pmatrix}.
    \end{align*}
 As $\mathsf{C}$ is an SGQC code, then $\sigma_l^k(c)\in \mathsf{C}$ for any positive integer $k$.
Now, suppose that\begin{align*}
    M=\ \text{lcm}(m_t,t_1,t_2,\dots,t_l),
\end{align*} where $m_t$ is the order of automorphism $\theta_t$,  then
\begin{align*}
    \sigma_l^{M}(Y)=Y,  \quad \text{for any} \quad Y \in \textbf{S}=S^{t_1}\times S^{t_2}\times\dots\times S^{t_l}.
\end{align*} Hence, $\theta_t^{M}(y)=y, ~\text{for any}~ y \in Y$, and $\theta_t^{M-1}(y)=\theta_t^{-1}(y)$.
So, \begin{align*}
    \sigma_l^{M-1}(c)=\begin{pmatrix}
        \theta_t^{-1}(c_{1,1}+vc^{\prime}_{1,1}),\theta_t^{-1}(c_{1,2}+vc^{\prime}_{1,2}),\dots,\theta_t^{-1}(c_{1,t_l-1}+vc^{\prime}_{1,t_l-1}),\theta_t^{-1}(c_{1,0}+vc^{\prime}_{1,0})\\ \dots ~~~~~~~~~~~~~~~~~~\dots ~~~~~~~~~~~~~~~~~~~~~~\dots~~~~~~~~~~~~~~~~~~~~~\dots\\\theta_t^{-1}(c_{l,1}+vc^{\prime}_{l,1}),\theta_t^{-1}(c_{l,2}+vc^{\prime}_{l,2}),\dots,\theta_t^{-1}(c_{l,t_l-1}+vc^{\prime}_{l,t_l-1}),\theta_t^{-1}(c_{l,0}+vc^{\prime}_{l,0})
         \end{pmatrix}.
\end{align*}
Now, as $\mathsf{C}$ is an SGQC code, we have \begin{align*}
    \langle s,\sigma_l^{M-1}(c)\rangle=&
    (s_{1,0}+vs^{\prime}_{1,0}.(\theta_t^{-1}(c_{1,1}+vc^{\prime}_{1,1})+\dots+(s_{1,t_1-1}+vs^{\prime}_{1,t_1-1}).\theta_t^{-1}(c_{1,0}+vc^{\prime}_{1,0})+\dots+\\&(s_{l,0}+vs^{\prime}_{l,0}).\theta_t^{-1}(c_{l,1}+vc^{\prime}_{l,1})+\dots+(s_{l,t_l-1}+vs^{\prime}_{l,t_l-1}).\theta_t^{-1}(c_{l,0}+vc^{\prime}_{l,0})=0.
\end{align*}
Taking $\theta_t$ to both sides of the above equation, we get
\begin{align*}
    \theta_t(s_{1,0}+vs^{\prime}_{1,0}).(c_{1,1}+vc^{\prime}_{1,1})+\dots+\theta_t(s_{1,t_1-1}+vs^{\prime}_{1,t_1-1}).(c_{1,0}+vc^{\prime}_{1,0})+\dots+\\\theta_t(s_{l,0}+vs^{\prime}_{l,0}).(c_{l,1}+vc^{\prime}_{l,1})+\dots+\theta_t(s_{l,t_l-1}+vs^{\prime}_{l,t_l-1}).(c_{l,0}+vc^{\prime}_{l,0})=0.
\end{align*}
We get $\langle \sigma_l(s),c\rangle=0$ on arranging the above expression. Thus, $\sigma_l(s) \in \mathsf{C}^{\perp}$. Hence, $\mathsf{C}^{\perp} $ is an SGQC code of length $N=t_1+t_2+\dots+t_l$ with index $l$.
\end{proof}
From the Theorem \ref{thm 1} and Theorem \ref{thm 2}, the following corollary follows easily.
\begin{corollary}\label{cor 1}
    $\mathsf{C}$ is a self-dual SGQC code over $S$ if and only if $\mathsf{C}_1$ and $\mathsf{C}_2$ are self-dual SGQC codes over $F_q$.
\end{corollary}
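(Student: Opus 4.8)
The plan is to deduce the corollary from a single structural identity, namely that the dual respects the canonical decomposition:
\begin{equation*}
    \mathsf{C}^{\perp}=(1-v)\mathsf{C}_1^{\perp}\oplus v\mathsf{C}_2^{\perp}.
\end{equation*}
Once this is established, the self-dual case is pure bookkeeping layered on top of Theorems \ref{thm 1} and \ref{thm 2}, so the whole effort concentrates on proving the displayed identity.

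To prove it, I would work with the orthogonal idempotents $e_1=1-v$ and $e_2=v$, which satisfy $e_1^2=e_1$, $e_2^2=e_2$, and $e_1e_2=0$. Writing a typical $a\in\textbf{S}$ as $a=(1-v)a^{(1)}+va^{(2)}$ and any $c\in\mathsf{C}$ as $c=(1-v)c^{(1)}+vc^{(2)}$ with the components $a^{(i)},c^{(i)}$ over $F_q$, the defining inner product collapses componentwise to
\begin{equation*}
    \langle a,c\rangle=(1-v)\langle a^{(1)},c^{(1)}\rangle+v\langle a^{(2)},c^{(2)}\rangle,
\end{equation*}
since every cross term carries the factor $e_1e_2=0$. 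Because the representation $(1-v)X+vY$ of an element of $S$ is unique and vanishes only when $X=Y=0$, the condition $\langle a,c\rangle=0$ for all $c\in\mathsf{C}$ splits into the two independent requirements $\langle a^{(1)},c^{(1)}\rangle=0$ for all $c^{(1)}\in\mathsf{C}_1$ and $\langle a^{(2)},c^{(2)}\rangle=0$ for all $c^{(2)}\in\mathsf{C}_2$. This says precisely that $a^{(1)}\in\mathsf{C}_1^{\perp}$ and $a^{(2)}\in\mathsf{C}_2^{\perp}$, which is the claimed decomposition of $\mathsf{C}^{\perp}$.

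With the identity in hand I would finish quickly. The self-duality condition $\mathsf{C}=\mathsf{C}^{\perp}$ reads $(1-v)\mathsf{C}_1\oplus v\mathsf{C}_2=(1-v)\mathsf{C}_1^{\perp}\oplus v\mathsf{C}_2^{\perp}$, and since $\mathsf{C}_1$ and $\mathsf{C}_2$ are uniquely recovered from $\mathsf{C}$ by their very definitions in Section \ref{sec 2}, this is equivalent to $\mathsf{C}_1=\mathsf{C}_1^{\perp}$ together with $\mathsf{C}_2=\mathsf{C}_2^{\perp}$. Theorem \ref{thm 1} then identifies $\mathsf{C}$ being SGQC with $\mathsf{C}_1,\mathsf{C}_2$ being SGQC over $F_q$, while Theorem \ref{thm 2} (applied over $F_q$) keeps the duals within the SGQC class, so each equality $\mathsf{C}_i=\mathsf{C}_i^{\perp}$ is genuinely an identity of self-dual SGQC codes. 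Chaining these equivalences gives both directions of the corollary.

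I expect the one genuine obstacle to be the careful justification of the dual-decomposition identity, specifically the step that converts the single scalar condition $\langle a,c\rangle=0$ over $S$ into the pair of independent conditions over $F_q$; this rests entirely on the orthogonality $e_1e_2=0$ and on the uniqueness of the idempotent representation, after which the argument is formal. As a consistency check one may note that the Gray map of Section \ref{sec 2}, being an orthogonality-preserving isometry, is compatible with this splitting, but I would treat that only as a sanity check rather than as a step of the proof.
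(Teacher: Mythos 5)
Your proposal is correct, and it is in fact more complete than what the paper offers: the paper gives no proof at all, merely asserting that the corollary ``follows easily'' from Theorems \ref{thm 1} and \ref{thm 2}. Those two theorems by themselves only tell you that the SGQC property passes between $\mathsf{C}$ and the pair $(\mathsf{C}_1,\mathsf{C}_2)$ and that $\mathsf{C}^{\perp}$ is again SGQC; they say nothing about how $\mathsf{C}^{\perp}$ relates to $\mathsf{C}_1^{\perp}$ and $\mathsf{C}_2^{\perp}$, which is exactly what the self-duality equivalence needs. Your identity $\mathsf{C}^{\perp}=(1-v)\mathsf{C}_1^{\perp}\oplus v\mathsf{C}_2^{\perp}$ is the genuinely missing ingredient, and your derivation of it is sound: the cross terms in $\langle a,c\rangle$ vanish because $(1-v)v=0$, the representation $(1-v)X+vY=X+v(Y-X)$ is zero only when $X=Y=0$, and since $c^{(1)}$ and $c^{(2)}$ range independently over $\mathsf{C}_1$ and $\mathsf{C}_2$ in the decomposition $\mathsf{C}=(1-v)\mathsf{C}_1\oplus v\mathsf{C}_2$, the single orthogonality condition over $S$ splits into the two conditions over $F_q$. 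The uniqueness of the idempotent decomposition then turns $\mathsf{C}=\mathsf{C}^{\perp}$ into $\mathsf{C}_i=\mathsf{C}_i^{\perp}$ for $i=1,2$, and Theorem \ref{thm 1} handles the SGQC half of the statement. One minor remark: your invocation of Theorem \ref{thm 2} ``over $F_q$'' to keep the duals in the SGQC class is harmless but not needed, since the equality $\mathsf{C}_i=\mathsf{C}_i^{\perp}$ already forces $\mathsf{C}_i^{\perp}$ to be SGQC whenever $\mathsf{C}_i$ is. What your approach buys is an actual proof where the paper has only a citation; what it costs is nothing beyond the short idempotent computation.
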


  \par In Section $3$, we have defined the map $\mu:\textbf{S}\rightarrow \textbf{S}^{\prime}$, which defines a one-to-one correspondence between SGQC codes over $S$ of length $N=t_1+t_2+ \dots +t_l$ with index $l$ and linear codes over $\textbf{S}^{\prime}=S_{t_1} \times S_{t_2} \times \dots \times S_{t_l} $ of length $N$. Here, we consider the above map $\mu$ as a polynomial representation of the SGQC codes, and then consider the \textbf{Hermitian} inner product. Following the definition in \cite[Section $6$]{Bhaintwal12},  define a ``conjugation" map $\psi_j$ on $S_{t_j}$ by \begin{align*}
      \psi_j(ax^i)=\theta_t^{-i}(a)x^{t_j-i};0\leq i\leq {t_j-1} \text{ and } j=1,2,\dots,l,
  \end{align*}
  and the map is extended to all elements of $S_{t_j}$ by linearty of addition.
  \begin{definition}
  Suppose $a(x)=(a_1(x),a_2(x),\dots ,a_l(x)) $, and $b(x)=(b_1(x),b_2(x),\dots ,b_l(x)) $ of $\textbf{S}^{\prime}$, then the \textit{Hermitian inner product} is defined by
  \begin{align*}
      a(x)*b(x)=\sum_{i=1}^{l} a_i(x) \cdot \psi_j (b_i(x)).
  \end{align*}
  \end{definition}
   Suppose that the order of $\theta_t$ is $m_t$ and $m_t/t_j: \text{ for all } j=1,2,\dots,l$. Therefore, $\theta_t^{m_t}=1=\theta_t^{t_j}$. The following result is the generalization of Proposition $3.2$ \cite{Ling01}.
  \begin{theorem} \label{thm 3}
      Let $u, v\in \textbf{S} $, and suppose $u(x)~ \text{and} ~v(x)$ denote the polynomial representation of $u~ \text{and} ~v$, respectively. Then $\sigma_l^{r}(u)\cdot v=0 :~\text{ for all }~ 0~\leq ~r ~\leq ~t_j-1, \text{and} ~\text{ for all }~ j=1,2,\dots,l $ if and only if $u(x)*v(x)=0$.
  \end{theorem}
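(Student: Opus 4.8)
The plan is to pass to the polynomial picture of Section 3 and reduce the whole biconditional to a single constant-term identity. Recall that the map $\mu$ identifies the shift $\sigma_l$ with block-wise multiplication by $x$, so that $\mu(\sigma_l^{r}(u)) = x^{r}u(x) = (x^{r}u_1(x),\dots,x^{r}u_l(x))$ in $\mathbf{S}'$. Hence it suffices to understand, for each $r$, the constant terms of the products $x^{r}u_i(x)\cdot\psi_i(v_i(x))$ in $S_{t_i}$, and to relate their sum to the Hermitian product $u(x)*v(x)$.

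First I would isolate the key computational lemma: for any $w(x)=\sum_{p=0}^{t_i-1} w_p x^{p}$ and $v_i(x)=\sum_{s=0}^{t_i-1} v_{i,s}x^{s}$ in $S_{t_i}$, the coefficient of $x^{0}$ in $w(x)\psi_i(v_i(x))$ equals the dot product $\sum_{s} w_s v_{i,s}$. This is a direct expansion using the skew rule $(ax^{p})(bx^{k})=a\,\theta_t^{p}(b)\,x^{p+k}$ together with $\psi_i(v_{i,s}x^{s})=\theta_t^{-s}(v_{i,s})x^{t_i-s}$: a term lands in the $x^{0}$-slot modulo $x^{t_i}-1$ exactly when the exponents match, i.e. $p=s$, and there $\theta_t^{p-s}=\theta_t^{0}=\mathrm{id}$, so all twists cancel and only $\sum_s w_s v_{i,s}$ survives. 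Applying this with $w(x)=x^{r}u_i(x)$ and summing over $i$ yields the master identity: the constant term of $\sum_{i=1}^{l}\bigl(x^{r}u_i(x)\bigr)\psi_i(v_i(x))$ equals $\sum_i \sigma^{r}(u_i)\cdot v_i = \langle \sigma_l^{r}(u),v\rangle$.

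With this in hand the two directions follow. For the forward direction, if $u(x)*v(x)=0$ then block-wise multiplication by $x^{r}$ keeps it $0$, so its constant term vanishes and $\langle \sigma_l^{r}(u),v\rangle=0$ for every $r$, in particular for $0\le r\le t_j-1$ and all $j$. For the converse I would exploit that the shift has finite order: since $m_t\mid t_j$ for all $j$ we have $\theta_t^{t_j}=1$, hence $\sigma_l^{M}=\mathrm{id}$ with $M=\text{lcm}(m_t,t_1,\dots,t_l)$, so only the residues $r=0,1,\dots,M-1$ matter. As $r$ ranges over these values, the constant term of $x^{r}\cdot\bigl(u(x)*v(x)\bigr)$ recovers, up to an invertible $\theta_t$-twist that I strip off using that $\theta_t$ is an automorphism (hence injective), the coefficients of $u(x)*v(x)$; the vanishing of all these inner products then forces every coefficient to be zero, i.e. $u(x)*v(x)=0$.

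The main obstacle is the bookkeeping in the converse. Because the blocks have different lengths $t_j$, multiplication by $x^{r}$ acts with period $t_j$ on the $j$-th block, so one must track carefully how the exponent $t_i-s$ produced by $\psi_i$ interacts with the twists $\theta_t^{r}$ and the reductions modulo $x^{t_i}-1$, and verify that letting $r$ run through a complete period genuinely recovers every coefficient of the $*$-product. The hypothesis $m_t\mid t_j$ is exactly what makes $\theta_t^{t_j}=1$, keeping index arithmetic modulo $t_j$ compatible with the action of $\theta_t$ so that the twists introduced by the conjugation $\psi_i$ and by the cyclic shift cancel cleanly; without it the constant-term identity would not close up, and this compatibility is where I expect the bulk of the technical work to lie.
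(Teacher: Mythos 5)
Your proposal is correct and follows essentially the same route as the paper: both rest on the identity that the coefficients of $u(x)*v(x)$ are $\theta_t$-twists of the Euclidean inner products $\langle \sigma_l^{r}(u),v\rangle$, obtained by expanding the skew product against the conjugation $\psi_j$; your version extracts this as the constant term of $x^{r}\cdot\bigl(u(x)*v(x)\bigr)$ while the paper reads off the coefficient of $x^{w}$ directly, which is the same computation after reindexing. You additionally spell out the converse (recovering every coefficient from the vanishing inner products via injectivity of $\theta_t$), which the paper leaves implicit in its coefficient identity.
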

  \begin{proof}
     First, we assume that $u(x)*v(x)=0$. Then
 \begin{equation*} \begin{split}
          0&=\sum_{j=1}^{l}u_j(x)\cdot \psi_j(v_j(x)),  \\
          &=\sum_{j=1}^{l}(\sum_{n=0}^{t_j-1}u_{n,j}+vu'_{n,j}x^n)\cdot \psi_j (\sum_{k=0}^{t_j-1}v_{k,j}+vv'_{k,j}x^k),
          \\
          &=\sum_{j=1}^{l}(\sum_{n=0}^{t_j-1}u_{n,j}+vu'_{n,j}x^n)\cdot (\sum_{k=0}^{t_j-1} \theta_t^{-k}(v_{k,j}+vv'_{k,j})x^{t_j-k}),\\ &=\sum_{w=0}^{t_j-1} (\sum_{j=1}^{l} \sum_{i=0}^{t_j-1}((u_{i+{w},j}+vu'_{i+{w},j}) \theta_t^{w} (v_{i,j}+vv'_{i,j})))x^{w},
          \end{split}
      \end{equation*}
      where the subscript $i+{t_j-1}$ is taken modulo $t_j$,  $\text{ for all } j=1,2,\dots,l$. Comparing the coefficients of $x^{t_j-1}$ for all $j=1,2,\dots,l$, on both sides, we get
      \begin{align*}
          0&=(\sum_{j=1}^{l} \sum_{i=0}^{t_j-1}((u_{i+{w},j}+vu'_{i+{w},j}) \theta_t^{w} (v_{i,j}+vv'_{i,j})), for ~all~ 0 \leq w \leq {t_j-1},\\
          &=\theta_t^{w}(\sigma_l^{n_j-w}(u).v) , \text{ for all } j=1,2,\dots,l.
      \end{align*}
      Hence, it implies that $\sigma_l^{n_j-w}(u).v=0$ for all $0\leq w \leq t_j-1$ which is equivalent to $\sigma_l^{r}(u)\cdot v=0 :~\text{ for all }~ 0~\leq ~r ~\leq ~t_j-1$.
  \end{proof}
       \begin{corollary} \label{cor 2}
            Suppose $\mathsf{C}$ is an SGQC code of block length $(t_1,t_2,\dots,t_l)$ and length $N$ with index $l$ over $S$. Then
           \begin{align*}
               \mathsf{C}^{\perp}=\{v(x)\in \textbf{S}^{\prime} : u(x)*v(x)=0, \text{ for all } u(x)\in \mathsf{C}\}.
           \end{align*}
       \end{corollary}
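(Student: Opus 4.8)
The plan is to obtain the corollary as an immediate consequence of Theorem \ref{thm 3} combined with the defining closure of an SGQC code under the shift operator $\sigma_l$. Writing $\mathsf{D}:=\{v(x)\in\textbf{S}^{\prime}: u(x)*v(x)=0 \text{ for all } u(x)\in\mathsf{C}\}$ for the set on the right-hand side, I would prove the two inclusions $\mathsf{D}\subseteq\mathsf{C}^{\perp}$ and $\mathsf{C}^{\perp}\subseteq\mathsf{D}$ separately, identifying throughout each codeword with its polynomial representation under the correspondence $\mu$ of Section 3.

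First I would treat $\mathsf{D}\subseteq\mathsf{C}^{\perp}$. Take $v\in\mathsf{D}$, so that $u(x)*v(x)=0$ for every $u\in\mathsf{C}$. By Theorem \ref{thm 3} this is equivalent to $\sigma_l^{r}(u)\cdot v=0$ for all $0\leq r\leq t_j-1$ and all $j$. Specialising to $r=0$ gives $u\cdot v=\langle u,v\rangle=0$ for every $u\in\mathsf{C}$, which is precisely the condition $v\in\mathsf{C}^{\perp}$.

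The reverse inclusion is where the SGQC structure does the real work. Given $v\in\mathsf{C}^{\perp}$ and an arbitrary $u\in\mathsf{C}$, I would use that $\mathsf{C}$ is closed under $\sigma_l$, so $\sigma_l^{r}(u)\in\mathsf{C}$ for every $r$; since $v$ is orthogonal to all of $\mathsf{C}$, this yields $\sigma_l^{r}(u)\cdot v=\langle\sigma_l^{r}(u),v\rangle=0$ for all $r$, in particular for every $0\leq r\leq t_j-1$ and every $j$. Applying the converse direction of Theorem \ref{thm 3} then gives $u(x)*v(x)=0$, and since $u\in\mathsf{C}$ was arbitrary we conclude $v\in\mathsf{D}$.

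I do not anticipate a serious obstacle here, as Theorem \ref{thm 3} has already done the technical work of translating the single Hermitian condition $u(x)*v(x)=0$ into the whole family of shifted ordinary-orthogonality conditions. The only point needing care is bookkeeping around the conjugation map $\psi_j$ in the definition of the Hermitian product $*$: I must make sure the nonconjugated factor $u(x)$ is the one ranging over $\mathsf{C}$ and that $u$ and $v$ are substituted into Theorem \ref{thm 3} in the correct roles, so that the specialisation to $r=0$ indeed recovers the ordinary inner product $\langle u,v\rangle$.
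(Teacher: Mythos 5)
Your proposal is correct and is exactly the intended derivation: the paper states Corollary \ref{cor 2} without proof as an immediate consequence of Theorem \ref{thm 3}, and your two-inclusion argument (specialising to $r=0$ for one direction, using closure of $\mathsf{C}$ under $\sigma_l$ together with the converse of Theorem \ref{thm 3} for the other) is precisely how it is meant to follow. The bookkeeping caveat you raise about the roles of $u$ and $v$ in the Hermitian product is handled correctly, since the Euclidean inner product on $\textbf{S}$ is symmetric and the unconjugated argument of $*$ is indeed the one ranging over $\mathsf{C}$.
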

  \begin{theorem} \label{thm 4}
      Let $\mathsf{C}$ be an SGQC code of block length $(t_1,t_2,\dots,t_l)$ and length $N$ with index $l$ over $S$. Then $\mu(\mathsf{C}^{\perp})=\mu(\mathsf{C})^{\perp}$, where the dual in $\textbf{S} $ and $\textbf{S}^{\prime} $ is obtained concerning Euclidean and Hermitian inner product, respectively.
  \end{theorem}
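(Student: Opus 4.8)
The plan is to prove the set equality $\mu(\mathsf{C}^{\perp}) = \mu(\mathsf{C})^{\perp}$ by establishing the two inclusions separately, using Theorem \ref{thm 3} as the bridge between the Euclidean orthogonality in $\textbf{S}$ and the Hermitian orthogonality in $\textbf{S}^{\prime}$. Recall that, by definition of the Hermitian inner product, $\mu(\mathsf{C})^{\perp} = \{w(x) \in \textbf{S}^{\prime} : u(x)*w(x) = 0 \text{ for all } u(x) \in \mu(\mathsf{C})\}$, and that $\mu$ is a bijection carrying $\mathsf{C}$ onto $\mu(\mathsf{C})$ and intertwining the shift $\sigma_l$ with multiplication by $x$. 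The whole argument rests on translating the single polynomial identity $u(x)*v(x)=0$ into the family of scalar conditions $\sigma_l^{r}(u)\cdot v = 0$ for $0 \le r \le t_j - 1$, which is exactly the content of Theorem \ref{thm 3} (valid here since $m_t \mid t_j$ for every $j$).

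For the inclusion $\mu(\mathsf{C}^{\perp}) \subseteq \mu(\mathsf{C})^{\perp}$, I would take $a \in \mathsf{C}^{\perp}$ with polynomial representation $a(x)=\mu(a)$ and show $c(x)*a(x)=0$ for every $c \in \mathsf{C}$. Because $\mathsf{C}$ is an SGQC code it is closed under $\sigma_l$, so $\sigma_l^{r}(c)\in \mathsf{C}$ for every $r \ge 0$; since $a \in \mathsf{C}^{\perp}$ this gives $\langle a, \sigma_l^{r}(c)\rangle = 0$, and as the Euclidean product over the commutative ring $S$ is symmetric we obtain $\sigma_l^{r}(c)\cdot a = 0$ for all $0 \le r \le t_j - 1$. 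Applying Theorem \ref{thm 3} with $u=c$ and $v=a$ yields $c(x)*a(x)=0$, hence $a(x) \in \mu(\mathsf{C})^{\perp}$.

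For the reverse inclusion $\mu(\mathsf{C})^{\perp} \subseteq \mu(\mathsf{C}^{\perp})$, I would take $a(x) \in \mu(\mathsf{C})^{\perp}$, so that $c(x)*a(x)=0$ for every $c(x) \in \mu(\mathsf{C})$. Theorem \ref{thm 3} again gives $\sigma_l^{r}(c)\cdot a = 0$ for all $0 \le r \le t_j - 1$; specializing to $r=0$ produces $c \cdot a = \langle a, c\rangle = 0$ for every $c \in \mathsf{C}$, whence $a \in \mathsf{C}^{\perp}$ and $a(x) \in \mu(\mathsf{C}^{\perp})$. Combining the two inclusions gives the desired equality, with no appeal to a dimension count since each direction is a genuine containment. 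Equivalently, one may simply observe that Corollary \ref{cor 2} already describes $\mu(\mathsf{C}^{\perp})$ as precisely the set of $w(x)$ with $u(x)*w(x)=0$ for all $u(x)\in\mu(\mathsf{C})$, which is the definition of $\mu(\mathsf{C})^{\perp}$.

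I expect the main obstacle to be careful bookkeeping rather than a deep difficulty: one must keep straight which argument carries the shift and the conjugation $\psi_j$ in Theorem \ref{thm 3}, and in particular notice the asymmetry between the two inclusions. In the forward direction the full family of conditions $\sigma_l^{r}(c)\cdot a=0$ is supplied for free by the shift-invariance of $\mathsf{C}$, whereas in the reverse direction only the instance $r=0$ is needed to land back in $\mathsf{C}^{\perp}$. Confirming that the Euclidean product is symmetric over $S$ and that the hypothesis $m_t \mid t_j$ legitimizes the use of Theorem \ref{thm 3} are the only side conditions to check.
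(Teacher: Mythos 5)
Your proposal is correct and follows essentially the same route as the paper: both inclusions are obtained by using Theorem \ref{thm 3} to translate between the Euclidean conditions $\sigma_l^{r}(u)\cdot v=0$ and the Hermitian condition $u(x)*v(x)=0$, with the shift-invariance of $\mathsf{C}$ supplying the family of conditions in the forward direction. Your write-up is in fact tidier than the paper's (which contains a couple of typographical slips such as concluding ``$v\in\mathsf{C}$'' where $v\in\mathsf{C}^{\perp}$ is meant), but the underlying argument is identical.
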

  \begin{proof}
      Suppose $\mathsf{C}$ is an SGQC code and $u\in \mathsf{C}$, then $\sigma_l^{k}(u)\in \mathsf{C} $. From Theorem \ref{thm 3}, we have $\sigma_l^{k}(u)\cdot v=0 ;0\leq k\leq t_j-1, \text{ for all } j=1,2,\dots,l$. which implies that $\mu(v)\in \mu(\mathsf{C}^{\perp})$. Now, once again from Theorem \ref{thm 3}, if $\sigma_l^{k}(u)\cdot v=0$, then \begin{align*}
          u(x)*v(x)=0, \text{ and}\\
          \mu(u)*\mu(v)=0.
      \end{align*}
      As $\mu(u) \in \mu(\mathsf{C})$, we get $\mu(v)\in \mu(\mathsf{C})^{\perp}$. Therefore, $\mu(\mathsf{C}^{\perp}) \subseteq \mu(\mathsf{C})^{\perp} $.\par
      On the contrary, assume $v(x)=\mu(v) \in \mu(\mathsf{C})^{\perp}$, then there exists $u(x)=\mu(u)$ in $\mu(\mathsf{C})$ such that $u(x)*v(x)=0=\mu(u)*\mu(v)$. From previous Theorem \ref{thm 3}, $\sigma_l^{k}(u)\cdot v=0 $. As $u\in \mathsf{C}$ and $\mathsf{C}$ is an SGQC code, then $\sigma_l^{k}(u)\in \mathsf{C}$. Therefore, $v\in \mathsf{C}$ that means $\mu(v)\in \mu(\mathsf{C}^{\perp})$. Thus, $\mu(\mathsf{C})\subseteq\mu(\mathsf{C})^{\perp}$ and as a result, we get
      $\mu(\mathsf{C})= \mu(\mathsf{C})^{\perp}$.
  \end{proof}
  \begin{corollary} \label{cor 3}
      Suppose $\mathsf{C}$ is an SGQC code of block length $(t_1,t_2,\dots,t_l)$ and length $N$ with index $l$ over $S$. Then $\mathsf{C}$ is self-dual over $S$ under the Euclidean inner product if and only if $\mu(\mathsf{C})$ is self-dual over $S_{t_j} \text{ for all } j=1,2,\dots,l$, under Hermitian inner product.
  \end{corollary}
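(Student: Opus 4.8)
The plan is to derive this statement directly from Theorem \ref{thm 4} together with the fact that $\mu$ is a bijection, as established in Section $3$. The only thing to unwind is the definition of self-duality on each side: over $\textbf{S}$ the code $\mathsf{C}$ is self-dual under the Euclidean inner product precisely when $\mathsf{C}=\mathsf{C}^{\perp}$, while over $\textbf{S}^{\prime}=S_{t_1}\times\dots\times S_{t_l}$ the image $\mu(\mathsf{C})$ is self-dual under the Hermitian inner product precisely when $\mu(\mathsf{C})=\mu(\mathsf{C})^{\perp}$. Thus the task reduces to transferring the equality $\mathsf{C}=\mathsf{C}^{\perp}$ through $\mu$ and back, with the phrase ``self-dual over $S_{t_j}$ for all $j$'' read as self-duality of the single code $\mu(\mathsf{C})$ inside $\textbf{S}^{\prime}$ relative to the Hermitian form, since that form couples all of the components $S_{t_1},\dots,S_{t_l}$ at once.

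First I would handle the forward direction. Assuming $\mathsf{C}=\mathsf{C}^{\perp}$, apply the map $\mu$ to both sides to obtain $\mu(\mathsf{C})=\mu(\mathsf{C}^{\perp})$. By Theorem \ref{thm 4} we have $\mu(\mathsf{C}^{\perp})=\mu(\mathsf{C})^{\perp}$, and combining these gives $\mu(\mathsf{C})=\mu(\mathsf{C})^{\perp}$, so $\mu(\mathsf{C})$ is self-dual with respect to the Hermitian inner product.

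For the converse, suppose $\mu(\mathsf{C})=\mu(\mathsf{C})^{\perp}$. Again invoking Theorem \ref{thm 4} to rewrite the right-hand side as $\mu(\mathsf{C}^{\perp})$, we get $\mu(\mathsf{C})=\mu(\mathsf{C}^{\perp})$. Here the crucial point, and the only step requiring any care, is that $\mu$ is injective: being a one-to-one correspondence it may be cancelled, which yields $\mathsf{C}=\mathsf{C}^{\perp}$, that is, $\mathsf{C}$ is Euclidean self-dual. I do not expect a genuine obstacle, since the substantive content is already packaged inside Theorem \ref{thm 4}; the corollary is essentially a formal consequence of that identity and the bijectivity of $\mu$, and the main thing to pin down is simply that the two notions of self-duality being compared are exactly $\mathsf{C}=\mathsf{C}^{\perp}$ under the Euclidean form and $\mu(\mathsf{C})=\mu(\mathsf{C})^{\perp}$ under the Hermitian form.
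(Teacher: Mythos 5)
Your argument is correct and is exactly the derivation the paper intends: the corollary is stated as an immediate consequence of Theorem \ref{thm 4}, obtained by applying $\mu$ to $\mathsf{C}=\mathsf{C}^{\perp}$ in one direction and using the injectivity of $\mu$ together with $\mu(\mathsf{C}^{\perp})=\mu(\mathsf{C})^{\perp}$ in the other. Your reading of ``self-dual over $S_{t_j}$ for all $j$'' as self-duality of $\mu(\mathsf{C})$ inside $\textbf{S}^{\prime}$ with respect to the Hermitian form is also the intended interpretation.
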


\section{Generator set for SGQC codes}
  In this section, we will extend the results derived in Section $4$ of Gao et al. \cite{Gao16} and discuss the generator set for SGQC codes. We will also derive a BCH type bound on their minimum Hamming distance. Moreover, we will extend the results derived in Section $3$ of Seneviratne and Abualrub \cite{Seneviratne22} to SGQC codes over $S$.  \par A skew generalized quasi-cyclic code $\mathsf{C}$ of block length $(t_1,t_2,\dots,t_l)$ and length $N$ with index $l$ over $S$ is called $\rho$-generator code if $\rho$ is the smallest positive integer for which there are codewords $a_j(x)=(a_{j1}(x),a_{j2}(x),\dots,a_{jl}(x))$, where $1\leq j\leq \rho$, in $\mathsf{C}$ such that $\mathsf{C}=s_1(x)a_1(x)+s_2(x)a_2(x)+\dots+s_{\rho}(x)a_{\rho}(x)$ for some $s_1(x),s_2(x),\dots,s_{\rho}(x)$ in $S[x;\theta_t]$. The set $\{a_1(x),a_2(x),\dots,a_{\rho}(x)\}$ is called a generating set for SGQC code $\mathsf{C}$.
\subsection{1-Generator polynomial of SGQC codes over $S$}
 \begin{definition}
     If an SGQC code $\mathsf{C}$ generated by a single element $e(x)=(e_1(x),e_2(x),\dots,e_l(x))$, where $e_i(x) \in S_{t_i}, ~ \text{ for all } ~ i=1,2,\dots,l $, then $\mathsf{C}$ is called a 1-generator skew generalized quasi-cyclic code. Clearly, it has the form $\mathsf{C}= \{a(x)e(x)=(a(x)e_1(x),a(x)e_2(x),\dots,a(x)e_l(x)):a(x)\in S[x;\theta_t]\} $. The monic polynomial $f(x)$ of minimum degree satisfying $e(x)f(x)=0$ is called the parity check polynomial of $\mathsf{C}$.
 \end{definition}
 \par Suppose $\mathsf{C}$ is an SGQC code of $1$-generator of block length $(t_1,t_2,\dots,t_l)$ and length $N=t_1+t_2+\dots+t_l$ with the generator polynomial $e(x)=(e_1(x),e_2(x),\dots,e_l(x)) $, where $e_i(x)\in S_{t_i}=\dfrac{S[x;\theta_t]}{\langle x^{t_i}-1 \rangle},and ~i=1,2,\dots, l$. Define the map \begin{align*}
     \Psi_i:&\textbf{S}^{\prime} \rightarrow S_{t_i} ,\text{ given by }\\&
 (a_1(x),a_2(x),\dots,a_l(x)) \rightarrow a_i(x).
 \end{align*} It is a well-defined module homomorphism. Also, $\Psi_i(\mathsf{C})=\mathsf{C}_i$. Since $\mathsf{C}$ is a $1$-generator SGQC code over $S$, $\mathsf{C}$ is a left $S[x;\theta_t]$-submodule of $\textbf{S}^{\prime}$. Hence, $\mathsf{C}_i$ is a left $S$-submodule of $S_{t_i}$. i.e. $\mathsf{C}_i$ is a skew cyclic code of length $t_i$. By Theorem \ref{remII 2}, we have $\mathsf{C}_i=\langle e_i(x) \rangle$, where $e_i(x)=(1-v)e^1_{i}(x)+ve^2_i(x)$, and $e_i(x)$ is a right divisor of $x^{t_i}-1$, such that $x^{t_i}-1=e'_i(x)e_i(x) $ for $1 \leq i \leq l$. And by the Lemma \ref{lem 1}, any generator of $\mathsf{C}_i$ has the form $\langle q_i(x)e_i(x) \rangle$, where $e'_i(x) ~and ~ q_i(x)$ are right coprime.
 From the above discussion, we summarize this in the following theorem.
 \begin{theorem}\label{thm 5}
     Suppose $\mathsf{C}$ is a $1$-generator SGQC code of block length $(t_1,t_2,\dots,t_l)$ and length $N$ with index $l$ over $S$, then $1$-generator polynomial of $\mathsf{C}$ can be taken of the form \begin{align*}
         V(x)=(q_1(x)e_1(x),q_2(x)e_2(x),\dots,q_l(x)e_l(x)),
     \end{align*} where $e_i(x)$ is a right divisor of $x^{t_i}-1 $ and $\text{gcrd}(q_i(x),(x^{t_i}-1)/{e_i(x)})=1$.
 \end{theorem}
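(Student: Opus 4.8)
The plan is to establish Theorem~\ref{thm 5} by decomposing the $1$-generator SGQC code componentwise via the projection maps $\Psi_i$ and then invoking the already-developed structure theory of skew cyclic codes over $S$. First I would recall that $\mathsf{C}$ is a left $S[x;\theta_t]$-submodule of $\textbf{S}^{\prime}=S_{t_1}\times\dots\times S_{t_l}$, so for each fixed $i$ the image $\mathsf{C}_i:=\Psi_i(\mathsf{C})$ is a left $S[x;\theta_t]$-submodule of $S_{t_i}$; by Lemma~\ref{lem 2} (identifying submodules with codes) this means $\mathsf{C}_i$ is itself a skew cyclic code of length $t_i$ over $S$. Then Theorem~\ref{remII 2} applies to each $\mathsf{C}_i$, yielding a generator $e_i(x)=(1-v)e_i^1(x)+v e_i^2(x)$ that is a right divisor of $x^{t_i}-1$, say $x^{t_i}-1=e_i'(x)e_i(x)$.

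The next step is to upgrade each raw generator $e_i(x)$ to the claimed coprime form. Here I would apply Lemma~\ref{lem 1} to each component: since $\mathsf{C}_i=\langle e_i(x)\rangle$ with $x^{t_i}-1=e_i'(x)e_i(x)$, any generator of $\mathsf{C}_i$ can be rewritten as $\langle q_i(x)e_i(x)\rangle$ where $q_i(x)$ and $e_i'(x)=(x^{t_i}-1)/e_i(x)$ are right coprime, i.e.\ $\mathrm{gcrd}(q_i(x),(x^{t_i}-1)/e_i(x))=1$. Assembling these across $i=1,\dots,l$ produces the candidate generator $V(x)=(q_1(x)e_1(x),\dots,q_l(x)e_l(x))$ with exactly the asserted divisibility and coprimality conditions on each coordinate.

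The remaining and genuinely substantive step is to verify that a single $V(x)$ of this form actually generates all of $\mathsf{C}$ as a $1$-generator code, rather than merely generating each projection $\mathsf{C}_i$ separately. The point is that $\mathsf{C}\subseteq \prod_i \mathsf{C}_i$ need not be an equality, so one cannot simply choose the $q_i(x)e_i(x)$ independently; they must be realized as the components of one codeword. Since $\mathsf{C}$ is given to be $1$-generator, there is a generating codeword $e(x)=(e_1(x),\dots,e_l(x))$ with $\Psi_i(e(x))$ generating $\mathsf{C}_i$, and the argument is that replacing $e_i(x)$ by its reduced coprime representative $q_i(x)e_i(x)$ in each slot preserves the submodule $\langle e(x)\rangle$. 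I expect this compatibility — that the coordinatewise normalizations via Lemma~\ref{lem 1} can be performed simultaneously on the components of the one generator without changing the code $\langle V(x)\rangle=\mathsf{C}$ — to be the main obstacle, since it requires that the left multiplier $q_i(x)$ achieving the coprime form in coordinate $i$ be compatible with a single global left multiplier.

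To handle this obstacle cleanly, I would argue directly from the definition $\mathsf{C}=\{a(x)e(x):a(x)\in S[x;\theta_t]\}$, track how $a(x)$ acts coordinatewise as $a(x)e_i(x)$ in $S_{t_i}$, and show that the reduction guaranteed by Lemma~\ref{lem 1} in each $S_{t_i}$ respects this common action because the decomposition $S=(1-v)F_q\oplus vF_q$ splits the whole problem into two field-coordinate problems over $F_q$ where the classical skew cyclic generator theory is unambiguous. Pulling the two $F_q$-components back together through $e_i(x)=(1-v)e_i^1(x)+v e_i^2(x)$ then gives the stated form over $S$, completing the proof.
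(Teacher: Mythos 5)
Your proposal follows essentially the same route as the paper: project $\mathsf{C}$ onto each component via $\Psi_i$, identify $\Psi_i(\mathsf{C})$ as a skew cyclic code of length $t_i$ over $S$, apply Theorem~\ref{remII 2} to obtain a right divisor $e_i(x)$ of $x^{t_i}-1$, and then use Lemma~\ref{lem 1} to rewrite each component generator as $q_i(x)e_i(x)$ with the stated coprimality. You are in fact more careful than the paper, which merely ``summarizes the discussion'' into the theorem statement and never addresses the compatibility issue you flag, namely that the coordinatewise normalizations must be realized by a single generating codeword of $\mathsf{C}$, since $\mathsf{C}$ may be a proper submodule of $\prod_{i}\Psi_i(\mathsf{C})$.
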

 In \cite{Gao16}, Theorem $4.2$ discusses the parity check polynomial for $1$-generator skew generalized quasi-cyclic code over the field $F_q$. We extend this over the ring $S$.
 \begin{theorem}\label{thm 6}
     Let $\mathsf{C}$ be a $1$-generartor SGQC code of block length $(t_1,t_2,\dots,t_l)$ and length $N$ with index $l$ , generated by $e(x)=(e_1(x),e_2(x),\dots,e_l(x)) \in \textbf{S}^{\prime} $. Assume $f(x)$ is the parity check polynomial of the $1$-generator SGQC code $\mathsf{C}$. Then
     \begin{enumerate}
        \item $f(x)=\text{lclm}\{ \dfrac{x^{t_i}-1}{gcld(e_{i}(x),x^{t_i}-1}\}_{i=1,2,\dots,l}  $.
         \item  As a submodule over $S$, dimension of $\mathsf{C}$ is $\deg (f(x))$.
     \end{enumerate}

 \end{theorem}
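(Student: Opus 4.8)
The plan is to present $\mathsf{C}$ as the image of the left $S[x;\theta_t]$-module homomorphism $\Phi\colon S[x;\theta_t]\to\textbf{S}^{\prime}$ given by $\Phi(a(x))=a(x)e(x)=(a(x)e_1(x),\dots,a(x)e_l(x))$. With this description the defining condition $e(x)f(x)=0$ says exactly that the parity check polynomial $f(x)$ is the monic generator of least degree of the \emph{right} annihilator $\mathrm{Ann}_r(e)=\{h(x)\in S[x;\theta_t]:e(x)h(x)=0\}$; since right multiplication is right-linear, $\mathrm{Ann}_r(e)$ is a right ideal, and because $e(x)h(x)=0$ is equivalent to $e_i(x)h(x)=0$ in $S_{t_i}$ for each $i$, we have $\mathrm{Ann}_r(e)=\bigcap_{i=1}^{l}\mathrm{Ann}_r(e_i)$.

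For statement (1) the crux is to identify each component annihilator. Writing $g_i(x)=\operatorname{gcld}(e_i(x),x^{t_i}-1)$, I would factor $x^{t_i}-1=g_i(x)f_i(x)$ and $e_i(x)=g_i(x)\hat e_i(x)$, where maximality of the gcld makes the cofactors coprime. Passing to $S_{t_i}$ (a genuine quotient ring once $m_t\mid t_i$, so that $x^{t_i}-1$ is a two-sided invariant polynomial), left cancellation in the domain reduces $e_i(x)h(x)\equiv 0$ to a divisibility $f_i(x)\mid \hat e_i(x)h(x)$, and the coprimeness of $\hat e_i,f_i$ then forces $f_i(x)$ to left-divide $h(x)$; this is the noncommutative analogue of the field-level computation behind Theorem~$4.2$ of \cite{Gao16}, extended to $S$ through the idempotent splitting $\mathsf{C}=(1-v)\mathsf{C}_1\oplus v\mathsf{C}_2$ of Theorem~\ref{thm 1}. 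Hence $\mathrm{Ann}_r(e_i)=f_i(x)S[x;\theta_t]$ with $f_i(x)=(x^{t_i}-1)/\operatorname{gcld}(e_i(x),x^{t_i}-1)$. An element lies in the intersection $\bigcap_i f_i(x)S[x;\theta_t]$ iff every $f_i(x)$ left-divides it, which by definition of the least common left multiple is the same as being left-divisible by $\operatorname{lclm}\{f_i(x)\}$; comparing monic minimal generators yields $f(x)=\operatorname{lclm}\{f_i(x)\}_{i=1}^{l}$.

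For statement (2) I would use $\Phi$ once more: $\mathsf{C}\cong S[x;\theta_t]/\ker\Phi$, where $\ker\Phi=\{a(x):a(x)e(x)=0\}$ is the \emph{left} annihilator, a left ideal of the form $S[x;\theta_t]h(x)$ for a monic $h(x)$. Applying the Right Division Algorithm to reduce an arbitrary $a(x)$ modulo $S[x;\theta_t]h(x)$ gives unique coset representatives of degree $<\deg h(x)$, so the dimension of $\mathsf{C}$ as a submodule over $S$ equals $\deg h(x)$. It then remains to check $\deg h(x)=\deg f(x)$, which I would deduce from the degree identity $\deg\operatorname{gcld}(e_i,x^{t_i}-1)=\deg\operatorname{gcrd}(e_i,x^{t_i}-1)$: since $x^{t_i}-1$ is invariant when $m_t\mid t_i$, its lattices of left and right divisors are degree-preservingly anti-isomorphic, and propagating this through the lclm gives $\deg h=\deg f$ and therefore $\dim_S\mathsf{C}=\deg f(x)$.

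The main obstacle is the two-sidedness of divisibility in the noncommutative ring $S[x;\theta_t]$: the parity check polynomial is naturally a right-annihilator datum whereas the dimension count is governed by the left annihilator, and the delicate points are (i) turning $f_i\mid \hat e_i h$ into $f_i\mid h$ using coprimeness on the correct side, and (ii) the equality of left- and right-divisor degrees of $x^{t_i}-1$ needed for $\deg h=\deg f$. Both are best controlled by reducing to $F_q[x;\theta_t]$ via the central isomorphism $S[x;\theta_t]\cong F_q[x;\theta_t]\times F_q[x;\theta_t]$ induced by $1-v$ and $v$, so that the gcld, lclm and degree bookkeeping are carried out over a field and reassembled coordinatewise into the monic generator over $S$.
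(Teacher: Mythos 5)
Your proposal is correct and follows essentially the same route as the paper: project $\mathsf{C}$ onto its components via the coordinate homomorphisms, identify each component's check polynomial as $(x^{t_i}-1)/\mathrm{gcld}(e_i(x),x^{t_i}-1)$, take the lclm for part (1), and obtain part (2) from the evaluation map $\beta(x)\mapsto\beta(x)e(x)$ whose kernel is the principal ideal generated by $f(x)$. The only difference is that you explicitly track the left-versus-right annihilator issue created by the paper's notation $e(x)f(x)=0$ (the module action being left multiplication) and the need for $x^{t_i}-1$ to be invariant; the paper's proof silently works with left multiplication throughout, so your extra reconciliation step is a refinement rather than a different argument.
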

 \begin{proof}
     Let $\textbf{S}^{\prime}=S_{t_1}\times S_{t_2} \times \dots \times S_{t_l} $ be a $S[x;\theta_t]$-module under componentwise addition and scalar multiplication as defined in Equation (\ref{eq 2}). For $1\leq i \leq l$, define well defined module-homomorphism\begin{align*}
         \Psi_i:\textbf{S}^{\prime}\rightarrow S_{t_i}, \text{ given by}\\
         \psi_i(a_1(x),a_2(x),\dots,a_l(x))&=a_i(x).
     \end{align*}
     It implies that $e_i(x)=e^{1}_i(x)+ve^{2}_i(x) \in \psi_i(\mathsf{C})$, since $e(x)\in \mathsf{C}$ implies $\alpha(x)\cdot e(x)\in \mathsf{C}$. So, $\Psi_i(\mathsf{C})$ is a left-submodule of $S_{t_i}$ and, hence a skew cyclic code of length $t_i$ in $\dfrac{S[x;\theta_t]}{\langle x^{t_i}-1\rangle}$. It has parity check polynomial $f_i(x)=\dfrac{x^{t_i}-1}{gcld(e_i(x),x^{t_i}-1)}$. If $\alpha(x) \cdot e_i(x)=0~\text{where}~\alpha(x)=\alpha^{1}(x)+v\alpha^{2}(x) ~\text{and}~e_i(x)=e^{1}_i(x)+ve^{2}_(x)$, then $f_i(x)/\alpha(x)$; it's against the assumption for $f_i(x)$ has the least possible degree. Therefore, statement $1$ follows the assumption that $f(x)$ has the minimum possible degree.  \\
     In favour of statement $2$, define a map $\zeta: S[x;\theta_t]\rightarrow \textbf{S}^{\prime}$ defined by \begin{align*}
         \beta(x)\rightarrow \beta(x)(e_1(x),e_2(x),\dots,e_l(x)).
     \end{align*}
     It is a $S[x;\theta_t]$-module homomorphism with kernel $(f(x))$. Hence, $\mathsf{C}\cong \dfrac{S[x;\theta_t]}{\langle f(x)\rangle}$ and $\text{dim}\dfrac{S[x;\theta_t]}{\langle f(x)\rangle}$ $=$ $\deg(f(x))$. Thus, $\mathsf{C}$ has dimension of $\deg (f(x))$.\qed

 \end{proof}
 \begin{corollary} \label{cor 4}
     Suppose $\mathsf{C}$ is an SGQC code of block length $(t_1,t_2,\dots,t_l)$ and length $N$ with generator $c(x)=(c_1(x),c_2(x),\dots,c_l(x))$. Suppose $f_i(x):=gcld(f(x),x^{t_i}-1) $. With the notation of Theorem \ref{thm 6}, if $\Psi_i(\mathsf{C})$ has generator polynomial $u_i(x)$, then for some polynomial $v_i(x)$ dividing by $f_i(x)$, we have \begin{align}
         \dfrac{(x^{t_i}-1)}{f_i(x)}\cdot v_i(x)=u_i(x).
     \end{align}
 \end{corollary}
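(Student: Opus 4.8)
The plan is to reduce the statement to the single-component skew cyclic structure already established in the proof of Theorem \ref{thm 6}, and then chase divisibilities among $f(x)$, $x^{t_i}-1$, the parity-check polynomial of $\Psi_i(\mathsf{C})$, and its generator $u_i(x)$, modelling the argument on the commutative $1$-generator GQC case and afterwards repairing it for the skew ring. To avoid a notational clash I rename the parity-check polynomial of $\Psi_i(\mathsf{C})$ (called $f_i$ in the proof of Theorem \ref{thm 6}) as $h_i(x):=\dfrac{x^{t_i}-1}{\mathrm{gcld}(e_i(x),x^{t_i}-1)}$, reserving $f_i(x)=\mathrm{gcld}(f(x),x^{t_i}-1)$ for the corollary's polynomial.

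First I would record the three structural facts I need. By Theorem \ref{remII 2} applied to the skew cyclic code $\Psi_i(\mathsf{C})$ of length $t_i$, its generator $u_i(x)$ is a right divisor of $x^{t_i}-1$ with complementary left factor $h_i(x)$, so $x^{t_i}-1=h_i(x)u_i(x)$ and $\deg u_i=t_i-\deg h_i$. Next, since $f(x)=\mathrm{lclm}\{h_j(x)\}_{j=1}^{l}$, the defining property of the least common left multiple makes each $h_j(x)$, and in particular $h_i(x)$, a right divisor of $f(x)$. Finally, because $f_i(x)=\mathrm{gcld}(f(x),x^{t_i}-1)$ is a left divisor of $x^{t_i}-1$, the complementary right factor $\dfrac{x^{t_i}-1}{f_i(x)}$ is a well-defined monic polynomial, and this is exactly the factor standing on the left-hand side of the asserted identity.

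With these in hand the commutative skeleton is short: over a commutative ring one has $h_i\mid f$ and $h_i\mid(x^{t_i}-1)$, hence $h_i\mid f_i=\gcd(f,x^{t_i}-1)$; writing $f_i=h_i v_i$ gives $v_i\mid f_i$ and $\dfrac{x^{t_i}-1}{f_i}\cdot v_i=\dfrac{x^{t_i}-1}{h_i}=u_i$, which is the claim. I would therefore try to run this computation, taking $v_i(x)$ to be the cofactor of $h_i(x)$ in $f_i(x)$ and verifying $\dfrac{x^{t_i}-1}{f_i(x)}\,v_i(x)=u_i(x)$ by cancelling $h_i(x)$ against $x^{t_i}-1=h_i(x)u_i(x)$.

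The main obstacle is that this cancellation uses two-sided divisibility and ignores the order of factors, whereas over $S[x;\theta_t]$ the relevant divisibilities sit on opposite sides: $h_i(x)$ is a \emph{left} factor of $x^{t_i}-1$ but only a \emph{right} divisor of $f(x)$, while $f_i(x)$ is a \emph{left} divisor of both, so one cannot directly conclude $h_i\mid f_i$ nor freely reorder the product. To pass the cancellation through I would exploit that $S$ splits through its idempotents as $S=(1-v)F_q\oplus vF_q$, giving $S[x;\theta_t]\cong F_q[x;\theta_t]\times F_q[x;\theta_t]$ and, via $\mathsf{C}=(1-v)\mathsf{C}_1\oplus v\mathsf{C}_2$, reducing every divisibility to the two field components, where the underlying identity is precisely the field-level analogue extended from Theorem $4.2$ of \cite{Gao16}. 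Within each component the residual left/right and ordering bookkeeping is controlled by the Right Division Algorithm (Theorem $2.4$ of \cite{Ozen19}): since $x^{t_i}-1$ is monic, the one-sided factors $h_i$, $u_i$, $f_i$ are monic, their quotients are monic and unique, and this uniqueness lets the sided factorisations compose as in the commutative case. Recombining the two components with the idempotents $1-v$ and $v$ then produces $v_i(x)$ over $S$ dividing $f_i(x)$ and satisfying $\dfrac{x^{t_i}-1}{f_i(x)}\,v_i(x)=u_i(x)$, completing the proof.
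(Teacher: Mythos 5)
Your route is genuinely different from the paper's, and it has a gap that your own ``repair'' does not close. The paper's proof never compares the componentwise parity-check polynomials with $f_i(x)=\mathrm{gcld}(f(x),x^{t_i}-1)$ at all: it starts from the annihilation relation $f(x)c(x)=\mathbf{0}$, reads off $f(x)\cdot c_i(x)\equiv 0 \pmod{x^{t_i}-1}$, and concludes directly that $c_i(x)\in\bigl\langle (x^{t_i}-1)/f_i(x)\bigr\rangle$, so that the generator $u_i(x)$ of $\Psi_i(\mathsf{C})$ is a multiple of $(x^{t_i}-1)/f_i(x)$; the divisibility $v_i(x)\mid f_i(x)$ then falls out of a degree/cofactor comparison. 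You instead go through $f(x)=\mathrm{lclm}\{h_j(x)\}$ and $x^{t_i}-1=h_i(x)u_i(x)$ and try to show $h_i(x)$ divides $f_i(x)$. That is exactly where the argument breaks: the lclm property gives you $h_i(x)$ as a \emph{right} divisor of $f(x)$, while $h_i(x)$ is a \emph{left} factor of $x^{t_i}-1$ and $f_i(x)$ is by definition a greatest common \emph{left} divisor. To place $h_i(x)$ inside $f_i(x)$ you need $h_i(x)$ to be a common left divisor of both $f(x)$ and $x^{t_i}-1$, and nothing you have written supplies the missing left-divisibility of $f(x)$ by $h_i(x)$.

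The two devices you invoke to bridge this do not do the job. Splitting $S[x;\theta_t]\cong F_q[x;\theta_t]\times F_q[x;\theta_t]$ through the idempotents $1-v$ and $v$ removes the zero divisors but leaves each component a genuinely noncommutative ring (the Frobenius $\theta_t$ still acts nontrivially on $F_q$), so the left/right mismatch survives intact in each component. And the uniqueness of quotient and remainder in the Right Division Algorithm says nothing about transporting a right divisor to a left divisor or about reordering a product $h_i(x)v_i(x)$ into $v_i(x)h_i(x)$; monicity of the factors does not make them commute. (A statement of the kind you want -- that left and right divisors of $x^{t_i}-1$ coincide -- is available only when $x^{t_i}-1$ is central, i.e.\ when the order of $\theta_t$ divides $t_i$, and even then only in a domain; the paper explicitly works without that restriction.) The fix is to abandon the lclm route and argue as the paper does, from $f(x)c_i(x)\equiv 0\pmod{x^{t_i}-1}$ together with the coprimality of $f(x)/f_i(x)$ and $(x^{t_i}-1)/f_i(x)$, which keeps every divisibility on one consistent side.
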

 \begin{proof}
     As f(x) is the parity-check polynomial of $\mathsf{C}$, then\begin{align}
         \textbf{0}=f(x)c(x)=f(x)(c_1(x),c_2(x),\dots,c_l(x)).
     \end{align}
    This implies that $f(x)\cdot c_i(x)=0 \pmod{x^{t_i}-1},\text{ for all }~1\leq i\leq l$. Therefore, $c_i(x)\in
\langle\dfrac{x^{t_i}-1}{f_i(x)}\rangle$, hence $ \psi_i(\mathsf{C})\subseteq \langle\dfrac{x^{t_i}-1}{f_i(x)} \rangle$. This implies $(3)$ if $u_i(x)$ is the generator polynomial. It is deduced from $f_i(x)\cdot u_i(x)=v_i(x)(x^{t_i}-1)$ and $f_i(x)\cdot u_i(x)=x^{t_i}-1$ that $v_i(x)/f_i(x)$. \qed
 \end{proof}
 In the following examples, we use Theorems \ref{thm 5} and \ref{thm 6}.
\begin{example}
    Let $S[x:\theta]$ be the skew polynomial ring under the Frobenius automorphism $\theta$ over $S=F_4+vF_4$.
    Consider the polynomials $e_1(x)=x^2+1$ a right divisor of $x^4-1$, and $e_2(x)=x^3+1$ a right divisor of $x^6-1$ in $S[x:\theta]$. Let $\mathsf{C}$ be a $1$-generator skew generalized quasi-cyclic code of block length $(4,6)$ and length $10$ with index $2$, generated by $e(x)=(e_1(x), e_2(x))$. By using theorem $6$, we compute \begin{align*}
        f(x)=&lclm \begin{Bmatrix}
            x^2+1, x^3+1
        \end{Bmatrix},\\
        =& x^4+x^3+x+1.
    \end{align*} The dimension of $\mathsf{C}$ is $4$, and its generator matrix over $S$ is given as \[ \left(\begin{array}{cccccccccc}
        1& 0&1&0&1&0&0&1&0&0 \\
        0&1&0&1&0&1&0&0&1&0\\
        1&0&1&0&0&0&1&0&0&1\\
        0&1&0&1&1&0&0&1&0&0
   \end{array} \right).\] Its Gray image is $[20,4,8]$ quasi-cyclic code of degree $10$ linear code over $F_4$.
\end{example}
\begin{example}
    Consider $S[x:\theta]$, where $\theta$ is a Frobenius automorphism over $S=F_9+vF_9$. Take the polynomials  $e_1(x)=x^2+2$ as the right divisor of $x^4-1$, and  $e_2(x)=x^6+2x^4+x^2+2$ a right divisor of $x^8-1$ in $S[x:\theta]$. Let $\mathsf{C}$ be a $1$-generator SGQC code of block length $(4,8)$ and length $12$ with index $2$, generated by $e(x)=(e_1(x), e_2(x))$. Using Theorem $6$, we obtain $f(x)=x^2+1$. The dimension of $\mathsf{C}$ is $2$, and its generator matrix over $F_9+vF_9$ is  given as \[ \left(
\begin{array}{cccccccccccc}
2 & 0 & 1 & 2 & 0 & 1 & 2 & 0 & 1 & 2 & 0 & 1 \\
0 & 2 & 0 & 2 & 0 & 2 & 0 & 2 & 0 & 2 & 0 & 2
\end{array} \right).
\]
 Its Gray image is $[24,2,12]$ quasi-cyclic code of degree $12$ linear code over $F_9$.
\end{example}
 In \cite{Gao16}, the authors provide the minimum Hamming distance bound for a $1$-generator SGQC code in the case where $|\langle \theta_t \rangle|=m_t$ divides each $t_i$ over $F_q$. Using Proposition $1$ along with Corollary $1$ of \cite{Gursoy14} and Theorem \ref{thm 2}, we give the following bound for $1$-generator skew generalized quasi-cyclic codes over $S$.

\begin{theorem} \label{thm 7}
    Let $\mathsf{C}$ be a $1$-generator skew generalized quasi-cyclic code of block length $(t_1,\dots,t_l)$ and length $N$ with index $l$ over $S$, such that $\mathsf{C}=(1-v)\mathsf{C}_1\oplus v\mathsf{C}_2$ where $\mathsf{C}_1$ and $\mathsf{C}_2$ are $1$-generator SGQC codes over $F_q$ with generator polynomial $e(x)=(e_1(x),\dots,e_l(x))$ and  $e^{\prime}(x)=(e^{\prime}_1(x),\dots,e^{\prime}_l(x))$ where $e_i(x)$ and $e^{\prime}_i(x)$ are in $ \dfrac{F_q[x;\theta_t]}{\langle x^{t_1}-1\rangle}\times \dfrac{F_q[x;\theta_t]}{\langle x^{t_2}-1\rangle}\times \dots\times \dfrac{F_q[x;\theta_t]}{\langle x^{t_l}-1\rangle}$. If $d_1$ and $d_2$ are designed distances of $\mathsf{C}_1$ and $\mathsf{C}_2$ respectively, then designed distance of $\mathsf{C}$ $d(\mathsf{C})\geq \text{min}\{d_1, d_2\}$.
\end{theorem}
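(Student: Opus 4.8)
The plan is to pass through the Gray map $\phi$ and reduce the assertion to a statement about the minimum Hamming distances of the component codes $\mathsf{C}_1$ and $\mathsf{C}_2$ over $F_q$. First I would take an arbitrary nonzero codeword $c\in\mathsf{C}$ and use the decomposition $\mathsf{C}=(1-v)\mathsf{C}_1\oplus v\mathsf{C}_2$ (Corollary $1$ of \cite{Gursoy14}, recalled in Section \ref{sec 2}) to write $c=(1-v)c_1+vc_2$ with $c_1\in\mathsf{C}_1$ and $c_2\in\mathsf{C}_2$ uniquely determined. The key computation is to evaluate the Gray image of $c$ coordinatewise: writing each coordinate in the form $(1-v)\alpha+v\beta=\alpha+v(\beta-\alpha)$ and applying $\phi(\alpha+v(\beta-\alpha))=(\alpha,\alpha+(\beta-\alpha))=(\alpha,\beta)$, I obtain $\phi(c)=(c_1,c_2)$, the concatenation of the two $F_q$-words. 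Since $\phi$ is an isometry from $S^N$ (Lee distance) to $F_q^{2N}$ (Hamming distance), this yields the additive splitting $w_L(c)=w_H(c_1)+w_H(c_2)$.

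With this identity in hand, the bound follows from a short case analysis using that $d_1,d_2$ are designed distances, so that $w_H(c_i)\geq d_i$ for every \emph{nonzero} component $c_i\in\mathsf{C}_i$. If exactly one of $c_1,c_2$ is zero, say $c_2=0$ and $c_1\neq 0$, then $w_L(c)=w_H(c_1)\geq d_1\geq\min\{d_1,d_2\}$, and symmetrically in the other case. If both $c_1$ and $c_2$ are nonzero, then $w_L(c)=w_H(c_1)+w_H(c_2)\geq d_1+d_2\geq\min\{d_1,d_2\}$. Since $c$ is nonzero iff $(c_1,c_2)\neq(0,0)$, taking the minimum over all nonzero $c$ gives $d(\mathsf{C})\geq\min\{d_1,d_2\}$.

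Equivalently, one may phrase this structurally: the computation $\phi(c)=(c_1,c_2)$ identifies the Gray image $\phi(\mathsf{C})$ with the direct product $\mathsf{C}_1\otimes\mathsf{C}_2$ in the notation of Definition $1$, whose minimum Hamming distance is precisely $\min\{d(\mathsf{C}_1),d(\mathsf{C}_2)\}$; combined with the designed-distance inequalities $d(\mathsf{C}_i)\geq d_i$ this delivers the claim at once. Here Theorem \ref{thm 1} guarantees that $\mathsf{C}_1$ and $\mathsf{C}_2$ are genuine $1$-generator SGQC codes over $F_q$, so the BCH-type designed distances $d_1,d_2$ (obtained via Proposition $1$ of \cite{Gursoy14}) apply to them legitimately, while Theorem \ref{thm 2} supports the consistency of the overall SGQC framework.

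I expect the only genuinely delicate point to be the verification that the Gray image respects the decomposition \emph{exactly}, that is, that $\phi(c)=(c_1,c_2)$ with no cross terms mixing the two blocks; once this is confirmed, everything else is the elementary observation that the weight of a concatenation adds. A secondary point to handle carefully is that $d_1$ and $d_2$ are lower bounds for the true minimum distances of $\mathsf{C}_1,\mathsf{C}_2$, so that the weight inequalities $w_H(c_i)\geq d_i$ hold for every nonzero component and the final minimization is valid.
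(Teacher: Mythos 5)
Your proof is correct and follows exactly the route the paper intends: the paper states Theorem \ref{thm 7} without a written proof, appealing to Proposition $1$ of \cite{Gursoy14}, which is precisely your identification $\phi(\mathsf{C})=\mathsf{C}_1\otimes\mathsf{C}_2$ via $\phi\bigl((1-v)c_1+vc_2\bigr)=(c_1,c_2)$, after which the bound $\min\{d_1,d_2\}$ is the elementary case analysis you give. Your computation of the Gray image and the weight splitting $w_L(c)=w_H(c_1)+w_H(c_2)$ are both verified correctly, so the proposal supplies a complete argument for the step the paper leaves implicit.
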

We review Theorem $7$ of \cite{Gursoy14} in the below Theorem \ref{thm 8}, which gives the number of skew cyclic codes of a certain length. While following its approach, we introduce different notations suitable for our result.
\begin{theorem} \label{thm 8}
    Let $(t_i,m_t)=1$ and $x^{t_i}-1=\prod_{j=1}^{r}p_{ij}^{s_ij}(x)$ where $p_{ij}(x)\in F_q[x;\theta_t]$ is irreducible polynomial. Then the number of skew cyclic codes of length $n$ over $S$ is $\prod_{j=1}^{r}(s_{ij}+1)^2$.
\end{theorem}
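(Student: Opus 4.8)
The plan is to reduce the count over the ring $S$ to a count over the field $F_q$ via the canonical idempotent decomposition, and then to count skew cyclic codes over $F_q$ by counting monic right divisors of $x^{t_i}-1$. First I would invoke Theorem \ref{remII 2}: every skew cyclic code $\mathsf{C}$ of length $t_i$ over $S$ can be written uniquely as $\mathsf{C}=(1-v)\mathsf{C}_1\oplus v\mathsf{C}_2$ with $\mathsf{C}_1,\mathsf{C}_2$ skew cyclic codes over $F_q$, and conversely every such ordered pair $(\mathsf{C}_1,\mathsf{C}_2)$ yields a skew cyclic code over $S$ through this formula. Since $1-v$ and $v$ are orthogonal idempotents summing to $1$, this assignment is a bijection between skew cyclic codes over $S$ and ordered pairs of skew cyclic codes over $F_q$. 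Consequently the total number of skew cyclic codes over $S$ equals the square of the number of skew cyclic codes of the same length over $F_q$.

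Next I would count the skew cyclic codes over $F_q$. By Corollary \ref{remII 1} together with Theorem \ref{remII 2}, each skew cyclic code of length $t_i$ over $F_q$ is principally generated, and its generator may be taken to be a monic right divisor of $x^{t_i}-1$ in $F_q[x;\theta_t]$, with distinct right divisors yielding distinct codes. So the count over $F_q$ equals the number of monic right divisors of $x^{t_i}-1$.

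Then comes the arithmetic heart of the argument. Using the hypothesis $(t_i,m_t)=1$, the factorization $x^{t_i}-1=\prod_{j=1}^{r}p_{ij}^{s_{ij}}(x)$ into irreducible factors is rigid enough that every monic right divisor has the form $\prod_{j=1}^{r}p_{ij}^{e_{ij}}(x)$ with $0\le e_{ij}\le s_{ij}$, and different exponent tuples $(e_{i1},\dots,e_{ir})$ give genuinely different divisors. The coprimality $(t_i,m_t)=1$ is precisely what forces the would-be noncommutative ambiguity of factorization to collapse to the commutative divisor count, so the number of monic right divisors is $\prod_{j=1}^{r}(s_{ij}+1)$, mirroring the classical divisor count for $x^{t_i}-1$ in the commutative polynomial ring. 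Combining this with the squaring from the first step gives the number of skew cyclic codes over $S$ as $\bigl(\prod_{j=1}^{r}(s_{ij}+1)\bigr)^{2}=\prod_{j=1}^{r}(s_{ij}+1)^{2}$, as claimed.

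I expect the main obstacle to be this last step: justifying that under $(t_i,m_t)=1$ the monic right divisors of $x^{t_i}-1$ are counted exactly as in the commutative case. In the skew polynomial ring factorizations into irreducibles need not be unique, so one must argue that the coprimality condition makes $x^{t_i}-1$ and its factors sufficiently central (or pairwise bounded and coprime) that each right divisor picks up a well-defined multiplicity of each $p_{ij}(x)$. This is exactly where I would follow the approach of Theorem $7$ of \cite{Gursoy14}, adapting its centrality and bounded-polynomial argument to the present notation.
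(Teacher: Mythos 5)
The first thing to note is that the paper contains no proof of this statement: Theorem~\ref{thm 8} is explicitly presented as a restatement, in the paper's notation, of Theorem~7 of \cite{Gursoy14}, so there is no in-paper argument to compare yours against line by line. Your two reductions are sound and are exactly the ones the cited source relies on: the correspondence $\mathsf{C}\leftrightarrow(\mathsf{C}_1,\mathsf{C}_2)$ via $\mathsf{C}=(1-v)\mathsf{C}_1\oplus v\mathsf{C}_2$ is a bijection between skew cyclic codes over $S$ and ordered pairs of skew cyclic codes over $F_q$ (the $l=1$ case of Theorem~\ref{thm 1}), so the count over $S$ is the square of the count over $F_q$; and by Theorem~\ref{remII 2} and Corollary~\ref{remII 1} each skew cyclic code over $F_q$ is determined by a unique monic right divisor of $x^{t_i}-1$ in $F_q[x;\theta_t]$, so the field count is the number of such divisors.

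The genuine gap is the step you yourself flag and then defer: that under $(t_i,m_t)=1$ the monic right divisors of $x^{t_i}-1$ are precisely the products $\prod_{j}p_{ij}^{e_{ij}}(x)$ with $0\le e_{ij}\le s_{ij}$, pairwise distinct. That claim is the entire content of the theorem, and it is asserted rather than proved. It really does need the hypothesis: over $F_4$ with the Frobenius automorphism and $n=2$ (so the gcd is $2$), one has $x^2-1=(x+1)^2=(x+\omega)(x+\omega^2)=(x+\omega^2)(x+\omega)$ for a primitive element $\omega$, and $x^2-1$ then has five monic right divisors, not the three the formula would give. Moreover, your phrase about ``mirroring the classical divisor count in the commutative polynomial ring'' points at the wrong factorization: the $p_{ij}$ must be irreducible in $F_q[x;\theta_t]$, not in $F_q[x]$. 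For $F_4$, $n=3$ (gcd equal to $1$), the commutative splitting into three linear factors would predict $8$ codes, but only $x+1$ among those linear factors is a right divisor in the skew ring; the correct skew factorization is $x^3-1=(x+1)(x^2+x+1)$ with $x^2+x+1$ irreducible in $F_4[x;\theta]$, giving the true count $(1+1)(1+1)=4$. Closing the gap requires the actual mechanism of \cite{Gursoy14} and \cite{Siap11} --- coprimality forcing skew cyclic codes of length $t_i$ to be ordinary cyclic codes with suitably $\theta_t$-invariant generators, or a bounded/central-polynomial argument making the $p_{ij}$ behave commutatively. Deferring to the reference is defensible (it is what the paper itself does), but as a standalone proof yours stops just short of the only nontrivial point.
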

With the help of the above Theorem \ref{thm 8}, we give the number of $1$-generator SGQC codes in the below theorem.
\begin{theorem} \label{thm 9}
    Suppose $\mathsf{C}$ is the $1$-generator SGQC code of block length $(t_1,\dots,t_l)$ and length $N=t_1+\dots+t_l$ with index $l$ over $S$ generated by $e(x)=(e_1(x),e_2(x),\dots,e_l(x))\in \textbf{S}^\prime$. Let $(m_t,t_i)=1$, where $|\theta_t|=m_t$ and $x^{t_i}-1=\prod_{j}^{r}v_{ij}^{s_{ij}}(x)$, where $v_{ij}(x)\in F_q[x;\theta_t]$. Then the number of $1$-generator SGQC codes of length $N$ over $S$ is $\prod_{i}^{l}(\prod_{j=1}^{r}(s_{ij}+1)^2)$.
\end{theorem}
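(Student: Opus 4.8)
The plan is to turn the enumeration into a product of independent per-component counts, each delivered by Theorem~\ref{thm 8}. The key structural input is Theorem~\ref{thm 5}: every $1$-generator SGQC code $\mathsf{C}$ possesses a generator $e(x)=(e_1(x),\dots,e_l(x))$ whose $i$-th entry may be taken to be a right divisor of $x^{t_i}-1$, and whose $i$-th projection $\Psi_i(\mathsf{C})=\langle e_i(x)\rangle$ is exactly the skew cyclic code of length $t_i$ over $S$ determined by that right divisor. Assigning to $\mathsf{C}$ the tuple $(\Psi_1(\mathsf{C}),\dots,\Psi_l(\mathsf{C}))$ therefore yields a map from $1$-generator SGQC codes to $l$-tuples of skew cyclic codes, the $i$-th being a skew cyclic code of length $t_i$. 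The whole proof rests on showing this map is a bijection and then counting the target.

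First I would count the target. By hypothesis $(m_t,t_i)=1$ and $x^{t_i}-1=\prod_{j=1}^{r}v_{ij}^{s_{ij}}(x)$, so Theorem~\ref{thm 8} applies and gives exactly $\prod_{j=1}^{r}(s_{ij}+1)^2$ skew cyclic codes of length $t_i$ over $S$. It is worth recording why the exponent is squared: over $S=(1-v)F_q\oplus vF_q$ each skew cyclic code splits as $(1-v)\mathsf{D}_1\oplus v\mathsf{D}_2$ with $\mathsf{D}_1,\mathsf{D}_2$ skew cyclic over $F_q$ by Theorem~\ref{remII 2}, and over $F_q$ a right divisor of $x^{t_i}-1$ is obtained by selecting, for each irreducible factor $v_{ij}(x)$, an exponent in $\{0,1,\dots,s_{ij}\}$; this produces $\prod_{j}(s_{ij}+1)$ divisors per copy of $F_q$, hence $\prod_{j}(s_{ij}+1)^2$ over $S$. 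Since the lengths $t_1,\dots,t_l$ are unconstrained relative to one another, the per-component choices are independent, and the number of admissible tuples is $\prod_{i=1}^{l}\bigl(\prod_{j=1}^{r}(s_{ij}+1)^2\bigr)$, which is the asserted value.

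It then remains to justify the bijection. Surjectivity is immediate: a prescribed tuple $(\langle e_1(x)\rangle,\dots,\langle e_l(x)\rangle)$ is realised by the code generated by $(e_1(x),\dots,e_l(x))$. For injectivity I would appeal to the uniqueness of the reduced right-divisor generator of a skew cyclic code, which follows from the Right Division Algorithm together with Theorem~\ref{remII 2}: the monic right divisor $e_i(x)\mid x^{t_i}-1$ generating a given skew cyclic code is uniquely determined by that code, so the tuple of projections pins down the reduced generator $(e_1(x),\dots,e_l(x))$ and hence $\mathsf{C}$.

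The main obstacle lies precisely in this injectivity step, because the $1$-generator form $\mathsf{C}=\{a(x)e(x):a(x)\in S[x;\theta_t]\}$ couples the components through a single shared multiplier, and Theorem~\ref{thm 5} permits the $i$-th entry to appear as $q_i(x)e_i(x)$ with $\mathrm{gcrd}(q_i(x),(x^{t_i}-1)/e_i(x))=1$ rather than as the bare divisor $e_i(x)$. The care needed is to verify that this multiplier ambiguity does not separate codes sharing the same tuple of component codes, i.e.\ that reducing each component to its canonical right divisor via Theorem~\ref{thm 5} yields a genuine invariant of $\mathsf{C}$; this is exactly the place where the coprimality hypothesis $(m_t,t_i)=1$ is indispensable, since it secures the clean factorisation of $x^{t_i}-1$ into two-sided irreducibles and the unambiguous exponent bookkeeping on which Theorem~\ref{thm 8} and the count above depend.
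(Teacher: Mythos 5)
Your overall route is the same as the paper's: project $\mathsf{C}$ onto each block via $\Psi_i$, observe that each $\Psi_i(\mathsf{C})$ is a skew cyclic code of length $t_i$ over $S$, invoke Theorem~\ref{thm 8} to count the possibilities for each block, and multiply. The count of the target and the surjectivity argument are fine and match what the paper does.

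The genuine gap is exactly the injectivity step that you flag as ``the main obstacle'' but then do not close --- and it cannot be closed as you propose, because the map $\mathsf{C}\mapsto(\Psi_1(\mathsf{C}),\dots,\Psi_l(\mathsf{C}))$ is not injective on $1$-generator SGQC codes. Knowing that each projection $\Psi_i(\mathsf{C})$ has a uniquely determined monic right-divisor generator $e_i(x)$ pins down each \emph{component code}, but it does not pin down $\mathsf{C}$, which is in general only a proper submodule of $\Psi_1(\mathsf{C})\times\dots\times\Psi_l(\mathsf{C})$. Concretely, take $l=2$, $t_1=t_2=t>1$ with $(t,m_t)=1$, and compare $\mathsf{C}=\langle(1,1)\rangle=\{(a(x),a(x))\}$ with $\mathsf{C}'=\langle(1,x)\rangle=\{(a(x),a(x)x)\}$: since $x^{t-1}\cdot x\equiv 1$ in $S_t$, the left submodule generated by $x$ is all of $S_t$, so both codes satisfy $\Psi_1=\Psi_2=S_t$; yet $(1,1)\in\mathsf{C}\setminus\mathsf{C}'$, so they are distinct $1$-generator SGQC codes with identical projection tuples. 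Hence $\prod_{i}\prod_{j}(s_{ij}+1)^2$ counts tuples of component skew cyclic codes, not $1$-generator SGQC codes; the relative multipliers $q_i(x)$ appearing in Theorem~\ref{thm 5} carry extra information that your argument discards, and the coprimality hypothesis $(m_t,t_i)=1$ does nothing to remove it. To be fair, the paper's own proof has the identical defect: it asserts without justification that $\psi_1(\mathsf{C})\times\dots\times\psi_l(\mathsf{C})$ ``is isomorphic to $\mathsf{C}$.'' So your proposal faithfully reproduces the paper's argument, including its flaw --- the only difference is that you explicitly name the missing step before leaving it unproved.
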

   \begin{proof}
       Define the map \begin{align*}
            \Psi_i&:\textbf{S}^{\prime} \rightarrow S_{t_i} ,\text{ given by}\\&
 (a_1(x),a_2(x),\dots,a_l(x)) \rightarrow a_i(x).
       \end{align*}
      It is a well-defined module homomorphism. It implies that $e_i(x)=e^{1}_i(x)+ve^{2}_i(x) \in \psi_i(\mathsf{C})$, since $e(x)\in \mathsf{C}$ implies $\alpha(x)\cdot e(x)\in \mathsf{C}$. Thus, $\Psi_i(\mathsf{C})$ is a left-submodule of $S_{t_i}$ and therefore a skew cyclic code of length $t_i$ in $\dfrac{S[x;\theta_t]}{\langle x^{t_i}-1\rangle}$. Now, $x^{t_i}-1=\prod_{j=1}^{r}v^{s_{ij}}_{ij}(x)$ where $v_{ij}(x)\in F_q[x;\theta_t]$ is irreducible polynomial $\text{ for all } i=1,2,\dots,l$. By Theorem \ref{thm 8}, the number of skew cyclic codes of length $t_i$  over $S$ is $\prod_{j=1}^{r}(s_{ij}+1)^2$. Now, taking the cartesian product of $\psi_i(\mathsf{C}) \text{ for all } i=1,2,\dots,l$. i.e, $\psi_1(\mathsf{C})\times \psi_2(\mathsf{C})\times \dots \times \psi_l(\mathsf{C}) $ which is isomorphic to $\mathsf{C}$. Hence, the number of $1$-generator SGQC code over $S$ is  $\prod_{i}^{l}(\prod_{j=1}^{r}(s_{ij}+1)^2)$.\qed
   \end{proof}In the following result, We present the alternative type of generator set that we employ in our computation due to its simplicity.

   \begin{theorem} \label{thm 10}
    Let $\mathsf{C}=(1-v)\mathsf{C}_1\oplus v\mathsf{C}_2$ be a skew generalized quasi-cyclic code of block length $(t_1,\dots,t_l)$ and length $N=t_1+t_2+\dots+t_l$ with index $l$ over $S$. Let $f(x)$ and $g(x)$ be $1$-generator polynomials of $\mathsf{C}_1$ and $\mathsf{C}_2$ over $F_q$, respectively. Then $\mathsf{C}=\langle (1-v)f(x), vg(x)\rangle$.
\end{theorem}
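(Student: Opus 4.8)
The plan is to prove the equality $\mathsf{C}=\langle (1-v)f(x),vg(x)\rangle$ by establishing the two inclusions separately, the whole argument resting on the observation that the orthogonal idempotents $1-v$ and $v$ are \emph{central} in the skew polynomial ring $S[x;\theta_t]$. For the easy inclusion $\langle (1-v)f(x),vg(x)\rangle\subseteq\mathsf{C}$, I would first note that since $f(x)$ generates $\mathsf{C}_1$ it is itself a codeword of $\mathsf{C}_1$, so $(1-v)f(x)=(1-v)f(x)+v\cdot 0$ lies in $(1-v)\mathsf{C}_1\oplus v\mathsf{C}_2=\mathsf{C}$; symmetrically $vg(x)\in\mathsf{C}$. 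By Lemma~\ref{lem 2}, $\mathsf{C}$ is a left $S[x;\theta_t]$-submodule of $\textbf{S}^{\prime}$, hence closed under left multiplication and addition, so the submodule generated by these two elements is contained in $\mathsf{C}$.

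The substantive direction is $\mathsf{C}\subseteq\langle (1-v)f(x),vg(x)\rangle$. The crucial preliminary step is to record that $\theta_t$ fixes both idempotents, $\theta_t(1-v)=1-v$ and $\theta_t(v)=v$, which follows directly from $\theta_t(a+vb)=a^{p^t}+vb^{p^t}$. Consequently, in $S[x;\theta_t]$ one has $x(1-v)=\theta_t(1-v)x=(1-v)x$ and likewise $xv=vx$, so $1-v$ and $v$ commute with $x$ and, being scalars of the commutative ring $S$, are central in $S[x;\theta_t]$. Now take any $c\in\mathsf{C}$ and use the decomposition $\mathsf{C}=(1-v)\mathsf{C}_1\oplus v\mathsf{C}_2$ to write $c=(1-v)c_1+vc_2$ with $c_1\in\mathsf{C}_1$ and $c_2\in\mathsf{C}_2$. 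Because $\mathsf{C}_1=\langle f(x)\rangle$ and $\mathsf{C}_2=\langle g(x)\rangle$ as $1$-generator codes over $F_q$, there exist $a(x),b(x)\in F_q[x;\theta_t]\subseteq S[x;\theta_t]$ with $c_1=a(x)f(x)$ and $c_2=b(x)g(x)$. Invoking centrality to pass the idempotents across $a(x)$ and $b(x)$ gives $c=(1-v)a(x)f(x)+vb(x)g(x)=a(x)\,(1-v)f(x)+b(x)\,vg(x)$, which exhibits $c$ as a left $S[x;\theta_t]$-combination of $(1-v)f(x)$ and $vg(x)$; hence $c\in\langle (1-v)f(x),vg(x)\rangle$.

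The main point requiring care --- and the only place the hypotheses on $\theta_t$ really enter --- is the centrality of $1-v$ and $v$ in $S[x;\theta_t]$; without it one could not convert the fixed left-coefficient combination $(1-v)a(x)f(x)+vb(x)g(x)$ coming from the direct-sum decomposition into a genuine left-module combination of the two proposed generators. A secondary routine check is that the scalars $a(x),b(x)$ produced by $1$-generation over $F_q$ may be regarded as elements of $S[x;\theta_t]$, which is immediate from $F_q\subseteq S$.
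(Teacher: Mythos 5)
Your proposal is correct and follows essentially the same route as the paper: decompose each codeword via $\mathsf{C}=(1-v)\mathsf{C}_1\oplus v\mathsf{C}_2$, use the $1$-generator property of $\mathsf{C}_1=\langle f(x)\rangle$ and $\mathsf{C}_2=\langle g(x)\rangle$ over $F_q$, and move the idempotents past the skew coefficients to get a left $S[x;\theta_t]$-combination of $(1-v)f(x)$ and $vg(x)$, with the reverse inclusion following from $\mathsf{C}$ being a left submodule. Your explicit justification that $1-v$ and $v$ are fixed by $\theta_t$ and hence central in $S[x;\theta_t]$ is a welcome addition, since the paper uses this commutation silently.
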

\begin{proof}
     Since we have 1-generator SGQC codes \(\mathsf{C}_1 = \langle f(x) \rangle \) and \( \mathsf{C}_2 = \langle g(x) \rangle \), where
    \[
    f(x) = (f_1(x), f_2(x), \dots, f_l(x)) \quad \text{and} \quad g(x) = (g_1(x), g_2(x), \dots, g_l(x)),
    \]
    with each \( f_i(x) \) and \( g_i(x) \) being right divisors of \( x^{t_i} - 1 \), then
    \begin{align*}
       \mathsf{C} &= \left\{ a(x) = (1-v) r(x) f(x) + v r^{\prime}(x) g(x) \mid r(x), r^{\prime}(x) \in F_q[x: \theta_t] \right\}, \\
        &= \left\{ a(x) = \left( (1-v) r(x) f_1(x), \dots, (1-v) r(x) f_l(x) \right) + \left( v r^{\prime}(x) g_1(x), \dots, v r^{\prime}(x) g_l(x) \right) \right\} ,\\
        &= \left\{ a(x) = \left( (1-v) r(x) f_1(x) + v r^{\prime}(x) g_1(x), \dots, (1-v) r(x) f_l(x) + v r^{\prime}(x) g_l(x) \right) \right\}.
    \end{align*}
    Thus, \( \mathsf{C} \subseteq \langle (1-v)(f_1(x), \dots, f_l(x)), v(g_1(x), \dots, g_l(x)) \rangle = \langle (1-v) f(x), v g(x) \rangle \subseteq \textbf{S}^{\prime} \).
    On the other hand, consider
    \[
    (1-v) k(x) f(x) + v k^{\prime}(x) g(x) \in \langle (1-v) f(x), v g(x) \rangle,
    \]
    where \( k(x) = (k_1(x), k_2(x), \dots, k_l(x)) \) and \( k^{\prime}(x) = (k^{\prime}_1(x), k^{\prime}_2(x), \dots, k^{\prime}_l(x)) \in \textbf{S}^{\prime} \). Then
    \begin{align*}
        (1-v) k(x) &= \{ (1-v) k_1(x), (1-v) k_2(x), \dots, (1-v) k_l(x) \} \\
        &= \{ (1-v) r_1(x), (1-v) r_2(x), \dots, (1-v) r_l(x) \}, \\&\quad \text{for some} \, r_1(x), \dots, r_l(x) \in F_q[x, \theta_t],
    \end{align*}
    and
    \begin{align*}
        v k^{\prime}(x) &= \{ v k^{\prime}_1(x), \dots, v k^{\prime}_l(x) \} \\
        &= \{ v r^{\prime}_1(x), \dots, v r^{\prime}_l(x) \},\\& \quad \text{for some} \, r^{\prime}_1(x), \dots, r^{\prime}_l(x) \in F_q[x, \theta_t].
    \end{align*}
    Therefore, \( \langle (1-v) f(x), v g(x) \rangle \subseteq \mathsf{C} \), which implies that \(\mathsf{C} = \langle (1-v) f(x), v g(x) \rangle \). \qed
\end{proof}

\begin{theorem} \label{thm 11}
    Suppose $\mathsf{C}_1$ and $\mathsf{C}_2$ are SGQC codes over $F_q$ and $f(x), g(x)$ are $1$-generator polynomials of these codes, respectively. Let $\mathsf{C}=(1-v)\mathsf{C}_1\oplus v\mathsf{C}_2$. Then there is a unique polynomial $h(x)\in \textbf{S}^{\prime}$ such that $\mathsf{C}=\langle h(x) \rangle$, and each component of $h(x)$ is a right divisor of $x^{t_i}-1$ where $h(x)=(1-v)f(x)+vg(x)$.
\end{theorem}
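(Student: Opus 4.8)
The plan is to exploit the orthogonal idempotent decomposition $1=(1-v)+v$ together with the fact that $\theta_t$ fixes both idempotents. First I would record the elementary but essential observations $(1-v)^2=1-v$, $v^2=v$, $(1-v)v=v(1-v)=0$, and---since $\theta_t(a+vb)=a^{p^t}+vb^{p^t}$ yields $\theta_t(v)=v$ and $\theta_t(1-v)=1-v$---that both idempotents commute with $x$ in $S[x;\theta_t]$. Consequently, left multiplication by $(1-v)$ or $v$ passes through the skew product coefficientwise, so for any $F_q$-polynomials $p(x),p'(x)$ one gets $(1-v)p(x)\cdot(1-v)p'(x)=(1-v)\,[p(x)p'(x)]$ and $(1-v)p(x)\cdot v\,p'(x)=0$, where the inner product is taken in $F_q[x;\theta_t]$. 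These identities are the engine of the whole argument.

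For the generation claim, set $h(x)=(1-v)f(x)+vg(x)$ and compute the idempotent projections $(1-v)\cdot h(x)=(1-v)f(x)$ and $v\cdot h(x)=vg(x)$ using orthogonality. Since $(1-v)$ and $v$ are constants in $S[x;\theta_t]$, both $(1-v)f(x)$ and $vg(x)$ lie in $\langle h(x)\rangle$, whence $\langle (1-v)f(x),vg(x)\rangle\subseteq\langle h(x)\rangle$; the reverse inclusion is immediate from $h(x)=(1-v)f(x)+vg(x)$. Invoking Theorem \ref{thm 10}, which identifies $\mathsf{C}$ with $\langle(1-v)f(x),vg(x)\rangle$, I conclude $\mathsf{C}=\langle h(x)\rangle$.

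For the divisibility claim I would argue componentwise. Because $f_i(x)$ and $g_i(x)$ are right divisors of $x^{t_i}-1$ in $F_q[x;\theta_t]$, write $x^{t_i}-1=f_i'(x)f_i(x)=g_i'(x)g_i(x)$. I then propose the candidate cofactor $q_i(x)=(1-v)f_i'(x)+v\,g_i'(x)$ and verify, using the orthogonality identities above, that $q_i(x)h_i(x)=(1-v)f_i'(x)f_i(x)+v\,g_i'(x)g_i(x)=(1-v)(x^{t_i}-1)+v(x^{t_i}-1)=x^{t_i}-1$, so each component $h_i(x)=(1-v)f_i(x)+vg_i(x)$ is indeed a right divisor of $x^{t_i}-1$ in $S[x;\theta_t]$.

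Finally, for uniqueness I would recover $f(x)$ and $g(x)$ from any generator of the prescribed shape: if $\tilde h(x)=(1-v)\tilde f(x)+v\tilde g(x)$ also generates $\mathsf{C}$ with each component a monic right divisor of $x^{t_i}-1$, then multiplying by the idempotents gives $\langle(1-v)\tilde f(x)\rangle=(1-v)\mathsf{C}=(1-v)\mathsf{C}_1$ and $\langle v\tilde g(x)\rangle=v\mathsf{C}_2$. The uniqueness of the monic right-divisor generator of each skew cyclic component code $\Psi_i(\mathsf{C}_1)$, $\Psi_i(\mathsf{C}_2)$ (as in Theorems \ref{remII 2} and \ref{thm 5}) then forces $\tilde f_i=f_i$ and $\tilde g_i=g_i$, hence $\tilde h=h$. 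The main obstacle I anticipate is the bookkeeping in the noncommutative setting: one must confirm at each step that the idempotents genuinely commute past $x$ and past the $\theta_t$-twisted coefficients, since this is precisely what licenses treating $(1-v)$ and $v$ as central scalars; once that is secured, every computation collapses to the commutative-style identities established at the outset.
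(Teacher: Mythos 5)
Your proof is correct and follows essentially the same route as the paper's: both rest on Theorem \ref{thm 10}, the idempotent projections $(1-v)h(x)=(1-v)f(x)$ and $vh(x)=vg(x)$ to get the two inclusions, and the cofactor $(1-v)f_i'(x)+v g_i'(x)$ to exhibit each component $h_i(x)$ as a right divisor of $x^{t_i}-1$. You are in fact somewhat more careful than the paper, since you explicitly justify why the idempotents commute past $x$ in $S[x;\theta_t]$ (because $\theta_t$ fixes $v$ and $1-v$) and you supply an argument for the uniqueness assertion, which the paper's proof leaves unaddressed.
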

\begin{proof}
    From Theorem \ref{thm 10}, we have $\mathsf{C}=\langle (1-v)f(x),vg(x)\rangle$. Let $h(x)=(1-v)f(x)+vg(x)=((1-v)f_1(x)+vg_1(x),(1-v)f_2(x)+vg_2(x),\dots,(1-v)f_l(x)+vg_l(x))$. Clearly, $\langle h(x)\rangle \subseteq\mathsf{C}$. Since, $(1-v)f(x)=(1-v)h(x)$ and $vg(x)=vh(x)$ we conclude that $\mathsf{C} \subseteq \langle h(x)=(1-v)f(x)+vg(x)\rangle$, which implies $\mathsf{C}=\langle h(x)\rangle$. We have $f(x)=(f_1(x),f_2(x),\dots,f_l(x))$ and $g(x)=(g_1(x),g_2(x),\dots,g_l(x))$ as generator polynomial, where each component of $f(x)$ and $g(x)$ are right divisor of $x^{t_i}-1$ in $F_q[x;\theta_t],~ \text{ for all } i=1,2,\dots,l$. Then for each $i$, $\exists~ r_i(x)$ and $r^{\prime}_i(x)$ in $\dfrac{F_q[x;\theta_t]}{x^{t_i}-1}$ such that $x^{t_i}-1=r_i(x)f_i(x)=r^{\prime}_i(x)g_i(x)$. Let $r(x)=(r_1(x),r_2(x),\dots,r_l(x)$ and $r^{\prime}(x)=(r^{\prime}_1(x),r^{\prime}_2(x),\dots,r^{\prime}_l(x)$. Now, consider the following expression: \begin{align*}
        &[(1-v)r(x)+vr^{\prime}(x)]h(x)\\&=[(1-v)r(x)+vr^{\prime}(x)][1-v)f(x)+vg(x)]\\
        &=[(1-v)^2r_1(x)f_1(x)+v^2r^{\prime}_1(x)g_1(x),\dots,(1-v)^2r_l(x)f_l(x)+v^2r^{\prime}_l(x)g_l(x)]\\
        &=[(1-v)r_1(x)f_1(x)+vr^{\prime}_1(x)g_1(x),\dots,(1-v)r_l(x)f_l(x)+vr^{\prime}_l(x)g_l(x)]\\
        &=[(1-v)(x^{t_1}-1)+v(x^{t_1}-1),\dots,(1-v)(x^{t_l}-1)+v(x^{t_l}-1)]\\
        &=[x^{t_1}-1,\dots,x^{t_l}-1].
    \end{align*} Thus, $(1-v)f_i(x)+vg_i(x)$is right divisor of $x^{t_i}-1, ~\text{ for all } i=1,2,\dots,l$. \qed
\end{proof}
 In the following examples, we use Theorems \ref{thm 7}, \ref{thm 10} and \ref{thm 11}.
 \begin{example}
    Consider the polynomials $x^4-1$ and $x^6-1$ are in $F_4[x:\theta]$, where $\theta$ is the Frobenius automorphism over $F_4$. The factorization of these polynomials are as follows: \begin{align*}
        x^4-1=&(x^2+x+t^2)(x^2+x+t)\\
        =&(x^2+t^2x+t)(x^2+tx+t)\\
        x^6-1=&(x^4+tx^3+tx+1)(x^2+tx+1)\\
        =&(x^3+t^2x^2+tx+1)(x^3+tx^2+tx+1).
    \end{align*}Here, $t$ is the generator of multiplicative group of $F_4$. Consider  $\mathsf{C}_1=\langle f_1(x),~ f_2(x)\rangle$  and $\mathsf{C}_2=\langle g_1(x),~ g_2(x)\rangle$ are 1-generator SGQC codes of block length $(4,6)$ and length $10$ with index $2$ over $F_4$ where $f_1(x)=x^2+x+t$ , $f_2(x)=x^2+tx+1$, $g_1(x)=x^2+tx+t$, and $g_2(x)=x^3+tx^2+tx+1$. $\mathsf{C}_1$ and $\mathsf{C}_2$ both are of equal dimension $5$. The generator matrices of $\mathsf{C}_1$ and $\mathsf{C}_2$ are $G_1=\begin{pmatrix}
        t&1&1&0&1&t&1&0&0&0\\
        0&t^2&1&1&0&1&t^2&1&0&0\\
        1&0&t&1&0&0&1&t&1&0\\
        1&1&0&t^2&0&0&0&1&t^2&1\\
        t&1&1&0&1&0&0&0&1&t
         \end{pmatrix}$ and $G_2=\begin{pmatrix}
             t&t&1&0&1&t&t&1&0&0\\
             0&t^2&t^2&1&0&1&t^2&t^2&1&0\\
             1&0&t&t&0&0&1&t&t&1\\
             t^2&1&0&t^2&1&0&0&1&t^2&t^2\\
             t&t&1&0&t&1&0&0&1&t
         \end{pmatrix}$, respectively. Then we obtain $\mathsf{C}_1$ as $[10,5,4]$ a near-optimal code  and $\mathsf{C}_2$ as code $[10,5,3]$ over $F_4$. Consider the $\mathsf{C}=(1-v)\mathsf{C}_1\oplus v\mathsf{C}_2$, a $1$-generator SGQC code of block length $(4,6)$ and length $10$ with index $2$ over $F_4+vF_4$. Then its generator matrix is $G=\begin{pmatrix}
             (1-v)G_1\\
             vG_2
         \end{pmatrix}$ whose gray image is code $[20,10,3]$ over $F_4$.
 \end{example}
 \begin{example}
     The factorization of $x^4-1$ and $x^6-1$ in $F_9[x:\theta]$ is given as follows:\begin{align*}
         x^4-1=&(x^2+t^2x+t^3)(x^2+t^6x+t)\\
         =&(x^3+t^5x^2+2x+t)(x+t^3)\\
         x^6-1=&(x^4+t^7x^3+t^3x^2+t^3x+t^2)(x^2+t^3x+t^2)\\
         =&(x^4+t^3x^3+t^3x^2+t^7x+t^2)(x^2+t^7x+t^2).
     \end{align*}  Consider $\mathsf{C}_1=\langle f_1(x),~f_2(x)\rangle$ and $\mathsf{C}_2=\langle g_1(x),~g_2(x)\rangle$ are $1$-generator SGQC codes of block length $(4,6)$ and length $10$ with index $2$ over $F_9$ having equal dimensions of $k_1=k_2=5$ where $f_1(x)=x^2+t^6x+t$, $f_2(x)=x^2+t^3x+t^2$, $g_1(x)=x+t^3$, and $g_2(x)=x^2+t^7x+t^2$. We obtain $\mathsf{C}_1$ as code $[10,5,4]$ and $\mathsf{C}_2$ as code $[10,5,4]$ over $F_9$. Consider $\mathsf{C}=(1-v)\mathsf{C}_1\oplus v\mathsf{C}_2$, a $1$-generator SGQC code of block length $(4,6)$ and length $10$ of index $2$ over $F_9+vF_9$. Then the generator polynomial of $\mathsf{C}$ is $\langle (1-v)f_1(x)+vg_1(x), ~(1-v)f_2(x)+vg_2(x)\rangle$ and the dimension and minimum distance are $k_1+k_2=10$ and $4$, respectively. Thus, the gray image of $\mathsf{C}$ is the code $[20,10,4]$ over $F_9$.
 \end{example}

   \subsection{Idempotent generators of SGQC codes over $S$}
   In the case of a commutative ring, if $(n,q)=1$, where $q=p^d;d$ is a positive integer with $p$ being a prime number, there is a unique idempotent generator for each cyclic code of length $n$ over $F_q$. Moreover, skew cyclic codes over $F_q$ have idempotent generators under some restrictions on the length of the code. In this regard, Irfan et al. \cite{Gursoy14} already identified idempotent generators of skew cyclic codes over $S$. This subsection will prove that skew generalized quasi-cyclic codes have idempotent generators under some restrictions, followed by a few examples over $F_q$ and $S$.
\begin{theorem} \label{thm 12}\cite[Theorem $6$]{Gursoy14}
    Let $f(x)\in F_q[x;\theta_t]$ be a monic right divisor of polynomial $x^n-1$ and $\mathsf{C}=\langle f(x)\rangle$. If $(n,m_t)=1$ where $m_t=|\langle \theta_t \rangle |$ and $(n,q)=1$, then there exists an idempotent polynomial $e(x)\in \dfrac{ F_q[x;\theta_t]}{x^n-1}$ such that $\mathsf{C}=\langle e(x)\rangle$.
\end{theorem}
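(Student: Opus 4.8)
The plan is to imitate the classical construction of the generating idempotent of a cyclic code, carrying each step into the skew ring $F_q[x;\theta_t]$ and pinpointing the one place where noncommutativity has to be controlled. Since $f(x)$ is a monic right divisor of $x^n-1$, write $x^n-1=g(x)f(x)$ with $g(x)$ its left cofactor. The goal is a polynomial $e(x)$ of degree $<n$ with $\langle e(x)\rangle=\langle f(x)\rangle=\mathsf{C}$ and $e(x)e(x)\equiv e(x)\pmod{x^n-1}$, where the product is taken in $F_q[x;\theta_t]$ and reduced by the Right Division Algorithm.

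First I would use $(n,q)=1$ to get separability: $x^n-1$ carries no repeated factor, so $f$ and $g$ have no common right divisor, i.e.\ $\mathrm{gcrd}(f,g)=1$. Running the (right) Euclidean algorithm then yields a B\'ezout identity $u_1(x)f(x)+u_2(x)g(x)=1$, so that $u_1f\equiv1$ modulo the left ideal $\langle g\rangle$. I would set $e(x):=u_1(x)f(x)\bmod(x^n-1)$; membership $e\in\langle f\rangle$ is immediate. The whole argument then hinges on the companion relation $fu_1-1=\delta(x)\,g(x)$ for some $\delta$, i.e.\ $fu_1\equiv1\pmod{\langle g\rangle}$. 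Granting this, both remaining claims drop out of the identity $x^n-1=gf$: idempotency follows from $e^2-e=u_1(fu_1-1)f=u_1\delta\,gf=u_1\delta(x^n-1)\equiv0$, and the reverse inclusion follows from $fe=fu_1f=(1+\delta g)f=f+\delta(x^n-1)\equiv f$, which places $f$ in $\langle e\rangle$ and hence gives $\langle e\rangle=\mathsf{C}$. Here Lemma \ref{lem 1} and Theorem \ref{remII 2} guarantee that such alternative generators of $\mathsf{C}$ are legitimate.

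The hard part will be exactly the companion relation $fu_1\equiv1\pmod{\langle g\rangle}$: the B\'ezout identity only delivers the left-sided statement $u_1f\equiv1$, and in a genuinely noncommutative ring a left inverse need not be a right inverse, precisely because $fg\neq gf$ in general. This is where the hypothesis $(n,m_t)=1$ is indispensable. Since $x^{m_t}$ is central in $F_q[x;\theta_t]$ and $n$ is invertible modulo $m_t$, I would use these two facts to match the left and right factorizations of $x^n-1$, so that the right divisor $f$ and the left cofactor $g$ can be treated simultaneously as a coprime pair from both sides, thereby upgrading $u_1f\equiv1$ to $fu_1\equiv1$. Conceptually, $(n,q)=1$ makes the shift semisimple and $(n,m_t)=1$ makes the twist transparent at length $n$, so that $F_q[x;\theta_t]/\langle x^n-1\rangle$ splits as a direct sum of the code and a complementary submodule, with $e$ the projector onto $\mathsf{C}$. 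I expect verifying this splitting (equivalently, the companion relation) to be the only step requiring real care, the rest being the formal manipulation above.
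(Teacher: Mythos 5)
Note first that the paper itself offers no proof of this statement: it is quoted from \cite{Gursoy14}, where the argument (building on \cite{Siap11}) proceeds by reduction to the commutative case. Since $(n,m_t)=1$, one can choose $k$ with $k\equiv 0\pmod{m_t}$ and $k\equiv 1\pmod{n}$, so that $\sigma^k$ is the ordinary cyclic shift; hence $\mathsf{C}$ is an ordinary cyclic code of length $n$, the hypothesis $(n,q)=1$ supplies the classical idempotent generator via the commutative B\'ezout identity, and one then checks that the left $F_q[x;\theta_t]$-submodule generated by that idempotent coincides with $\mathsf{C}$ because the ordinary shift is a power of $\sigma$. Your proposal instead tries to run the B\'ezout construction directly inside $F_q[x;\theta_t]$, and this is where it has a genuine gap.

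Two steps are asserted rather than proved, and both are exactly where noncommutativity bites. First, you claim that $(n,q)=1$ forces $\mathrm{gcrd}(f,g)=1$ for the factorization $x^n-1=g(x)f(x)$; separability of $x^n-1$ in $F_q[x]$ does not automatically transfer to right-coprimality of a right factor and its left cofactor in $F_q[x;\theta_t]$, where factorizations are far from unique, so this needs an argument. Second --- and you flag this yourself --- the companion relation $f(x)u_1(x)-1=\delta(x)g(x)$ carries the entire weight of the proof: without it neither $e^2\equiv e$ nor $fe\equiv f$ follows, and the appeal to ``$x^{m_t}$ is central and $n$ is invertible modulo $m_t$'' restates the hypotheses rather than deriving the relation. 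In this ring a left inverse modulo a left ideal genuinely need not be a right inverse, so the upgrade from $u_1f\equiv 1$ to $fu_1\equiv 1$ cannot be waved through; it is equivalent to the module splitting you describe, i.e.\ to the theorem itself. Your formal manipulations (the computation of $e^2-e$ and of $fe$) are correct conditional on these two facts, so what you have is a correct skeleton with the load-bearing lemma missing; the cleanest way to supply it is precisely the reduction to ordinary cyclic codes used in \cite{Gursoy14}.
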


From Lemma \ref{lem 2} and Theorem \ref{thm 11}, the following theorem identifies an idempotent generator of skew generalized quasi-cyclic code $\mathsf{C}$ over $F_q$.
\begin{theorem} \label{thm 13}
    Let $\mathsf{C}$ be a $1$-generator skew generalized quasi-cyclic code over $F_q$ of block length $(t_1,t_2,\dots,t_l)$ and length $N=t_1+t_2+\dots+t_l$ with $\mathsf{C}=\langle c(x)\rangle=\langle c_1(x),c_2(x),\dots,c_l(x)\rangle$ where $c_i(x)\in F_q[x:\theta]$ is right divisor of $x^{t_i}-1$ for all $i=1,\dots,l$. If $(t_i,q)=1, (t_i,m_t)=1, ~\text{ for all }~i=1,2,\dots,l$ where $m_t=|\langle \theta_t \rangle|$, then there exists an idempotent polynomial $e(x)=(e_1(x),e_2(x),\dots,e_l(x))\in \dfrac{F_q[x;\theta_t]}{\langle x^{t_1}-1\rangle}\times \dfrac{F_q[x;\theta_t]}{\langle x^{t_2}-1\rangle}\times \dots \times \dfrac{F_q[x;\theta_t]}{\langle x^{t_l}-1\rangle}$ such that $\mathsf{C}=\langle e(x)\rangle$ .
\end{theorem}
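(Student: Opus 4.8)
The plan is to reduce the problem block by block to the skew cyclic case already settled in Theorem~\ref{thm 12}, and then to reassemble the block idempotents into a single generator of $\mathsf{C}$. By (the field analogue of) Lemma~\ref{lem 2}, $\mathsf{C}$ is a left $F_q[x;\theta_t]$-submodule of $\mathbf{S}^{\prime}=S_{t_1}\times\cdots\times S_{t_l}$, so applying the coordinate projection $\Psi_i:\mathbf{S}^{\prime}\to S_{t_i}$ shows that each $\Psi_i(\mathsf{C})=\mathsf{C}_i=\langle c_i(x)\rangle$ is a skew cyclic code of length $t_i$ whose generator $c_i(x)$ is a right divisor of $x^{t_i}-1$. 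Since the hypotheses give $(t_i,q)=1$ and $(t_i,m_t)=1$ for every $i$, Theorem~\ref{thm 12} applies to each block separately and yields an idempotent $e_i(x)\in S_{t_i}$, uniquely attached to the code $\mathsf{C}_i$, with $e_i(x)^2=e_i(x)$ and $\langle e_i(x)\rangle=\mathsf{C}_i=\langle c_i(x)\rangle$.

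I would then set $e(x)=(e_1(x),\dots,e_l(x))$. Idempotency is a coordinatewise condition in the product ring $\mathbf{S}^{\prime}$, so $e(x)^2=e(x)$ is immediate from $e_i(x)^2=e_i(x)$, and the whole content of the statement lies in the generator identity $\mathsf{C}=\langle e(x)\rangle$. I would prove this by the two-inclusion method used for Theorem~\ref{thm 11}: for $\langle e(x)\rangle\subseteq\mathsf{C}$ it suffices to exhibit $e(x)\in\mathsf{C}$, while for $\mathsf{C}\subseteq\langle e(x)\rangle$ it suffices to exhibit $c(x)\in\langle e(x)\rangle$. In each coordinate the needed relation is at hand, because $\mathsf{C}_i=\langle c_i(x)\rangle=\langle e_i(x)\rangle$ furnishes polynomials $\delta_i(x),\gamma_i(x)$ with $c_i(x)=\delta_i(x)e_i(x)$ and $e_i(x)=\gamma_i(x)c_i(x)$ in $S_{t_i}$; moreover, being an idempotent generator of the left ideal $\mathsf{C}_i=S_{t_i}e_i(x)$, the element $e_i(x)$ is a right identity of $\mathsf{C}_i$, so that $c_i(x)e_i(x)=c_i(x)$.

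The step demanding the most care — and the genuine obstacle — is that a $1$-generator SGQC code is generated by a \emph{single} diagonal scalar $a(x)\in F_q[x;\theta_t]$ acting simultaneously on all blocks, whereas the block relations above involve coordinate-dependent polynomials $\delta_i(x),\gamma_i(x)$ living in different quotient rings $S_{t_i}$. To manufacture a common scalar I would lift all block computations to $R_M=F_q[x;\theta_t]/\langle x^{M}-1\rangle$ with $M=\operatorname{lcm}(t_1,\dots,t_l)$, using $x^{t_i}-1\mid x^{M}-1$ and the induced reductions $R_M\to S_{t_i}$, and then attempt to solve the simultaneous congruences governing $a(x)e_i(x)\equiv c_i(x)\pmod{x^{t_i}-1}$ and, dually, $b(x)c_i(x)\equiv e_i(x)\pmod{x^{t_i}-1}$ by a Chinese-remainder argument. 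Here the coprimality hypotheses $(t_i,q)=1$ and $(t_i,m_t)=1$ are essential: they make each $x^{t_i}-1$ separable, so that $S_{t_i}$ is semisimple and $S_{t_i}=S_{t_i}e_i(x)\oplus S_{t_i}(1-e_i(x))$ with the left annihilator of $e_i(x)$ splitting off cleanly. The crux to verify is that these congruences are compatible on the overlaps $\gcd(x^{t_i}-1,x^{t_j}-1)$, which is exactly where the delicate work sits; once a common scalar is produced in each direction, the two inclusions close and $\mathsf{C}=\langle e(x)\rangle$ follows, with $e(x)$ the desired idempotent generator.
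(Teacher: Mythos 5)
Your reduction to the block level coincides with the paper's: project $\mathsf{C}$ onto each $S_{t_i}$ via $\Psi_i$, observe that $\Psi_i(\mathsf{C})=\langle c_i(x)\rangle$ is a skew cyclic code of length $t_i$, and invoke Theorem \ref{thm 12} under $(t_i,q)=(t_i,m_t)=1$ to produce block idempotents $e_i(x)$. Where you part company with the paper is in what happens next, and you correctly see that this is where the real content lies: the paper simply writes $\mathsf{C}\cong\Psi_1(\mathsf{C})\times\dots\times\Psi_l(\mathsf{C})$ and stops, which for a $1$-generator code is not an identification of $\mathsf{C}$ with $\langle e(x)\rangle$ inside $S_{t_1}\times\dots\times S_{t_l}$ at all (the product of the projections is in general strictly larger than $\mathsf{C}$). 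You instead isolate the correct obligation: a \emph{single} $a(x)\in F_q[x;\theta_t]$ with $a(x)c_i(x)\equiv e_i(x)\pmod{x^{t_i}-1}$ for every $i$ simultaneously, and a single $b(x)$ for the reverse inclusion.

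However, your proposal stops exactly there: you announce a Chinese-remainder strategy over $R_M$ and concede that compatibility of the congruences on $\gcd(x^{t_i}-1,x^{t_j}-1)$ is ``where the delicate work sits,'' but you never verify it --- and it cannot be verified in general, because the required common scalar need not exist. Take $\theta_t=\mathrm{id}$ (so $m_t=1$), $q=3$, $l=2$, $t_1=t_2=2$, $c_1(x)=1$, $c_2(x)=2$. Both are (unit) right divisors of $x^2-1$ and each generates the full ring $F_3[x]/\langle x^2-1\rangle$, whose only idempotent generator is $1$, so $e(x)=(1,1)$. But $\mathsf{C}=\{(a(x),2a(x))\}$ does not contain $(1,1)$, since $a=1$ and $2a=1$ cannot hold simultaneously; hence $\mathsf{C}\neq\langle e(x)\rangle$ even though every block projection has the predicted idempotent generator. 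So the gap you flag is genuine and fatal to the two-inclusion plan as written: the blockwise relations $e_i(x)=\gamma_i(x)c_i(x)$ with coordinate-dependent $\gamma_i$ do not assemble into membership of $e(x)$ in $\mathsf{C}$. Any correct argument must either adjust the $e_i(x)$ by compatible units across blocks (which your overlap congruences would have to produce, and in the example cannot) or retreat, as the paper tacitly does, to an abstract module isomorphism rather than an equality of codes.
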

\begin{proof}
  Suppose $\mathsf{C}$ is a $1$-generator skew generalized quasi-cyclic code over $F_q$ of block length $(t_1,\dots,t_l)$ and length $N=t_1+t_2+\dots+t_l$ with generator polynomial $c(x)=\langle c_1(x),c_2(x),\dots,c_l(x)\rangle$, where $c_i(x)\in F_q[x:\theta]$ is a right divisor of $x^{t_i}-1$. If $(t_i,q)=1, (t_i,m)=1, ~\text{ for all }~i=1,2,\dots,l$, then define the map \begin{align*}
        \Phi_i:&\dfrac{F_q[x:\theta]}{\langle x^{t_1}-1\rangle}\times \dfrac{F_q[x:\theta]}{\langle x^{t_2}-1\rangle}\times \dots \times \dfrac{F_q[x:\theta]}{\langle x^{t_l}-1\rangle} \rightarrow{\dfrac{F_q[x:\theta]}{\langle x^{t_i}-1\rangle}} \text{ by}\\&
    (a_1(x),a_2(x),\dots,a_l(x)) \rightarrow{a_i(x)}.
    \end{align*}
    It is a well defined module homomorphism, and \textbf{$\Phi_i(\mathsf{C})=\mathsf{C}_i$}. Since $\mathsf{C}$ is a $1$-generator SGQC code over $F_q[x;\theta_t]$, $\mathsf{C}$ is a left $F_q[x;\theta_t]$-submodule of $\dfrac{F_q[x:\theta]}{\langle x^{t_1}-1\rangle}\times \dfrac{F_q[x:\theta]}{\langle x^{t_2}-1\rangle}\times \dots \times \dfrac{F_q[x:\theta]}{\langle x^{t_l}-1\rangle}$. Therefore,  $\mathsf{C}_i$ is also a left submodule of  $\dfrac{F_q[x:\theta]}{\langle x^{t_i}-1\rangle}$. From Lemma \ref{lem 2}, $\mathsf{C}_i$ is a skew-cyclic code of length $t_i$, which implies that  $\mathsf{C}_i=\langle g_i(x)\rangle$, where $g_i(x)$ is a right divisor of $x^{t_i}-1$. Now, from Theorem \ref{thm 12}, there exists an idempotent polynomial $e_i(x)\in \dfrac{F_q[x:\theta]}{\langle x^{t_i}-1\rangle}$ such that  $\mathsf{C}_i=\langle e_i(x)\rangle$. By taking cartesian product of $\Phi_i(\mathsf{C})$ where $i=1,2,\dots,l$ then $\Phi_1(\mathsf{C}) \times \Phi_2(\mathsf{C}) \times \dots \times \Phi_l(\mathsf{C})$ $\cong$ $\mathsf{C}$. Since, $\phi_i(\mathsf{C})=\langle e_i(x)\rangle$, we conclude that $\mathsf{C}\cong \langle e_1(x), e_2(x),\dots,e_l(x)\rangle$, where each $e_i(x)$ is idempotent polynomial. \qed
\end{proof}Following the above Theorem \ref{thm 12} and Lemma \ref{lem 1}, the following theorem identifies an idempotent generator of skew cyclic code $\mathsf{C}$ over $S$.
\begin{theorem}\label{thm 14}\cite[Corollary $8$]{Gursoy14}
    If $\mathsf{C}=(1-v)\mathsf{C}_1\oplus v\mathsf{C}_2$ is a skew cyclic code of length $n$ over $S$ and $(n,m_t)=1$, $(n,q)=1$, then $\mathsf{C}_i$ has an idempotent generator, say $e_i(x)$ for $i=1,2$. Moreover, $e(x)=(1-v)e_1(x)+ve_2(x)$ is an idempotent generator of $\mathsf{C}$, i.e., $\mathsf{C}=\langle e(x) \rangle$.
\end{theorem}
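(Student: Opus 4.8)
The plan is to obtain the idempotent generators on each $F_q$-component first and then glue them together using the central idempotents $1-v$ and $v$. By Theorem \ref{remII 2}, the constituent codes $\mathsf{C}_1$ and $\mathsf{C}_2$ are skew cyclic codes of length $n$ over $F_q$. Since we are given $(n,m_t)=1$ and $(n,q)=1$, Theorem \ref{thm 12} applies to each $\mathsf{C}_i$ and yields an idempotent polynomial $e_i(x)\in \frac{F_q[x;\theta_t]}{\langle x^n-1\rangle}$ with $\mathsf{C}_i=\langle e_i(x)\rangle$ and $e_i(x)^2=e_i(x)$. This settles the first assertion and supplies the building blocks for $e(x)=(1-v)e_1(x)+ve_2(x)$.

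The crucial preliminary observation I would record is that $v$, and hence $1-v$, is \emph{central} in $S[x;\theta_t]$. Indeed $\theta_t(v)=v$, so the skew multiplication rule gives $x\cdot v=\theta_t(v)x=vx$; thus $v$ commutes with $x$ and with every scalar, and $\{(1-v),v\}$ is a pair of orthogonal central idempotents obeying $(1-v)^2=1-v$, $v^2=v$, and $v(1-v)=0$. This centrality is precisely what legitimizes the ensuing factorizations in the noncommutative setting, and I expect it to be the only genuinely delicate point of the argument.

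Granting this, idempotency of $e(x)$ follows from a direct expansion, where centrality lets the scalar idempotents be pulled past the polynomials $e_i(x)$:
\begin{align*}
e(x)^2 &=(1-v)^2 e_1(x)^2+v(1-v)\big(e_1(x)e_2(x)+e_2(x)e_1(x)\big)+v^2 e_2(x)^2\\
&=(1-v)e_1(x)^2+v\,e_2(x)^2=(1-v)e_1(x)+v\,e_2(x)=e(x),
\end{align*}
the cross terms vanishing because $v(1-v)=0$ and the final equality using $e_i(x)^2=e_i(x)$.

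Finally I would prove $\langle e(x)\rangle=\mathsf{C}$ by two inclusions. For $\langle e(x)\rangle\subseteq\mathsf{C}$, note $(1-v)e_1(x)\in(1-v)\mathsf{C}_1$ and $ve_2(x)\in v\mathsf{C}_2$, so $e(x)\in\mathsf{C}$; since $\mathsf{C}$ is a left $S[x;\theta_t]$-submodule, the whole ideal generated by $e(x)$ lies in $\mathsf{C}$. For the reverse inclusion, an arbitrary codeword has the form $(1-v)c_1(x)+vc_2(x)$ with $c_i(x)\in\mathsf{C}_i$; writing $c_i(x)=a_i(x)e_i(x)$ and setting $a(x)=(1-v)a_1(x)+va_2(x)\in S[x;\theta_t]$, the same orthogonality computation gives $a(x)e(x)=(1-v)c_1(x)+vc_2(x)$. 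Hence every codeword is a left multiple of $e(x)$, so $\mathsf{C}\subseteq\langle e(x)\rangle$, which completes the proof.
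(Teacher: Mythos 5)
Your proof is correct. Note, however, that the paper does not prove this statement at all: it is imported verbatim as Corollary 8 of the cited work of Gursoy, Siap and Yildiz, so there is no in-paper argument to compare against. Your reconstruction is the natural one and almost certainly mirrors the cited source: obtain idempotent generators $e_i(x)$ for the constituents via Theorem \ref{thm 12}, observe that $\theta_t(v)=v$ makes $v$ and $1-v$ central orthogonal idempotents of $S[x;\theta_t]$, and then both the idempotency of $e(x)$ and the two inclusions for $\mathsf{C}=\langle e(x)\rangle$ follow by the expansion you give. One small citation slip: the fact that $\mathsf{C}_1$ and $\mathsf{C}_2$ are themselves skew cyclic over $F_q$ is not what Theorem \ref{remII 2} says (there it is a hypothesis); within this paper the right reference is Theorem \ref{thm 1} specialized to $l=1$. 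This does not affect the validity of the argument.
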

\begin{theorem}\label{thm 15}
    Let $\mathsf{C}$ be a $1$-generator skew generalized quasi-cyclic code over $S$ of block length $(t_1,t_2,\dots,t_l)$ and $N=t_1+t_2+\dots+t_l$ with $\mathsf{C}=\langle u(x)\rangle=\langle u_1(x),u_2(x),\dots,u_l(x)$, where $u_i(x)\in S[x;\theta_t]$ is a right divisor of $x^{t_i}-1$. If $(t_i,q)=1,\text{ and } (t_i,m_t)=1, ~\text{ for all }~i=1,2,\dots,l$, where $m_t=|\langle \theta_t \rangle|$, then there exists an idempotent polynomial $e(x)=((1-v)e_1(x)+ve^{\prime}_1(x),(1-v)e_2(x)+ve^{\prime}_2(x),\dots,(1-v)e_l(x)+ve^{\prime}_l(x))$ which is an idempotent generator of $\mathsf{C}$, i.e., $\mathsf{C}=\langle e(x) \rangle$.
\end{theorem}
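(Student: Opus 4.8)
The plan is to mirror the structure of the proof of Theorem \ref{thm 13}, lifting the construction from $F_q$ to the ring $S$ through the orthogonal idempotents $v$ and $1-v$. First I would invoke the decomposition $\mathsf{C}=(1-v)\mathsf{C}_1\oplus v\mathsf{C}_2$, where by Theorem \ref{thm 1} the components $\mathsf{C}_1$ and $\mathsf{C}_2$ are SGQC codes over $F_q$. Writing the generator $u(x)=(u_1(x),\dots,u_l(x))$ with $u_i(x)=(1-v)u_i^{1}(x)+vu_i^{2}(x)$, one checks directly from the definitions of $\mathsf{C}_1$ and $\mathsf{C}_2$ that these are in fact $1$-generator SGQC codes over $F_q$, generated respectively by the tuples $(u_1^{1}(x),\dots,u_l^{1}(x))$ and $(u_1^{2}(x),\dots,u_l^{2}(x))$.

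Next, since the hypotheses $(t_i,q)=1$ and $(t_i,m_t)=1$ hold for every $i$, Theorem \ref{thm 13} applies to each of $\mathsf{C}_1$ and $\mathsf{C}_2$ separately, yielding idempotent generators $(e_1(x),\dots,e_l(x))$ for $\mathsf{C}_1$ and $(e'_1(x),\dots,e'_l(x))$ for $\mathsf{C}_2$, where each $e_i(x)$ and $e'_i(x)$ is idempotent in $F_q[x;\theta_t]/\langle x^{t_i}-1\rangle$. I would then assemble the candidate componentwise as $e(x)=((1-v)e_1(x)+ve'_1(x),\dots,(1-v)e_l(x)+ve'_l(x))$, exactly mirroring the passage from Theorem \ref{thm 13} to Theorem \ref{thm 14}.

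The verification then splits into two parts. For idempotency I would argue coordinatewise, using that $v$ and $1-v$ are orthogonal idempotents fixed by $\theta_t$ (so $v^2=v$, $(1-v)^2=1-v$, $v(1-v)=0$, and, because $\theta_t^{k}(v)=v$ for all $k$, they commute with every polynomial under the twisted product); expanding $((1-v)e_i+ve'_i)^2$ then collapses to $(1-v)e_i^2+v(e'_i)^2=(1-v)e_i+ve'_i$, since each cross term carries the factor $v(1-v)=0$. For the generation claim $\mathsf{C}=\langle e(x)\rangle$, one inclusion is immediate as $e(x)\in\mathsf{C}$. For the reverse I would left-multiply $e(x)$ by the scalars $1-v$ and $v$: here $(1-v)e(x)=((1-v)e_1(x),\dots,(1-v)e_l(x))$ generates $(1-v)\mathsf{C}_1$ and $ve(x)=(ve'_1(x),\dots,ve'_l(x))$ generates $v\mathsf{C}_2$, whence $\langle e(x)\rangle$ contains $(1-v)\mathsf{C}_1\oplus v\mathsf{C}_2=\mathsf{C}$.

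The main obstacle I anticipate is purely the bookkeeping inside the noncommutative ring $S[x;\theta_t]$: one must confirm that $v$ and $1-v$ genuinely act as central idempotents relative to the skew product, which is precisely what $\theta_t(v)=v$ guarantees, so that the cross terms vanish and the two idempotents coming from $\mathsf{C}_1$ and $\mathsf{C}_2$ do not interfere. Once this is pinned down, the uniqueness of a generator of the form $(1-v)f(x)+vg(x)$ provided by Theorem \ref{thm 11} identifies $e(x)$ as the sought idempotent generator, completing the argument.
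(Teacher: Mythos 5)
Your proposal is correct, but it runs the two reductions in the opposite order from the paper, so it is worth recording the difference. The paper first projects along the index: it applies the coordinate maps $\Psi_i:\textbf{S}^{\prime}\rightarrow S_{t_i}$, observes that each $\Psi_i(\mathsf{C})$ is a skew cyclic code of length $t_i$ over the ring $S$, invokes the skew cyclic idempotent result over $S$ (Theorem \ref{thm 14}, quoted from Gursoy et al.) to obtain an idempotent generator $(1-v)e_i(x)+ve^{\prime}_i(x)$ of each $\Psi_i(\mathsf{C})$, and then reassembles via the Cartesian product as in the proof of Theorem \ref{thm 13}. You instead decompose the ring first, writing $\mathsf{C}=(1-v)\mathsf{C}_1\oplus v\mathsf{C}_2$ with $\mathsf{C}_1,\mathsf{C}_2$ $1$-generator SGQC codes over $F_q$ (generated by the $(1-v)$- and $v$-components of $u(x)$), apply the field-level SGQC idempotent result (Theorem \ref{thm 13}) to each, and then glue with the central orthogonal idempotents $v$ and $1-v$. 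Both routes land on the same polynomial $e(x)$, and your use of $\theta_t(v)=v$ to kill the cross terms is exactly the point that makes the gluing legitimate. Your version has two modest advantages: the direct-sum decomposition $\mathsf{C}=(1-v)\mathsf{C}_1\oplus v\mathsf{C}_2$ is an honest equality, so your generation argument ($\langle e(x)\rangle\supseteq(1-v)\mathsf{C}_1\oplus v\mathsf{C}_2$ and $e(x)\in\mathsf{C}$) is airtight, whereas the paper's reassembly via $\Psi_1(\mathsf{C})\times\dots\times\Psi_l(\mathsf{C})\cong\mathsf{C}$ is asserted rather than justified; and you actually verify idempotency and generation rather than deferring to ``the proof is now similar to Theorem \ref{thm 14}.'' The one small step you wave at --- that $\mathsf{C}_1$ and $\mathsf{C}_2$ inherit the $1$-generator property from $\mathsf{C}$ --- does check out (writing $a(x)=(1-v)a^1(x)+va^2(x)$ shows $\mathsf{C}_1=\langle u^1(x)\rangle$ and $\mathsf{C}_2=\langle u^2(x)\rangle$), so there is no gap.
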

\begin{proof}
    Define the map \begin{align*}
         \Psi_i:\textbf{S}^{\prime} \rightarrow S_{t_i} ,\text{ given by}\\
 (a_1(x),a_2(x),\dots,a_l(x)) \rightarrow a_i(x).
    \end{align*}
    It is a well-defined module homomorphism. Here, $\Psi_i(\mathsf{C})$ is a left-submodule of $S_{t_i}$ and hence $\Psi_i(\mathsf{C})$ is a skew cyclic code of length $t_i$ in $S_{t_i}$. Now, by Theorem \ref{thm 13}, $\Psi_i(\mathsf{C})$ has an idempotent polynomial, i.e. $\Psi_i(\mathsf{C})=\langle (1-v)e_i(x)+ve^{\prime}_i(x)$, where $e_i(x)~\text{and}~e^{\prime}_i(x) $ are the idempotent polynomial generator of the constituent $\Psi_i(\mathsf{C})$ over $F_q$. The proof is now similar to Theorem \ref{thm 14}. Hence, $\mathsf{C} \cong \langle  (1-v)e_1(x)+ve^{\prime}_1(x),  (1-v)e_2(x)+ve^{\prime}_2(x),\dots, (1-v)e_l(x)+ve^{\prime}_l(x)\rangle$. \qed
\end{proof}
\begin{example}
    Let $S=F_q[x:\theta]$ where $\theta$ is a Frobenius automorphism over $F_4$. Consider polynomial $g_1(x)=x^2+x+1$, a right divisor of $x^3-1$, and $g_2(x)=x^4+x^3+x^2+x+1$, a right divisor of $x^5-1$ in $S[x:\theta]$. In addition, $g_1(x)$ and $g_2(x)$ are idempotent polynomials in $\dfrac{F_4[x:\theta]}{\langle x^3-1\rangle}$ and $\dfrac{F_4[x:\theta]}{\langle x^5-1\rangle}$, respectively. Let $\mathsf{C}$ be a $1$-generator skew generalized quasi-cyclic code of block length $(3,5)$ and length 8 of index two generated by $\mathsf{C}=\langle g_1(x), g_2(x)\rangle$ over $F_4$. Then, from Theorem $4.2$ of  \cite{Gao16}, parity check polynomial of $\mathsf{C}$ is \begin{align*}
        f(x)=&lclm\begin{Bmatrix}
            \dfrac{x^3-1}{g_1(x)}, \dfrac{x^5-1}{g_2(x)}
        \end{Bmatrix}\\
        =& x+1.
    \end{align*}
 Thus, $\mathsf{C}$ is a skew generalized quasi-cyclic code of length $8$ of index $2$ and dimension $1$. The generator matrix for $\mathsf{C}$ is given by $\begin{pmatrix}
        1&1&1&1&1&1&1&1
    \end{pmatrix}$. Then, the code $\mathsf{C}$ as the parameters $[8,1,8]$ is a $1$-generator SGQC code over $F_4$, an optimal linear code over $F_4$.
\end{example}
\begin{example}
    Using the notation of Example $1$, consider $g_3(x)=x^4+x^2+x+1$ a right divisor of $x^7-1$ in $S[x:\theta]$ and an idempotent polynomial in $\dfrac{F_4[x:\theta]}{\langle x^7-1 \rangle}$. Let $\mathsf{C}$ be a $1$-generator skew generalized quasi-cyclic codes of block length $(3,5,7)$ and length $15$ with index $3$, generated by $\mathsf{C}=\langle g_1(x), g_2(x), g_3(x) \rangle$ over $F_4$. From Theorem $4.2$ of \cite{Gao16}, parity check polynomial of $\mathsf{C}$ is $x^4+x^3+x^2+1$. Hence, $\mathsf{C}$ is a $1$-generator skew generalized quasi-cyclic code of length $15$ of index $3$ and dimension $4$. Thus, $\mathsf{C}$ havig parameters $[15,4,4]$ is a $1$-generator SGQC code over $F_4$.
\end{example}
  \subsection{$\rho$-Generator Polynomial over $S$}
  This subsection introduces a set of $\rho$-generator polynomials for SGQC codes over the ring $S$. These generator polynomials must satisfy certain constraints. Here, We develop our method upon the approach introduced by Seneviratne et al. in \cite{Seneviratne22} for generator polynomial over $F_q$. In addition, we used these results to find the cardinality and dimension of SGQC codes and provide the parameters of the Gray images of $2$-generator SGQC codes.
     \par Suppose $\mathsf{C}$ is a skew generalized quasi-cyclic code of block length $(t_1,t_2,\dots,t_l)$ and length $N$ with index $l$. Let $a(x)=(a_1(x)+va^{\prime}_1(x),\dots,a_l(x)+va^{\prime}_l(x))$. Define the sets \begin{align*}
         K_i= \begin{Bmatrix}
              p_i(x): \text{a codeword} ~~a(x)=(a_1(x)+va{\prime}_1(x),a_2(x)+va{\prime}_2(x),\dots,p_i(x),0,0,\dots,0)\in \mathsf{C},\\
              \text{where} ~p_i(x)\in \dfrac{S[x;\theta_t]}{\langle x^{t_i}-1\rangle} ~\text{and} ~a_{i+1}(x)+va{\prime}_{i+1}(x)=a_{i+2}(x)+va{\prime}_{i+2}(x)=\dots=0
         \end{Bmatrix},
     \end{align*}
     i.e.,
     \begin{align*}
         K_1&=\{ p_1(x): \text{ a codeword} ~a(x)=(p_1(x),0,0,\dots,0)\in \mathsf{C}\},\\
         K_2&=\{p_2(x):\text{ a codeword} ~~a(x)=(a_1(x)+va^{\prime}_1(x),p_2(x),0,0,\dots,0)\in \mathsf{C}\}~ \text{and}\\
         K_l&=\{p_l(x) : \text{a codeword} ~~a(x)=(a_1(x)+va^{\prime}_1(x),a_2(x)+va^{\prime}_2(x),\dots,p_l(x))\in \mathsf{C}\}.
     \end{align*}
      As $(0,0,\dots,0)\in \mathsf{C},~ K_i$ is a non-empty set $\text{ for all } i=1,2,\dots,l$.
\begin{lemma} \label{lem 3}
    The above set $K_i$ is a left submodule of $\dfrac{S[x;\theta_t]}{\langle x^{t_i}-1\rangle}$ $\text{ for all } i=1,2,\dots,l$.
\end{lemma}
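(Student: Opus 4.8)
The plan is to check the two defining closure properties of a left submodule of $S_{t_i}=\dfrac{S[x;\theta_t]}{\langle x^{t_i}-1\rangle}$ directly from the definition of $K_i$, namely closure under addition and closure under the left $S[x;\theta_t]$-action. The two facts I would lean on are Lemma \ref{lem 2}, which says that $\mathsf{C}$ is itself a left $S[x;\theta_t]$-submodule of $\textbf{S}^{\prime}$, and the coordinatewise description of the scalar action given in Equation (\ref{eq 2}). Nonemptiness is already granted by the remark that $(0,0,\dots,0)\in\mathsf{C}$, so $0\in K_i$. The whole argument is really a bookkeeping check that the characteristic ``shape'' of the witnessing codewords is preserved by both module operations.

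For additive closure I would take $p_i(x),\tilde p_i(x)\in K_i$ and lift each to a witnessing codeword of $\mathsf{C}$: say $a(x)\in\mathsf{C}$ with $i$-th entry $p_i(x)$, arbitrary entries in positions $1,\dots,i-1$, and zero entries in positions $i+1,\dots,l$, and similarly $b(x)\in\mathsf{C}$ for $\tilde p_i(x)$. Since $\mathsf{C}$ is additively closed, $a(x)+b(x)\in\mathsf{C}$; adding componentwise leaves the trailing positions equal to $0+0=0$ and produces $i$-th entry $p_i(x)+\tilde p_i(x)$, while the leading positions remain unconstrained. Hence $a(x)+b(x)$ is again of the form required by the definition of $K_i$, so $p_i(x)+\tilde p_i(x)\in K_i$. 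For the scalar action I would take $s(x)\in S[x;\theta_t]$ and a witnessing codeword $a(x)\in\mathsf{C}$ for $p_i(x)\in K_i$; by Lemma \ref{lem 2}, $s(x)\cdot a(x)\in\mathsf{C}$, and by the componentwise rule of Equation (\ref{eq 2}) its $j$-th entry is $s(x)$ times the $j$-th entry of $a(x)$ reduced modulo $x^{t_j}-1$. The trailing entries become $s(x)\cdot 0=0$ and the $i$-th entry becomes $s(x)p_i(x)\pmod{x^{t_i}-1}$, so the product again has the prescribed shape and $s(x)p_i(x)\in K_i$. Specializing to $s(x)=-1$ supplies additive inverses, completing the verification that $K_i$ is a left submodule of $S_{t_i}$.

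The point I would be most careful about is not a genuine obstacle but the shape-tracking: $K_i$ simultaneously fixes the $i$-th coordinate, forces coordinates $i+1,\dots,l$ to vanish, and leaves coordinates $1,\dots,i-1$ entirely free, and one must confirm that \emph{both} operations preserve exactly this ``free leading block / single fixed coordinate / trailing zeros'' pattern. This holds precisely because the module structure on $\textbf{S}^{\prime}$ is coordinatewise and because $0$ is absorbing for the scalar action, so no cross-coordinate interaction can disturb the trailing zeros or spill into the forbidden positions. Since both closure checks reduce to this single observation, I expect the argument to be short once the coordinatewise action is invoked.
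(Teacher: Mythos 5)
Your proposal is correct and follows essentially the same route as the paper: lift elements of $K_i$ to witnessing codewords, invoke closure of $\mathsf{C}$ under addition and the left $S[x;\theta_t]$-action from Lemma \ref{lem 2} and Equation (\ref{eq 2}), and track that the coordinatewise operations preserve the trailing zeros and the $i$-th entry. Your explicit attention to the ``free leading block / fixed coordinate / trailing zeros'' pattern is a slightly more careful articulation of the same bookkeeping the paper performs.
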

 \begin{proof}
     Suppose $p_i(x),q_i(x) \in K_i $ and $s(x)\in S[x;\theta_t]$, then there are \begin{align*}
         a(x)&=(a_1(x)+va^{\prime}_1(x),a_2(x)+va^{\prime}_2(x),\dots,p_i(x),0,\dots,0) ~\text{and}\\ ~b(x)&=(b_1(x)+vb^{\prime}_1(x),b_2(x)+vb^{\prime}_2(x),\dots,q_i(x),0,\dots,0)\in \mathsf{C}.
     \end{align*}
      As $\mathsf{C}$ is a left submodule of $S_{t_1}\times S_{t_2}\times \dots \times S_{t_l}$, $a(x)+b(x)=(a_1(x)+va^{\prime}_1(x)+(b_1(x)+vb^{\prime}_1((x)),a_2(x)+va^{\prime}_2(x)+(b_2(x)+vb^{\prime}_2((x),\dots,p_i(x)+q_i(x),0,0,\dots,0)$, and $s(x)a(x)=(s(x)(a_1(x)+va^{\prime}_1(x)),s(x)(a_2(x)+va^{\prime}_2(x)),\dots,s(x)p_i(x),0,0,\dots,0)$ are codewords in $\mathsf{C}$. Thus, $p_i(x)+q_i(x)$ and $s(x)p_i(x)$ are elements in $K_i$, which implies that $K_i$ is a left submodule of $\dfrac{S[x;\theta_t]}{\langle x^{t_i}-1\rangle}$. \qed
 \end{proof}
 \textbf{Note:} From Corollary \ref{remII 1}, each $K_i$ is principally generated, i.e., $K_i=\langle f_i(x) \rangle$ where $f_i(x)$ is a right divisor of $x^{t_i}-1$.
 \begin{lemma} \label{lem 4}
     Let $\mathsf{C}$ be an SGQC code of block length $(t_1,t_2,\dots,t_l)$ and length $N$ with index $l$ and let $a(x)=(a_1(x)+va^{\prime}_1(x),a_2(x)+va^{\prime}_2(x),\dots,a_l(x)+va^{\prime}_l(x)) \in \mathsf{C}$. Then the sets \begin{align*}
         L&=\begin{Bmatrix}
             (h_(x),h_2(x),\dots,h_{l-1}(x):~ \text{a codeword} ~(h_1(x),h_2(x),\dots,h_{l-1}(x),h_l(x)) \in \mathsf{C}
         \end{Bmatrix}   \text{and}~\\
         M&=\begin{Bmatrix}
             (h_2(x),\dots,h_{l}(x): ~\text{a codeword} ~(h_1(x),h_2(x),\dots,h_{l-1}(x),h_l(x)) \in \mathsf{C},\\ ~\text{where}~ h_i(x)=h_i(x)+vh^{\prime}_i(x)
         \end{Bmatrix}
     \end{align*}
     are left submodules of $S_{t_1}\times S_{t_2}\times \dots \times S_{t_{l-1}}$ and  $S_{t_2}\times S_{t_3}\times \dots \times S_{t_l}$, respectively.
 \end{lemma}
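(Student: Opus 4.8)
The plan is to exhibit $L$ and $M$ as homomorphic images of the submodule $\mathsf{C}$ under the two natural truncation maps, so that their submodule structure follows at once from the fact that the image of a submodule under a module homomorphism is again a submodule.

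First I would introduce the projections
\begin{align*}
\pi_L &: \textbf{S}^{\prime} \to S_{t_1}\times\dots\times S_{t_{l-1}}, \quad (h_1(x),\dots,h_l(x)) \mapsto (h_1(x),\dots,h_{l-1}(x)), \\
\pi_M &: \textbf{S}^{\prime} \to S_{t_2}\times\dots\times S_{t_l}, \quad (h_1(x),\dots,h_l(x)) \mapsto (h_2(x),\dots,h_l(x)),
\end{align*}
which delete the last and the first coordinate, respectively. Since both the addition on $\textbf{S}^{\prime}$ and the left action of $S[x;\theta_t]$ are defined componentwise (see Equation (\ref{eq 2})), each of $\pi_L$ and $\pi_M$ is a left $S[x;\theta_t]$-module homomorphism, and by the very definition of the sets one has $L=\pi_L(\mathsf{C})$ and $M=\pi_M(\mathsf{C})$.

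Then I would verify the submodule axioms directly, exactly in the style of Lemma \ref{lem 3}. For closure under addition, pick $(h_1(x),\dots,h_{l-1}(x))$ and $(g_1(x),\dots,g_{l-1}(x))$ in $L$; by definition they lift to codewords $(h_1(x),\dots,h_{l-1}(x),h_l(x))$ and $(g_1(x),\dots,g_{l-1}(x),g_l(x))$ in $\mathsf{C}$. As $\mathsf{C}$ is a left $S[x;\theta_t]$-submodule of $\textbf{S}^{\prime}$, their sum and any left multiple $s(x)(h_1(x),\dots,h_l(x))$ with $s(x)\in S[x;\theta_t]$ again lie in $\mathsf{C}$; applying $\pi_L$ shows that $(h_1(x)+g_1(x),\dots,h_{l-1}(x)+g_{l-1}(x))$ and $(s(x)h_1(x),\dots,s(x)h_{l-1}(x))$ belong to $L$. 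Non-emptiness is immediate, since $\textbf{0}\in\mathsf{C}$ yields the zero tuple in $L$. The argument for $M$ is identical after replacing $\pi_L$ by $\pi_M$ and suppressing the first coordinate rather than the last.

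I do not expect any genuine obstacle here: the essential content is the componentwise nature of the operations on $\textbf{S}^{\prime}$, which is precisely what makes the truncations honest $S[x;\theta_t]$-module maps, and the remainder is the routine verification already carried out for the sets $K_i$ in Lemma \ref{lem 3}. The only point requiring a little care is the bookkeeping---keeping straight which coordinate is suppressed in each of $L$ and $M$, and noting that the lifting codeword exists by the definition of these sets---but nothing deeper is involved.
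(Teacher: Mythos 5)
Your proof is correct and follows essentially the same route as the paper, which simply notes that the argument is ``almost similar to the proof of Lemma \ref{lem 3}'': lift elements of $L$ (resp.\ $M$) to codewords of $\mathsf{C}$, use that $\mathsf{C}$ is a left $S[x;\theta_t]$-submodule with componentwise operations, and project back. Your additional framing of $L$ and $M$ as images of $\mathsf{C}$ under truncation homomorphisms is a clean way to package the same verification.
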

 \begin{proof}
     The proof is almost similar to the proof of Lemma \ref{lem 3}.\qed
 \end{proof}
 Using the above-defined notation, we give the set of $\rho$-generator polynomials of the SGQC code in the following theorem.
 \begin{theorem} \label{thm 16}
      Let $\mathsf{C}$ be an SGQC code of block length $(t_1, t_2,\dots, t_l)$ and length $N$ with index $l$. Then \begin{align*}
          \mathsf{C}=&\left\langle \begin{matrix} ((1-v)f_1(x)+vf^{\prime}_1(x),0,\dots,0),(p_{21}(x),(1-v)f_2(x)+vf^{\prime}_2(x),0,\dots,0),(p_{31}(x),p_{32}(x),\\(1-v)f_3(x)+vf^{\prime}_3(x),0,\dots,0),0,\dots,(p_{l1}(x),p_{l2}(x),\dots,p_{l(l-1)}(x),(1-v)f_{l}(x)+vf^{\prime}_{l}(x))
          \end{matrix}\right\rangle,
           \end{align*}
     where $(1-v)f_i(x)+vf^{\prime}_i(x)$ is a right divisor of $x^{t_i}-1$ $\text{ for all } j=1,2,\dots,l$.
 \end{theorem}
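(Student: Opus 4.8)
The plan is to exhibit the $l$ generators listed in the statement explicitly—one carrying a nonzero pivot in each successive coordinate—and then to verify that they generate $\mathsf{C}$ by reducing an arbitrary codeword coordinate-by-coordinate from the last position downward. First I would invoke the submodules $K_i$ introduced before Lemma \ref{lem 3}. By Lemma \ref{lem 3} each $K_i$ is a left submodule of $S[x;\theta_t]/\langle x^{t_i}-1\rangle$, that is, a skew cyclic code of length $t_i$ over $S$; Corollary \ref{remII 1} and Theorem \ref{remII 2} then show that $K_i$ is generated by a single right divisor of $x^{t_i}-1$ of the form $(1-v)f_i(x)+vf^{\prime}_i(x)$, with $f_i(x)$ and $f^{\prime}_i(x)$ right divisors of $x^{t_i}-1$ over $F_q$.

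With this generator fixed, the definition of $K_i$ furnishes a codeword
\[
 g_i(x)=(p_{i1}(x),\dots,p_{i(i-1)}(x),(1-v)f_i(x)+vf^{\prime}_i(x),0,\dots,0)\in\mathsf{C},
\]
whose $i$-th coordinate is that generator and whose coordinates $i+1,\dots,l$ all vanish. These $g_1(x),\dots,g_l(x)$ are exactly the generators displayed in the theorem, and since each lies in $\mathsf{C}$ the inclusion $\langle g_1,\dots,g_l\rangle\subseteq\mathsf{C}$ is immediate.

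For the reverse inclusion I would take an arbitrary $a(x)=(a_1(x),\dots,a_l(x))\in\mathsf{C}$ and clear its coordinates from the right. Because $K_l$ carries no vanishing constraint, $a_l(x)\in K_l$, so $a_l(x)=s_l(x)\bigl((1-v)f_l(x)+vf^{\prime}_l(x)\bigr)$ for some $s_l(x)\in S[x;\theta_t]$, and $a(x)-s_l(x)g_l(x)$ is a codeword with zero $l$-th coordinate. Its $(l-1)$-th coordinate now lies in $K_{l-1}$, so subtracting an appropriate $S[x;\theta_t]$-multiple of $g_{l-1}(x)$ annihilates it; since $g_{l-1}(x)$ already has a zero in coordinate $l$, this does not disturb the coordinate just cleared. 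Iterating the reduction down to coordinate $1$ expresses $a(x)=\sum_{i=1}^{l}s_i(x)g_i(x)$, so $a(x)\in\langle g_1,\dots,g_l\rangle$, and the two inclusions yield the claimed equality.

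I expect the main obstacle to be making this descent invariant precise: at stage $i$ the reduction must never reintroduce a nonzero entry into the coordinates $i+1,\dots,l$ already set to zero. This is guaranteed exactly by the triangular support pattern—each $g_i(x)$ is supported only on coordinates $1,\dots,i$—and it is the one point where the special shape of the generators is indispensable, so the invariant should be checked at every step. The only other delicate passage is the deduction, via Corollary \ref{remII 1} and Theorem \ref{remII 2}, that the principal generator of $K_i$ can be taken as a right divisor of $x^{t_i}-1$ of the stated form $(1-v)f_i(x)+vf^{\prime}_i(x)$, since it is this divisor property that appears in the conclusion; everything else is routine index bookkeeping.
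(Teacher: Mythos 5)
Your proposal is correct and follows essentially the same route as the paper: both rest on the sets $K_i$ being principally generated left submodules (via Lemma \ref{lem 3} and Corollary \ref{remII 1}), the choice of a codeword in $\mathsf{C}$ realizing each generator with trailing zeros, and the elimination of coordinates of an arbitrary codeword from the last position downward. The only cosmetic difference is that the paper packages the descent as an induction on the index $l$ using the truncated module $L$ of Lemma \ref{lem 4}, whereas you unroll that induction into an explicit coordinate-by-coordinate reduction with the triangular-support invariant stated outright.
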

 \begin{proof}
     We will prove it inductively. Suppose the index $l=1$, then $\mathsf{C}=\langle f_{t_1}(x) \rangle =\langle (1-v)f_1(x)+vf^{\prime}_1(x)\rangle$, i.e., $\mathsf{C}$ is a skew cyclic code which is principally generated in $\dfrac{S[x;\theta_t]}{\langle x^{t_1}-1 \rangle}$. Assume that the statement holds for all $i<l$. Let $\mathsf{C}$ be an SGQC code of length $N$ with index $l$ and $a(x)=(a_1(x)+va^{\prime}_1(x),a_2(x)+va^{\prime}_2(x),\dots,a_l(x)+va^{\prime}_l(x))\in \mathsf{C}$, where $a_i(x)+va^{\prime}_i(x) \in \dfrac{S[x;\theta_t]}{\langle x^{t_i}-1\rangle}$. From the definition of set $K_l$, we have $a_l(x)+va^{\prime}_l(x)\in K_l=\langle (1-v)f_l(x)+vf^{\prime}_l(x)\rangle$ and $a_l(x)+va^{\prime}_l(x)=(q_l(x)+vq^{\prime}_l(x))((1-v)f_l(x)+vf^{\prime}_l(x))=q_{t_l}(x)f_{t_l}(x)$ and since $f_{t_l}(x) \in K_l$, there is a codeword $(p_{l1}(x),p_{l2}(x),\dots,p_{l(l-1)}(x),f_{t_l}(x))\in \mathsf{C}$. Thus, \begin{align*}
         a(x)&=\Bigl(a_1(x)+va^{\prime}_1(x),a_2(x)+va^{\prime}_2(x),\dots,q_{t_l}(x)f_{t_l}(x)\Bigl)\\
            &=q_{t_l}(x)\begin{pmatrix}
                (p_{l1}(x),p_{l2}(x),\dots,p_{l(l-1)}(x),f_{t_l}(x))+(a_1(x)+va^{\prime}_1(x)-q_{t_l}(x)p_{l1}(x),\\a_2(x)+va^{\prime}_2(x)-q_{t_l}(x)p_{l2}(x),\dots,a_{l-1}(x)+va^{\prime}_{l-1}(x)-q_{t_l}(x)p_{l(l-1)}(x),0)
            \end{pmatrix}.
     \end{align*}
        Since \begin{align*}
             &\begin{pmatrix}
                 q_{t_l}(x)(p_{l1}(x),p_{l2}(x),\dots,p_{l(l-1)}(x),f_{t_l}(x))
             \end{pmatrix}\in \mathsf{C}, and \\&
             \begin{pmatrix}
                 a_1(x)+va^{\prime}_1(x)-q_{t_l}(x)p_{l1}(x),a_2(x)+va^{\prime}_2(x)-q_{t_l}(x)p_{l2}(x), \dots,\\a_{l-1}(x)+va^{\prime}_{l-1}(x)-q_{t_l}(x)p_{l(l-1)}(x),0
             \end{pmatrix} \in \mathsf{C}.
        \end{align*}
     By Lemma \ref{lem 4}, $(a_1(x)+va^{\prime}_1(x)-q_{t_l}(x)p_{l1}(x),a_2(x)+va^{\prime}_2(x)-q_{t_l}(x)p_{l2}(x)$,$\dots,a_{l-1}(x)+va^{\prime}_{l-1}(x)-q_{t_l}(x)p_{l(l-1)}(x))\in L$. As $L$ is a left-submodule of $S_{t_1}\times S_{t_2}\times \dots \times S_{t_{l-1}}$ and by the inductive hypothesis, we have \begin{align*}
         L=\left\langle \begin{matrix}((1-v)f_1(x)+vf^{\prime}_1(x),0,\dots,0),(p_{21}(x),(1-v)f_2(x)+vf^{\prime}_2(x),0,\dots,0),(p_{31}(x),p_{32}(x),\\(1-v)f_3(x)+vf^{\prime}_3(x),0,\dots,0),0,\dots,(p_{(l-1)1}(x),p_{(l-1)2}(x),\dots,p_{(l-1)(l-1)}(x),f_{t_{(l-1)}}(x))
         \end{matrix}\right\rangle,
     \end{align*}
       where $f_{t_i}(x)$ is a right divisor of $x^{t_i}-1$, $\text{ for all } i=1,2,\dots,l-1$. Thus, \begin{center}
           $(a_1(x)+va^{\prime}_1(x)-q_{t_l}(x)p_{l1}(x),a_2(x)+va^{\prime}_2(x)-q_{t_l}(x)p_{l2}(x),\dots,a_{l-1}(x)+va^{\prime}_{l-1}(x)-q_{t_l}(x)p_{l(l-1)}(x))$\\
           $=$ $c_1(x)((1-v)f_1(x)+vf^{\prime}_1(x),0,\dots,0)+c_2(x)(p_{21}(x),(1-v)f_2(x)+vf^{\prime}_2(x),0,\dots,0)+\dots+c_{l-1}(x)(p_{(l-1)1}(x),p_{(l-1)2}(x),\dots,p_{(l-1)(l-2)}(x),f_{t_{l-1}}(x))$,
       \end{center}
           and \begin{align*}
               a(x)&=(a_1(x)+va^{\prime}_1(x),a_2(x)+va^{\prime}_2(x),\dots,q_{t_l}(x)f_{t_l}(x))\\
               =& \begin{pmatrix}
                   (q_{t_l}(x)(p_{l1}(x),p_{l2}(x),\dots,p_{l(l-1)}(x),f_{t_l}(x))+ (a_1(x)+va^{\prime}_1(x)-q_{t_l}(x)p_{l1}(x),\\a_2(x)+va^{\prime}_2(x)-q_{t_l}(x)p_{l2}(x), \dots,a_{l-1}(x)+va^{\prime}_{l-1}(x)-q_{t_l}(x)p_{l(l-1)}(x),0)
               \end{pmatrix} \\
               = &\begin{pmatrix}
                   (q_{t_l}(x)(p_{l1}(x),p_{l2}(x),\dots,p_{l(l-1)}(x),f_{t_l}(x))+ c_1(x)((1-v)f_1(x)+vf^{\prime}_1(x),0,\dots,0)\\+\dots+c_{l-1}(x)(p_{(l-1)1}(x),p_{(l-1)2}(x),\dots,p_{(l-1)(l-2)}(x),f_{t_{l-1}}(x))
               \end{pmatrix}.
           \end{align*}
    Therefore, \begin{align*}
        \mathsf{C}&=\left\langle \begin{matrix}
             ((1-v)f_1(x)+vf^{\prime}_1(x),0,\dots,0),(p_{21}(x),(1-v)f_2(x)+vf^{\prime}_2(x),0,\dots,0),(p_{31}(x),p_{32}(x),\\(1-v)f_3(x)+vf^{\prime}_3(x),0,\dots,0),0,\dots,(p_{l1}(x),p_{l2}(x),\dots,p_{l(l-1)}(x),(1-v)f_{l}(x)+vf^{\prime}_{l}(x))
         \end{matrix}\right\rangle,
    \end{align*}
     where $(1-v)f_i(x)+vf^{\prime}_i(x)$ is a right divisor of $x^{t_i}-1$, $\text{ for all } i=1,2,\dots,l$.\qed
 \end{proof}
 \begin{theorem}\label{thm 17}
      Let $\mathsf{C}$ be an SGQC code of block length $(t_1,t_2,\dots,t_l)$ and length $N=t_1+t_2+\dots+t_l$ with index $l$ given by  \begin{align*}
         \mathsf{C}&=\left\langle \begin{matrix}
              (f_{t_1}(x),0,\dots,0),(p_{21}(x),f_{t_2}(x),0,\dots,0),(p_{31}(x),p_{32}(x),\\
              f_{t_3}(x),0,\dots,0),\dots,(p_{l1}(x),p_{l2}(x),\dots,p_{l(l-1)}(x),(f_{t_l}(x))
          \end{matrix} \right\rangle,
     \end{align*}
     where $(1-v)f_i(x)+vf^{\prime}_i(x)=f_{t_i}(x)$ is a right divisor of $x^{t_i}-1$ $\text{ for all } i=1,2,\dots,l$. Then
     \begin{enumerate}
         \item  $\deg p_{ij}(x)<\deg f_{t_j}(x)$ ~ $\text{ for all } i=2,\dots,l,$ and $j=1,2,\dots {l-1} ~\text{with } i>j$.
         \item If $x^{t_i}-1=q_{t_i}(x)f_{t_i}(x)$, then $q_{t_i}(x)p_{(i)(i-1)}(x)\in $ $\langle f_{t_{i-1}}(x)\rangle$ and $q_{t_i}(x)p_{(i)(i-1)}(x)=s_{t_i}(x)f_{t_{i-1}}(x)$, $\text{ for all } i=2,3,\dots,l$.
     \end{enumerate}
 \end{theorem}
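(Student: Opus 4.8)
The plan is to treat the two assertions separately, since they call for different tools: the degree bound in part~(1) is obtained by a Euclidean-type reduction using the Right Division Algorithm, while the divisibility relation in part~(2) follows by exploiting the factorization $x^{t_i}-1=q_{t_i}(x)f_{t_i}(x)$ together with the principal generation of the modules $K_{i-1}$ recorded in the Note following Lemma~\ref{lem 3}. Throughout I will abbreviate the $i$-th listed generator by $g_i=(p_{i1}(x),\dots,p_{i(i-1)}(x),f_{t_i}(x),0,\dots,0)$.

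For part~(1), I would first note that each diagonal entry $f_{t_i}(x)$ is a right divisor of $x^{t_i}-1$ and may be taken monic, so its leading coefficient is a unit and the Right Division Algorithm recalled in Section~\ref{sec 2} applies with $f_{t_j}(x)$ as divisor. Fix a row $i$ and suppose an off-diagonal entry $p_{ij}(x)$ with $j<i$ has $\deg p_{ij}(x)\ge\deg f_{t_j}(x)$. Writing $p_{ij}(x)=\beta(x)f_{t_j}(x)+r_{ij}(x)$ with $\deg r_{ij}(x)<\deg f_{t_j}(x)$ or $r_{ij}(x)=0$, I replace $g_i$ by $g_i-\beta(x)g_j$. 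Because $g_j$ carries $f_{t_j}(x)$ in position $j$, zeros in every position $>j$, and nonzero entries only in positions $<j$, this row operation leaves the generated submodule unchanged, leaves the diagonal entry $f_{t_i}(x)$ untouched (position $i>j$ of $g_j$ is zero), and replaces $p_{ij}(x)$ by $r_{ij}(x)$ while altering only the entries in positions $1,\dots,j-1$. The decisive bookkeeping point is the order of reduction: performing these steps for $j=i-1,i-2,\dots,1$ ensures each reduction disturbs only columns strictly to the left of the one just cleared, so no previously established bound is undone. Iterating over $i=2,\dots,l$ yields a generating set of the stated shape with $\deg p_{ij}(x)<\deg f_{t_j}(x)$ for all admissible $i,j$.

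For part~(2), I would take $g_i$ and multiply it on the left by $q_{t_i}(x)$. Since $q_{t_i}(x)f_{t_i}(x)=x^{t_i}-1\equiv 0$ in $S_{t_i}$, the diagonal term is annihilated, so $q_{t_i}(x)g_i=(q_{t_i}(x)p_{i1}(x),\dots,q_{t_i}(x)p_{i(i-1)}(x),0,\dots,0)$ is a codeword of $\mathsf{C}$ whose components in positions $i,i+1,\dots,l$ all vanish. By the definition of $K_{i-1}$, its $(i-1)$-st component $q_{t_i}(x)p_{i(i-1)}(x)$ then lies in $K_{i-1}$, and by the Note following Lemma~\ref{lem 3} we have $K_{i-1}=\langle f_{t_{i-1}}(x)\rangle$. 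Hence $q_{t_i}(x)p_{i(i-1)}(x)\in\langle f_{t_{i-1}}(x)\rangle$, that is, $q_{t_i}(x)p_{i(i-1)}(x)=s_{t_i}(x)f_{t_{i-1}}(x)$ for some $s_{t_i}(x)$, which is exactly the claim for every $i=2,\dots,l$.

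I expect the main obstacle to be the careful ordering and bookkeeping in part~(1): one must verify that clearing column $j$ perturbs only columns $<j$ and never the diagonal or the already-reduced columns, which is precisely why the reduction must proceed from the near-diagonal column $j=i-1$ leftwards. Part~(2) is comparatively routine once one observes that left multiplication by $q_{t_i}(x)$ collapses the diagonal term into $x^{t_i}-1\equiv 0$ and deposits the resulting codeword into $K_{i-1}$.
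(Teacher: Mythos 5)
Your argument for part (2) is exactly the paper's: left-multiply the $i$-th generator by $q_{t_i}(x)$ so the diagonal entry collapses to $x^{t_i}-1\equiv 0$, place the resulting codeword's $(i-1)$-st component in $K_{i-1}=\langle f_{t_{i-1}}(x)\rangle$, and read off the factorization. For part (1) the paper merely remarks that it ``can be proven inductively'' and omits the details, so your explicit right-division row-reduction (working from column $j=i-1$ leftwards so that no cleared column is disturbed) supplies a correct argument the paper leaves out; the only point worth flagging is that applying the Right Division Algorithm requires the leading coefficient of $f_{t_j}(x)$ to be a unit, which your monicity assumption handles.
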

\begin{proof}
    The first part can be proven inductively. So, we left out the first part. For the second part, we observe that \begin{align*}
        &q_{t_i}(x)(p_{i1}(x),p_{i2}(x),\dots,p_{i(i-1)}(x),f_{t_i}(x),0,\dots,0)\\
        &=(q_{t_i}(x)p_{i1}(x),q_{t_i}(x)p_{i2}(x),\dots,q_{t_i}(x)p_{i(i-1)}(x),0,\dots,0).
    \end{align*} So, $q_{t_i}(x)p_{i(i-1)}(x)\in K_{i-1}=\langle f_{t_{i-1}}(x) \rangle$. Hence, $q_{t_i}(x)p_{i(i-1)}(x)=s_{t_i}(x)f_{t_{i-1}}(x)$  $\text{ for all } i=2,\dots,l$.  \qed
\end{proof}
In the following theorem, we use the properties of these generators to give the dimension and cardinality of these codes.
\begin{theorem}\label{thm 18}
    Let $\mathsf{C}$ be an SGQC code of block length $(t_1,t_2,\dots,t_l)$ and length $N=t_1+t_2+\dots+t_l$ with index $l$. If \begin{align*}
          \mathsf{C}=\left\langle \begin{matrix} (f_{t_1}(x),0,\dots,0),(p_{21}(x),f_{t_2}(x),0,\dots,0),(p_{31}(x),p_{32}(x),\\f_{t_3}(x),0,\dots,0),0,\dots,(p_{l1}(x),p_{l2}(x),\dots,p_{l(l-1)}(x),(f_{t_l}(x))\end{matrix}\right\rangle,
     \end{align*}
     where $(1-v)f_i(x)+vf^{\prime}_i(x)=f_{t_i}(x)$ is a right divisor of $x^{t_i}-1$, for all $i=1~\text{to}~l$. Then $\text{rank}(\mathsf{C})=\deg(q_{t_1}(x)) + \deg(q_{t_2}(x))+ ~\dots ~+ \deg(q_{t_l}(x))$ and $|\mathsf{C}|=q^{2\deg(q_{t_1}(x))}q^{2\deg(q_{t_2}(x))}\dots q^{2\deg(q_{t_l}(x))}$, with keeping same notation as in Theorem \ref{thm 17}.
\end{theorem}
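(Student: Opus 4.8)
The plan is to exhibit an explicit free $S$-basis of $\mathsf{C}$ of size $\sum_{i=1}^{l}\deg q_{t_i}(x)$; both claims then follow at once, since a free $S$-module of rank $R$ has exactly $R$ as its minimal number of generators (so $\text{rank}(\mathsf{C})=R$) and cardinality $|S|^{R}=(q^{2})^{R}$. Write $g_i$ for the $i$-th triangular generator of Theorem \ref{thm 17}, so that $g_i$ carries $f_{t_i}(x)$ in coordinate $i$, the polynomials $p_{ij}(x)$ in coordinates $j<i$, and $0$ in coordinates $>i$. I would take as candidate basis
\[
\mathcal{B}=\{\,x^{k}g_i : 1\le i\le l,\ 0\le k\le \deg q_{t_i}(x)-1\,\},
\]
whose cardinality is $\sum_{i=1}^{l}\deg q_{t_i}(x)$. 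It then remains to check that $\mathcal{B}$ is $S$-linearly independent and that it spans $\mathsf{C}$ over $S$.

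Independence would come from the staircase shape. Each $K_i=\langle f_{t_i}(x)\rangle$ is a skew cyclic code of length $t_i$, and since $x^{t_i}-1=q_{t_i}(x)f_{t_i}(x)$ with $f_{t_i}(x)$ a right divisor whose leading coefficient is a unit, the shifts $\{x^{k}f_{t_i}(x):0\le k<\deg q_{t_i}(x)\}$ have pairwise distinct degrees in $\{\deg f_{t_i},\dots,t_i-1\}$ and therefore form a free $S$-basis of $K_i$ (reduce by the Right Division Algorithm against the $S$-basis $1,x,\dots,x^{t_i-1}$ of $S_{t_i}$). Now suppose $\sum_{i,k}\alpha_{ik}\,x^{k}g_i=0$ with $\alpha_{ik}\in S$. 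Reading off coordinate $l$, only the terms with $i=l$ survive, so $\sum_k\alpha_{lk}\,x^{k}f_{t_l}(x)=0$, whence $\alpha_{lk}=0$ for all $k$ by freeness of $K_l$. Descending from coordinate $l$ to coordinate $1$ and repeating the argument kills all $\alpha_{ik}$ in turn, giving independence.

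For spanning, an arbitrary codeword is an $S[x;\theta_t]$-combination $\sum_i c_i(x)g_i$, and I would reduce from $i=l$ downward. Right-dividing $c_i(x)=u_i(x)q_{t_i}(x)+r_i(x)$ with $\deg r_i<\deg q_{t_i}$, the remainder term $r_i(x)g_i$ is already an $S$-combination of the elements $x^{k}g_i$ of $\mathcal{B}$, while the quotient term $u_i(x)q_{t_i}(x)g_i$ vanishes in coordinates $\ge i$ (because $q_{t_i}(x)f_{t_i}(x)=x^{t_i}-1\equiv 0\pmod{x^{t_i}-1}$ and $g_i$ is zero beyond coordinate $i$), hence is supported on the first $i-1$ coordinates. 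The crucial input is Theorem \ref{thm 17}(2): $q_{t_i}(x)p_{i(i-1)}(x)=s_{t_i}(x)f_{t_{i-1}}(x)\in\langle f_{t_{i-1}}(x)\rangle$, which, together with the recursive triangular structure of Theorems \ref{thm 16} and \ref{thm 17}, guarantees that $q_{t_i}(x)g_i$ lies in the subcode generated by $g_1,\dots,g_{i-1}$. Thus $u_i(x)q_{t_i}(x)g_i$ can be absorbed by modifying $c_1,\dots,c_{i-1}$, and iterating down to $i=1$ rewrites the codeword as an $S$-combination of $\mathcal{B}$.

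I expect this absorption step to be the main obstacle, since it is exactly where the noncommutative skew multiplication and the precise staircase relations must be controlled; the degree bound of Theorem \ref{thm 17}(1) and the divisibility of Theorem \ref{thm 17}(2) are what make the reduction terminate inside the span of the lower generators. Once it is in place, $\mathcal{B}$ is a free $S$-basis of $\mathsf{C}$, so $\text{rank}(\mathsf{C})=\sum_{i=1}^{l}\deg q_{t_i}(x)$ and $|\mathsf{C}|=|S|^{\sum_{i}\deg q_{t_i}}=q^{2\deg q_{t_1}}q^{2\deg q_{t_2}}\cdots q^{2\deg q_{t_l}}$, as claimed.
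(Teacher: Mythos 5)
The paper's own ``proof'' of this theorem is a single sentence deferring everything to induction on the index $l$ together with Theorem \ref{thm 17} and Lemma \ref{lem 4}; your argument is the fully unrolled form of that induction, so the underlying idea is the same, but you supply essentially all of the content the paper omits. Your explicit basis $\mathcal{B}=\{x^{k}g_i\}$, the coordinate-by-coordinate independence argument, and the right-division-plus-absorption spanning argument are the correct way to make the statement precise, and the counts $\mathrm{rank}(\mathsf{C})=\sum_i\deg q_{t_i}$ and $|\mathsf{C}|=|S|^{\sum_i\deg q_{t_i}}$ follow exactly as you say. Two points deserve attention. First, the freeness of $K_i=\langle f_{t_i}(x)\rangle$ and the count $|K_i|=q^{2\deg q_{t_i}}$ require the leading coefficient of $f_{t_i}(x)=(1-v)f_i(x)+vf_i^{\prime}(x)$ to be a unit, i.e.\ $\deg f_i=\deg f_i^{\prime}$; otherwise the leading coefficient is a zero divisor, the shifts $x^kf_{t_i}(x)$ are not an $S$-basis, and the cardinality is $q^{(t_i-\deg f_i)+(t_i-\deg f_i^{\prime})}$ rather than $q^{2\deg q_{t_i}}$. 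You flag this hypothesis explicitly, which the theorem statement (and the paper) does not; it is an implicit assumption of the result. Second, your absorption step invokes Theorem \ref{thm 17}(2), which only controls $q_{t_i}(x)p_{i(i-1)}(x)$ and says nothing about $q_{t_i}(x)p_{ij}(x)$ for $j<i-1$; you correctly identify this as the main obstacle. The clean way to close it is not via Theorem \ref{thm 17}(2) at all, but to observe that $q_{t_i}(x)g_i$ is a codeword of $\mathsf{C}$ vanishing in all coordinates $\ge i$, and that the inductive construction in the proof of Theorem \ref{thm 16} shows the submodule of such codewords is generated by $g_1,\dots,g_{i-1}$ (these generators themselves vanish in coordinates $\ge i$, so the identification via the projection of Lemma \ref{lem 4} is injective there). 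With that substitution your proof is complete and is, in substance, the argument the paper's one-line proof gestures at.
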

\begin{proof} The proof follows by applying the Principle of Mathematical Induction on index $l$ and using Theorem \ref{thm 17} and Lemma \ref{lem 4}.\qed
\end{proof}
\begin{example} Consider the polynomial factorization in $S[x:\theta]$ where $\theta$ is a Frobenius automorphism.
    We take the polynomial $x^8-1$ over $S=F_4+vF_4$. We have the following factorizations: \begin{align*}
        x^8-1=&(x^5 + (v + t^2)*x^4 + x^3 + (v + t)*x^2 + 1)*(x^3 + (v + t)*x^2 + 1) \\
        =&(x^5 + (v + t)*x^4 + x^3 + (v + t^2)*x^2 + 1)*(x^3 + (v + t^2)*x^2 + 1)\\
        =&(x^4 + v*x^3 + (t^2*v + 1)*x^2 + x + t^2)*(x^4 + v*x^3 + (t*v + 1)*x^2 + x + t)\\
        =&(x^5 + (t^2*v + 1)*x^4 + (v + 1)*x^3 + (t^2*v + t^2)*x^2 + v*x + t)*\\&(x^3 + (t*v + 1)*x^2 + v*x + t^2)\quad\text{ and so on}.
    \end{align*}
 Next, consider the factorization of $x^6-1$ over $S=F_9+vF_9$. We have\begin{align*}
        x^6-1=&(x^4 + (t^5*v + 2)*x^3 + (t*v + 1)*x + 2)*(x^2 + (t*v + 1)*x + 1)\\
        =&(x^3 + (t^2*v + 2)*x^2 + (t*v + t)*x + 2)*(x^3 + (t^2*v + 1)*x^2 + (t*v + t)*x + 1)\\
        =&(x^3 + (2*v + t)*x^2 + (t^2*v + t^5)*x + 2)*(x^3 + (v + t^7)*x^2 + (t^2*v + t^5)*x + 1)\\
        =&(x^3 + v*x^2 + 2*v*x + 2)*(x^3 + 2*v*x^2 + 2*v*x + 1) \quad\text{ and so on}.
    \end{align*}
\end{example}
One of the main motives in coding theory is obtaining codes with better parameters or better code rates. There is a well-known table of linear codes with best-known parameters on small finite fields \cite{Grassl}. That codes table has continuously been updated with new codes appearing in the literature by different researchers. \par
In Tables \ref{Table1}, \ref{Table2} and \ref{Table3}, we present the parameters of Gray images of $2$-generator skew generalized quasi-cyclic codes of index $2$ over $S$ where we consider $q=3$, $q=9$ and $q=4$, respectively. We have considered the Frobenius automorphism for each code. We write the coefficients of the generator polynomial in ascending order of the degree of the indeterminate; for example, the polynomial $f_1(x)=x^3+(v+t)x^2+x+(v+t)$ is represented by $(v+t)1(v+t)1$.
\begin{table}[H] \centering \begin{longtable}{|c|c|c|}\caption{\label{Table1}The $2$-generator SGQC codes over $F_3+vF_3$}\\
   \hline
    $t_1, t_2, N$ & Generator polynomials  & $\phi(\mathsf{C})$\\
    \hline
    $6,1,7$ & $f_{t_1}=(2v+2)2(v+1)1$, $p_{21}=(v+1)(v+2)1$, $f_{t_2}=1$ & $[14,4,7]*$\\
    \hline
    $6,2,8$ & $f_{t_1}=(2v+2)2(v+1)1$,  $p_{21}=(v+1)(v+2)1$, $f_{t_2}=11$ & $[16,4,7]$\\
    \hline
    $6,3,9$ & $f_{t_1}=(2v+2)2(v+1)1$, $p_{21}=(v+1)(v+2)1$, $f_{t_2}=111$ & $[18,4,7]$\\
    \hline
    $6,4,10$ & $f_{t_1}=(2v+2)2(v+1)1$, $p_{21}=(v+1)(v+2)1$, $f_{t_2}=1111$ & $[20,4,7]$\\
    \hline
    $6,9,15$ & $f_{t_1}=(2v+2)2(v+1)1$, $p_{21}=(v+1)(v+2)1$, $f_{t_2}=21021021$ & $[30,5,7]$\\
    \hline
      \end{longtable}
 \end{table}
 \begin{table}[H] \centering \begin{longtable}{|c|c|c|} \caption{\label{Table2}The $2$-generator SGQC codes over $F_9+vF_9$}\\
   \hline
    $t_1, t_2, N$ & Generator polynomials   & $\phi(\mathsf{C})$\\
    \hline
    $6,2,8$ & $f_{t_1}=1(v+2)(v+2)1$, $p_{21}=1(v+1)1$, $f_{t_2}=21$ & $[16,4,6]$\\
    \hline
    $6,1,7$ & $f_{t_1}=2(v+2)(2v+1)1$, $p_{21}=1(2v+2)1$, $f_{t_2}=1$ & $[14,4,6]$\\
    \hline
    $4,2,6$ & $f_{t_1}=t^52t1$, $p_{21}=t^7t1$, $f_{t_2}=t^61$ & $[12,2,8]$\\
    \hline
    $4,6,10$ & $f_{t_1}=t^52t1$, $p_{21}=t^7t1$, $f_{t_2}=t^61t^61t^61$ & $[20,2,8]$\\
    \hline
    \end{longtable}
    \end{table}
  \begin{table}[H] \centering  \begin{longtable}{|c|c|c|} \caption{\label{Table3}The $2$-generator SGQC codes over $F_4+vF_4$}\\
   \hline
    $t_1, t_2, N$ & Generator polynomials  & $\phi(\mathsf{C})$\\
    \hline
    $4, 8, 12$ & $f_{t_1}=101$, $~p_{21}=101$,
    $f_{t_2}=1010101$ & $[24,4,4]$ \\
    \hline
    $4,8,12$ & $f_{t_1}=t1t1$, $~p_{21}=t^21$, $f_{t_2}=11$ & $[24,8,6]$ \\
    \hline
    $4,1,5$ & $f_{t_1}=(v+t)1(v+t)1$, $~p_{21}=(v+t)(v+t)1$, $f_{t_2}=1$& $[10,2,8]**$\\
    \hline
    $8,4,12$ & $f_{t_1}=(v+t)1(v+t)1$, $p_{21}=(v+t)1(v+t)$, $f_{t_2}=1111$& $[24,6,7]$\\
    \hline
    $8,4,12$ & $f_{t_1}=(v+t)1(v+t)1$, $p_{21}=(v+t)(v+t)1$, $f_{t_2}=1$ &$[24,9,4]$\\
    \hline
    $8,1,9$& $f_{t_1}=(v+t)1(v+t)1$, $p_{21}=(v+t)(v+t)1$, $f_{t_2}=1$ &$[18,6,7]$\\
    \hline
    $4,2,6$ & $f_{t_1}=(v+t)1(v+t)1$ c,$p_{21}=(v+t)(v+t)1$, $f_{t_2}=11$ &$[12,2,8]*$\\
    \hline
    $6,1,7$ & $f_{t_1}=1001$, $p_{21}=111$, $f_{t_2}=1$ & $[14,4,4]$\\
    \hline
\end{longtable}
\end{table}
\begin{tablenotes}
      \small
      \item\begin{center} $**$ denotes the optimal code, and $*$ denotes the near-optimal code in the table.
      \end{center}
\end{tablenotes}

\section{Conclusion}
This work studies the structure of skew generalized quasi-cyclic codes over $S$ without any restriction on length. It derives $1$-generator and multi-generator polynomial codes along with their corresponding dimensions. Furthermore, it examines the $1$-generator idempotent polynomial over $F_q+vF_q$ and $S$, provides examples, and derives linear codes. Moreover, the study establishes a lower bound on the minimum Hamming distance for the $1$-generator skew generalized quasi-cyclic codes.\par
In the future, one may determine the minimum distances and generator polynomials of $\mathsf{C}^\perp$ in terms of the generator polynomial of $\mathsf{C}$ for the $\rho$-generator polynomial codes. Furthermore, the construction of quantum codes based on these codes is a promising area of investigation.
\par

\section*{Acknowledgements}
The first author is thankful to the University Grants Commission (UGC), Govt. of India, for financial support under Reference No.: $221610001198$. All authors are thankful to the Indian Institute of Technology Patna for providing the research facilities. 

\section{Declarations}
\textbf{Data Availability Statement}:
The authors declare that [the/all other] data supporting the findings of this study are available in this article. Any clarification may be requested from the corresponding author, provided it is essential. \\
\textbf{Competing interests}:
The authors declare that there is no conflict of interest for the publication of this manuscript.\\
\textbf{Use of AI tools declaration}:
The authors declare they did not use artificial intelligence (AI) tools to create this manuscript.

\end{document}